\documentclass[format=acmsmall, review=false]{acmart}
\pdfoutput=1
\usepackage{acm-ec-21}
\usepackage{booktabs} 
\usepackage[ruled]{algorithm2e} 

\SetAlFnt{\small}
\SetAlCapFnt{\small}
\SetAlCapNameFnt{\small}
\SetAlCapHSkip{0pt}
\IncMargin{-\parindent}

\usepackage{pgfplots}
\usepackage{tikz}
\pgfplotsset{compat=1.12}
\usetikzlibrary{calc}
\usetikzlibrary{patterns}

\setcitestyle{acmnumeric}



\usepackage{times}
\usepackage{soul}
\usepackage{url}
\usepackage{graphicx}
\usepackage{amsmath}
\usepackage{amsthm}
\usepackage{booktabs}
\urlstyle{same}
\usepackage{color}
\usepackage{mathrsfs}
\usepackage{verbatim}
\usepackage{amsfonts}
\usepackage{mathtools}
\mathtoolsset{showonlyrefs}
\usepackage{booktabs,microtype}
\usepackage{tikz}
\usetikzlibrary{arrows,shapes, calc, fit, positioning}
\usepackage{natbib}
\usepackage{graphicx}
\usepackage{xcolor}
\usepackage{comment}
\usepackage{enumitem}
\usepackage{tcolorbox}
\usepackage{xspace,bm}
\newtheorem{theorem}{Theorem}[section]
\newtheorem{definition}[theorem]{Definition}

\newtheorem{proposition}[theorem]{Proposition}
\newtheorem{corollary}[theorem]{Corollary}
\newtheorem{lemma}[theorem]{Lemma}

\newtheorem{example}[theorem]{Example}

\newcommand{\cLD}{\mathbf{cLD}} 
\newcommand{\sv}{\mathrm{sv}}

\DeclareMathOperator*{\argmin}{arg\,min}
\DeclareMathOperator{\supp}{supp}
\newcommand\restr[2]{{#1\mathord{|}_{#2}}}
\newcommand{\sfaithful}{subsidized-faithful\xspace} 
\newcommand{\smonotone}{subsidized-monotone\xspace} 
\newcounter{Bew1}
\newcounter{Bew2}

\begin{document}
\title{Fair and Truthful Mechanism with Limited Subsidy}
\author[H. Goko, A. Igarashi, Y. Kawase, K. Makino, H. Sumita, A. Tamura, Y. Yokoi, M. Yokoo]{Hiromichi Goko, Ayumi Igarashi, Yasushi Kawase, Kazuhisa Makino, Hanna Sumita, Akihisa Tamura, Yu Yokoi, Makoto Yokoo}

\begin{abstract}
The notion of \emph{envy-freeness} is a natural and intuitive fairness requirement in resource allocation. With indivisible goods, such fair allocations are unfortunately not guaranteed to exist. Classical works have avoided this issue by introducing an additional divisible resource, i.e., money, to subsidize envious agents. 
In this paper, we aim to design a truthful allocation mechanism of indivisible goods to achieve both fairness and efficiency criteria with a limited amount of subsidy. Following the work of Halpern and Shah,
our central question is as follows: to what extent do we need to rely on the power of money to accomplish these objectives? For general valuations, the impossibility theorem of combinatorial auction translates to our setting: even if an arbitrarily large amount of money is available for use, no mechanism can achieve truthfulness, envy-freeness, and utilitarian optimality simultaneously when agents have general monotone submodular valuations. By contrast, we show that, when agents have matroidal valuations, there is a truthful allocation mechanism that achieves envy-freeness and utilitarian optimality by subsidizing each agent with at most $1$, the maximum marginal contribution of each item for each agent. The design of the mechanism rests crucially on the underlying M-convexity of the Lorenz dominating allocations. 
\end{abstract}

\begin{titlepage}

\maketitle

\end{titlepage}

\section{Introduction}
Consider a group of employees with preferences over their shifts; some may prefer to work in the morning, whereas others may prefer to work in the afternoon. All employees are willing to work, but they may differ in the extent to which they like each time slot. How can shifts be scheduled such that the resulting allocation is fair among employees? This question falls under the realm of the fair division problem, whereby indivisible resources are distributed among heterogeneous participants. 

The notion of fairness that has been extensively studied in the literature is {\em envy-freeness} \cite{Foley}. It requires that no agent wants to swap their bundle with that of another agent. When the resource to be allocated is divisible, the classical result ensures the existence of an envy-free allocation \cite{Varian}; when the resource is indivisible, envy-freeness is not a reasonable goal. A relevant example is the case of one item and two agents: no matter how we allocate the single item, the agent who gets nothing envies the other. Hence, the only “fair” solution is to give nothing to both agents.

One way to circumvent this issue is monetary compensation. Money is a powerful tool used to incentivize people and to accomplish fairness desirably. In the preceding example, the employer may attempt to balance inequality, e.g., by compensating employees who are assigned to the night shifts. Another example is the governmental body that subsidizes health workers in rural and remote areas. 

In mechanism design with money, envy-freeness can indeed be achieved
by the well-known Vickrey–Clarke–Groves (VCG) auction mechanism
\cite{clarke,groves:econometrica:1973,vickrey:1961} in cases where
when each agent's valuation is superadditive \cite{papai:scw:2003}; in principles, this mechanism is guaranteed to be
envy-free, truthful, and utilitarian optimal if one allocates enough amount of money to participants\footnote{We will formalize this argument in Section \ref{sec:superadditive}.}
assuming each agent's valuation is superadditive. 
In several applications,
however, the resulting outcome of VCG may be unsatisfactory in the
following three respects. 
First, the social planner may have a limited amount of money that can be used to subsidize the participants; for example, employees are usually paid additional compensation up to some limit.
Second, the allocation itself only aims to
maximize the utilitarian social welfare, ignoring the requirement of fairness; in an extreme case, one agent who happens to have higher utilities may get all the resources, whereas others get nothing. 
Third, when some agent has a non-superadditive valuation, VCG fails to satisfy envy-freeness. 

In this paper, we study the allocation mechanisms of indivisible items with limited subsidies. 
Formally, we work in the setting of Halpern and Shah \cite{HalpernShah2019}. There, a set of indivisible items together with subsidies are to be distributed among agents who have quasi-linear preferences. The objective is to bound the amount of subsidies necessary to accomplish envy-freeness, assuming that the maximum value of whole items is at most the number $m$ of items for each agent.\footnote{Halpern and Shah \cite{HalpernShah2019} dealt with additive valuations and assumed that the maximum value of each single item is at most $1$; our works deals with valuations that are not necessarily additive, and assumes a more general condition--, i.e., that $v_i(M) \leq m$ for each agent $i$.} Although Halpern and Shah \cite{HalpernShah2019} and subsequent works \cite{Brustle2020,caragiannis2020computing,Aziz2020} are mostly concerned with fairness criteria, we take the mechanism-design perspective: in practice, agents may behave strategically rather than truthfully when reporting their preferences. The goal of this paper is to analyze the amount of subsidies required to accomplish the three basic desiderata of a mechanism: truthfulness, envy-freeness, and utilitarian optimality. 

\subsection{Our contributions}
We will first focus upon a class of valuations that exhibit \emph{substitutability}, which is an essential characteristic that is common in many practical situations. 
For example, employees would like to work in some time slots, but working all day long is not preferable because of overwork. To capture such a phenomenon, it is natural to consider the class of submodular valuations. 
Although the impossibility result of combinatorial auction immediately applies to these valuations, our question is whether there is any {well-structured} subclass of submodular valuations that guarantees a desired mechanism. 

A subclass of monotone submodular valuations that arises in several applications is that of matroidal valuations, i.e., submodular functions with dichotomous marginals. For example, suppose that employees are allocated to tasks of various types; they may either approve or disapprove of each task depending on their abilities and can perform certain combinations of tasks under hierarchical constraints (e.g., the number of tasks that can be assigned to each employee is determined per morning/afternoon or day/week/month). One can naturally express such situations by the matroidal valuations associated with laminar matroids (see, e.g., \cite{fife2017laminar}). 
Another application is when a social planner desires to allocate public housing to people in a way that is fair across different social/ethnic groups \cite{benabbou}; this situation can be modeled by binary assignment valuations, which belong to a class of matroidal valuations: each group's satisfaction is set to be the optimal value of assignments of items to group members. 

The class of matroidal valuations turns out to be fruitful in the standard setting of fair division without subsidy \cite{babaioff2020fair,benabbou2020,halpern2020fair}; particularly, they do admit an allocation rule that is truthful, approximately fair, and efficient. Babaioff et al.~\cite{babaioff2020fair} very recently designed such a mechanism, called the prioritized egalitarian (PE) mechanism. With ties broken according to a prefixed ordering over the agents, the mechanism returns a \emph{clean Lorenz dominating allocation}, i.e., an allocation whose valuation vector (weakly) Lorenz dominates those under the other allocations and whose bundles include no redundant items\footnote{Items that can be removed without decreasing the agents' valuations.}. 

Now, returning to our setting, is it possible to design a desired mechanism with a limited amount of subsidy? In Section \ref{sec:matroidal}, we show that, when agents have matroidal valuations, there is a polynomial-time implementable truthful mechanism that achieves both envy-freeness and utilitarian optimality, with a subsidy of at most $1$ for each agent. Note that a mere extension of the previously known mechanism \cite{halpern2020fair,babaioff2020fair} does not achieve these properties; informally, by distributing the commonly desirable good, some agents may be incentivized to pretend that they envy others although they do not.\footnote{We will formally discuss details in Example \ref{ex:lying}.} 

To prevent agents from misreporting, we design a polynomial-time implementable mechanism, the so-called \emph{subsidized egalitarian} (SE) mechanism, which resembles the classic VCG in the sense that it \emph{punishes} agents who may potentially decrease others' valuations. At a high-level, the mechanism hypothetically distributes $1$ dollar to each agent upfront and implements the auction over the set of {clean Lorenz dominating allocations} ($\cLD$). By contrast with the PE mechanism, the actual allocation can be taken \emph{arbitrarily} from these allocations; then, each agent who benefits from the allocation pays $1$ dollar back to the mechanism designer. 
Note that the total amount of subsidies required by the SE mechanism is at most $n-1$, which cannot be improved as the worst-case guarantee.\footnote{Consider one item and $n$ agents. If all agents desire the single item, subsidy $1$ must be given to every agent but the one who gets the item to achieve envy-freeness.} A further, perhaps surprising, remark is that, in the output of the SE mechanism, the utility of each agent \emph{does not} change depending on the choice of an allocation. 

Technically, we build an essential connection between discrete convex analysis and fair division. Indeed, our result crucially rests on the underlying structural property of clean Lorenz dominating allocations; when agents have matroidal valuations, the valuation vectors associated with the $\cLD$ enjoy matroidal M-convexity, allowing one to obtain a new clean Lorenz dominating allocation from another via natural exchange operations. 

In Section~\ref{sec:withoutfreedisposal}, we further discuss the setting without the free-disposal assumption. It is often assumed that the mechanism can throw away any part of the resource and may thus leave some items unallocated; however, there are several practical scenarios wherein this is not ideal. Examples include allocating shifts to medical workers and assigning tasks to employees. Unfortunately, we observe that, even when agents have binary additive valuations, no truthful and envy-free mechanism allocates all items and returns a Lorenz dominating allocation with each agent being subsidized by at most $1$. 
However, dropping the truthfulness requirement, we show that there is a polynomial time algorithm that accomplishes envy-freeness and utilitarian optimality while each agent is subsidized at most $1$ and all items are allocated to some agent for matroidal valuations. Of independent interest, we also prove in the Appendix that the resulting allocation of the algorithm satisfies an approximate fairness notion, called envy-freeness up to any good (EFX). 

If agents are broadly expressive(i.e., the family of agents' valuations satisfies the so-called \emph{convexity} condition\footnote{See Definition $1$ of \cite{holmstrom:econometrica:1979}.}), Groves mechanisms are known to be the unique family of mechanisms that satisfy truthfulness and utilitarian optimality \cite{holmstrom:econometrica:1979}. Hence, our hopes are centered on such mechanisms for a rich class of valuations. 
Although VCG fails to satisfy envy-freeness for monotone submodular valuations \cite{Feldman2012}, P\'{a}pai \cite{papai:scw:2003} showed that it is envy-free when agents have superadditive valuations--i.e., when agents' preferences do not exhibit substitutability. 
These results have immediate implications for our setting. In Section \ref{sec:superadditive}, we show that, for superadditive valuations, there is a truthful mechanism that achieves envy-freeness and utilitarian optimality, with each agent receiving a subsidy of at most $m$; furthermore, we show that the amount $m$ is necessary even when agents have additive valuations. In Section \ref{sec:general}, we observe that, even if an arbitrarily large amount of money is available for use, no mechanism can achieve truthfulness, envy-freeness, and utilitarian optimality simultaneously. 

Figure~\ref{fig:valuation} illustrates the relationship among classes of valuation functions.

\vspace{2em}
\begin{figure}[ht]
		\centering
		\begin{tikzpicture}[xscale=0.8, yscale=0.8, transform shape,every node/.style={minimum size=6mm, inner sep=1pt}]
				\draw[rotate=-5,fill=gray!10] (-2,0) ellipse (3.8cm and 2.5cm);
				\draw[gray,ultra thick,fill=gray, fill opacity=0.6] (1,-0.2) ellipse (2.6cm and 1.2cm);
				\draw[rotate=5] (1,0) ellipse (3.5cm and 2.2cm);

				\node[opacity=0.75] at (0,-0.5){\bf Binary Additive};
				\node at (0,1.5){\bf Additive};
				\node at (1.2,0.5){\Large \bf Matroidal};
				\node at (2.6,1.6){\bf Submodular};
				\node at (-3.8,1.6){\bf Superadditive};
				\end{tikzpicture}

\caption{Classes of valuation functions with which we deal in this paper. The SE mechanism applies to the class of matroidal valuations (dark gray area), whereas VCG applies to the class of superadditive valuations (light gray area). }
\label{fig:valuation}
\end{figure}
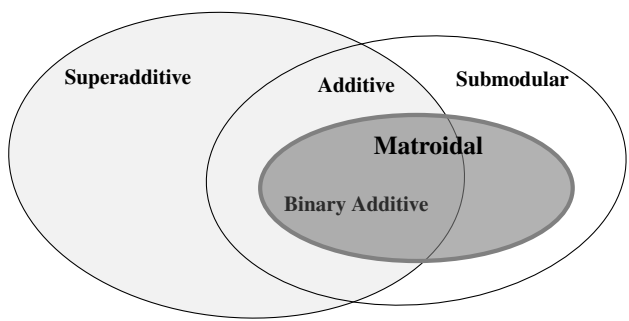

\subsection{Related work}
The idea of compensating an indivisible resource allocation with money has been prevalent in classical economics literature \cite{Alkan1991,Maskin1987,Klijn,moulin2004fair,SunYang2003,Svensson1983,Tadenuma1993}; much of the classical work has focused on the unit-demand case in which each agent is allocated at most one good. Examples include the famous rent-division problem of assigning rooms to several housemates and dividing the rent among them \cite{SuRentalHarmony}. It is known that, for sufficient amount of subsidies, an envy-free allocation exists \cite{Maskin1987} and can be computed in polynomial time \cite{Aragones,Klijn}. 

Most classical literature, however, has not considered a situation in which the number of items to be allocated exceeds the number of agents, in contrast to the rich body of recent literature on the multi-demand fair division problem. 
Halpern and Shah \cite{HalpernShah2019} recently extended the model to the multi-demand setting wherein each agent can be assigned to multiple items. 
Despite the existence of numerous related papers, \cite{HalpernShah2019} is the first work to study the asymptotic bounds on the amount of subsidy required to achieve envy-freeness. 
Halpern and Shah \cite{HalpernShah2019} showed that, for binary additive valuations, an allocation that maximizes the Nash welfare (MNW) can be made envy-free with a subsidy of at most $1$ for each agent. They further conjectured that, for general additive valuations in which the value of each item is at most $1$, giving at most $1$ to each agent is sufficient to eliminate envies. Brustle et al.~\cite{Brustle2020} affirmatively settled this conjecture by designing an algorithm that iteratively solves a maximum-matching instance; note that our work is the first to show that for valuations that are not necessarily additive, envy-freeness and completeness can be accomplished by giving each agent at most $1$ subsidy. 

Caragiannis and Ioannidis \cite{caragiannis2020computing} studied the computational complexity of approximating the minimum amount of subsidies required to achieve envy-freeness. 
Aziz \cite{Aziz2020} considered another fairness requirement, the so-called \emph{equitability}, in conjunction with envy-freeness and characterized an allocation that can be made both equitable and envy-free with money.

Closely related to the present study are the works of \cite{babaioff2020fair,benabbou2020}, who study the fair allocation of indivisible items with matroidal valuations. The PE mechanism of \cite{babaioff2020fair} returns an allocation that maximizes the Nash welfare and achieves envy-freeness up to any good (EFX) and utilitarian optimality. Benabbou et al.~\cite{benabbou2020} focused more upon the balance between efficiency and fairness. They showed that when agents have matroidal valuations, leximin allocations are equivalent to MNW allocations, which-- together with the result of Babaioff et al.~\cite{babaioff2020fair}--implies that MNW allocations are Lorenz dominating. 


The case where (possibly) multiple items can be allocated to each agent while the agents
pay some amount of money to the mechanism designer (or a special agent
called seller), is extensively studied in combinatorial auctions
\cite{cramton:2005}. A representative mechanism is the well-known VCG
mechanism, which is truthful and utilitarian optimal.
Envy-freeness is not a central issue in combinatorial auctions,
with a notable exception presented by P\'{a}pai \cite{papai:scw:2003}.
The results obtained by P\'{a}pai \cite{papai:scw:2003} can be applied
to cases wherein each agent receives a non-negative amount of money
(subsidy). 

Our paper is at the intersection of discrete convex analysis and economics.
Recent advances in discrete convex analysis have found a variety of applications
in economics, including exchange economies with indivisible goods
\citep{Murota:SIAM:2003,MT:dca:2003,SY:dca:2013},
inventory management
\citep{Zipkin:Inventory:2008,Huh:Inventory:2010}, auctions
\citep{MSY:auction:2013},
and two-sided matching \citep{murota:metr:2013,kty:2018}.\footnote{%
See the survey paper by Murota~\cite{murota:dca:2016}.}
As this long, but incomplete list would suggest,
techniques from this literature are useful for a wide variety
of problems. We add fair division problems with a limited subsidy to this list.

\section{Model}\label{sec:model}
We model fair division with a subsidy as follows.
For $k \in \mathbb{N}$, we denote $[k]=\{1,\ldots,k\}$.
Let $N=[n]$ be the set of given $n$ agents and let $M=[m]$ be the set of given $m$ indivisible goods.
Each agent $i$ has a \emph{valuation function} $v_i : 2^M \rightarrow \mathbb{R}_+$ with $v_i(\emptyset)=0$, where $\mathbb{R}_+$ is the set of non-negative reals.
For notational simplicity, we write $v_i(e)$ instead of $v_i(\{e\})$ for all $e \in M$.
In this paper, we assume that valuation functions are \emph{monotone}:
$v_i(X) \leq v_i(Y)$ for any $X \subseteq Y \subseteq M$.

\subsubsection*{Valuation functions}
We focus upon the following classes of valuation functions:
\begin{description}
\item[General:] We assume that the maximum valuation is bounded, i.e.,
  $v_i(M) \leq m$ holds for all $i \in N$;
    
\item[Superadditive:] A special case of the general valuations, 
  where for any $i \in N$, $X, Y \subset M$, s.t.
  $X \cap Y = \emptyset$, $v_i(X) + v_i(Y) \leq v_i(X\cup Y)$ holds; 
  
\item[Additive:] A special case of the superadditive valuations, where 
  $v_i(X)=\sum_{e\in X} v_i(e)$ holds for any $X \subseteq M$, $i\in N$;
  
\item[Binary additive:] A special case of the additive valuations, where 
  $v_i(e)\in \{0,1\}$ for any $e \in M$ and $i \in N$;

\item[Matroidal:] A super class of the binary additive valuations, where 
  (i) the marginal contribution $v_i(X\cup \{e\})-v_i(X)$ is either 0 or 1 for all $X \subsetneq M$ and $e\in M\setminus X$, and 
  (ii) $v_i$ is submodular, i.e., $v_i(X)+v_i(Y) \geq v_i(X\cup Y)+v_i(X\cap Y)$ holds for all $X, Y \subseteq M$.
\end{description}

We remark that a matroidal valuation function is a rank function of a matroid\footnotemark. 
For a matroidal valuation $v_i$, each set $X \subseteq M$ such that $v_i(X)=|X|$ is called an \emph{independent set}.

\footnotetext{Let $E$ be a finite set and let $r: 2^E \rightarrow \mathbb{Z}_+$. A set system $(E, r)$ is a \emph{matroid} if for all $X, Y \subseteq E$, 
(i) $X \subseteq Y \Rightarrow$ $r(X) \leq r(Y) \leq |Y|$, and 
(ii) $r(X)+r(Y) \geq r(X\cup Y)+r(X\cap Y)$. 
If $(E, r)$ is a matroid, then $r$ is called a \emph{rank} function.}

\subsubsection*{Allocations}
An allocation of goods is an ordered subpartition of $M$ into $n$ bundles. 
We denote an allocation by $A=(A_1, \ldots, A_n)$ such that $A_i \subseteq M$ for all $i \in N$ and $A_i \cap A_j = \emptyset$ for any $i \neq j$. 
In allocation $A$, agent $i$ receives a bundle $A_i$ of goods.
We will deal with two types of allocation: (1) a \emph{complete} allocation (that is, every good must be allocated to some agent), and (2) an incomplete allocation (that is, we can leave some goods unallocated).


We introduce notions of efficiency that we use in this paper.
We say that $A$ is \emph{Pareto optimal} if there exists no allocation $A'$ such that $v_i(A_i)\leq v_i(A'_i)$ for any $i\in N$ and $v_i(A_i) < v_i(A'_i)$ for some $i\in N$.
The {\em utilitarian social welfare} of an allocation $A$ is $\sum_{i\in N} v_i(A_i)$, and $A$ is a \emph{utilitarian optimal} allocation if it maximizes the utilitarian social welfare among all allocations. 
A refinement of utilitarian optimality is Lorenz dominance: given allocations $A$ and $B$, we say that $A$ \emph{Lorenz dominates} $B$ if, for every $k\in[n]$, the sum of the smallest $k$ values in $(v_1(A_1),\ldots, v_n(A_n))$ is at least as large as that of $(v_1(B_1),\ldots, v_n(B_n))$, i.e., if $v_{i_1}(A_{i_1})\leq \dots \leq v_{i_n}(A_{i_n})$ and $v_{j_1}(B_{j_1})\leq \dots \leq v_{j_n}(B_{j_n})$ (where $\{i_1,\dots,i_n\}=\{j_1,\dots,j_n\}=[n]$), then $\sum_{\ell=1}^k v_{i_\ell}(A_{i_\ell}) \geq \sum_{\ell=1}^k v_{j_\ell}(B_{j_\ell})$ holds for each $k$.
A \emph{Lorenz dominating allocation} is an allocation that Lorenz dominates every other allocation.
The following proposition holds from the definition of Lorenz dominance with $k=n$.
\begin{proposition}
Every Lorenz dominating allocation is utilitarian optimal. 
\end{proposition}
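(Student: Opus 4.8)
The plan is to read off the claim directly from the definition of Lorenz dominance by specializing the index $k$ to $n$. Fix an arbitrary Lorenz dominating allocation $A$, and let $B$ be any other allocation. By hypothesis $A$ Lorenz dominates $B$, so the defining family of partial-sum inequalities holds for every $k \in [n]$; I only need the single instance $k=n$.

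The key observation is that the sum of the smallest $n$ entries of a length-$n$ vector is simply the sum of all its entries, and a sum is invariant under reordering. Concretely, writing $v_{i_1}(A_{i_1})\le\dots\le v_{i_n}(A_{i_n})$ and $v_{j_1}(B_{j_1})\le\dots\le v_{j_n}(B_{j_n})$ for the sorted valuation vectors, the case $k=n$ of Lorenz dominance states that
\[
  \sum_{\ell=1}^n v_{i_\ell}(A_{i_\ell}) \;\ge\; \sum_{\ell=1}^n v_{j_\ell}(B_{j_\ell}).
\]
Since $\{i_1,\dots,i_n\}=\{j_1,\dots,j_n\}=[n]$, both sides are just reindexings of sums over all agents, so this is exactly $\sum_{i\in N} v_i(A_i) \ge \sum_{i\in N} v_i(B_i)$, i.e.\ the utilitarian social welfare of $A$ is at least that of $B$.

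Because $B$ was arbitrary, $A$ maximizes the utilitarian social welfare over all allocations, which is precisely the definition of utilitarian optimality. I do not anticipate any real obstacle here: the statement is an immediate corollary of the definitions, and the only point worth flagging is the elementary remark that the top partial sum (at $k=n$) recovers the full total, independent of the sorting used in the definition of Lorenz dominance.
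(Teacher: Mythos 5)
Your proof is correct and follows exactly the paper's own reasoning: the paper justifies this proposition with the single remark that it ``holds from the definition of Lorenz dominance with $k=n$,'' which is precisely your argument that the top partial sum recovers the total utilitarian welfare regardless of sorting. Nothing is missing.
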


Lorenz dominance is also an egalitarian fairness notion in the sense that the least happy agent becomes happier to the greatest extent possible.
Another allocation that often achieves the sweet spot of efficiency and fairness is a {maximum Nash welfare (MNW)} allocation \cite{CKM+16a,babaioff2020fair,benabbou2020}. 
We say that $A$ is a \emph{maximum Nash welfare (MNW)} allocation if it maximizes the number of agents receiving positive utility and, subject to that, maximizes the product of the positive utilities, i.e.,  $\prod_{i\in N:\, v_i(A_i)>0} v_i(A_i)$.

It is known that for matroidal valuations, MNW allocations coincide with Lorenz dominating allocations\footnote{More generally, the set of Lorenz dominating allocations is equivalent to that minimizing a symmetric strictly convex function~\cite{FM19,benabbou2020}.} \cite{benabbou2020,babaioff2020fair}.
%
%
Note that a Lorenz dominating allocation always exists for matroidal valuation functions, whereas it may not exist in general.

To find efficient allocations, it is often necessary to avoid redundancy in allocations.
An allocation $A$ is called {\em clean} if $v_i(A_i\setminus\{e\})<v_i(A_i)$ for any $i\in N$ and $e\in A_i$. 
Note that any allocation can be transformed into a clean one without changing valuations by removing items of zero marginal gain from respective agents.
For matroidal valuations, $A$ is clean if and only if $v_i(A_i)=|A_i|$ for each $i \in [n]$ (see also~\cite{benabbou2020}); thus, we see that for matroidal valuations, an allocation $A$ is clean Lorenz dominating if and only if for every clean allocation $B$, the total size of the smallest $k$ bundles in $A$ is at least as large as that of $B$ for each $k\in[n]$.
This fact will be used in Section~\ref{sec:matroidal}.

\subsubsection*{Fairness with a subsidy}
Our goal is to achieve an envy-free allocation of indivisible goods using a limited amount of \emph{subsidy}, which is an additional divisible good. 
We denote by $p=(p_1, \ldots, p_n)\in \mathbb{R}_+^n$ a subsidy vector, whose $i$th entry $p_i$ is the amount of subsidy received by agent $i$. 
For allocation $A$ and a subsidy vector $p$, we call $(A,p)$ an allocation with a subsidy; we assume that each agent has a standard quasi-linear utility, i.e., the utility of agent $i$, who obtains a bundle $X$ and subsidy $p_i$, is equal to: $v_i(X) + p_i$. 
The envy-freeness for an allocation with a subsidy is defined as follows:
\begin{definition}
An allocation with a subsidy $(A, p)$ is \emph{envy-free} if $v_i(A_i)+p_i \geq v_i(A_j)+p_j$ for all agents $i,j\in N$.
\end{definition}

An allocation $A$ is called {\em envy-freeable} if there exists a subsidy vector $p$ such that $(A, p)$ is envy-free.
Halpern and Shah \cite{halpern2020fair} characterized envy-freeable allocations using envy graphs defined as follows.
For an allocation $A$, its envy graph $G_A$ is the complete weighted directed graph whose node set is the agent set $N$; for each $i,j\in N$, the arc $(i,j)$ has weight $w(i,j)=v_i(A_j)-v_i(A_i)$,
which represents the amount of envy of $i$ towards $j$. 
This value can be negative if $i$ prefers their bundle to $j$'s bundle. 
A {\em walk} $Q$ in $G_A$ is a sequence of nodes $(i_1,i_2,\dots,i_k)$, and its weight is defined as
$w(Q)=\sum_{t=1}^{k-1}w(i_t,i_{t+1})$. 
A walk is a {\em path} if all nodes are distinct, and a {\em cycle}
if $i_1,i_2,\dots,i_{k-1}$ are all distinct and $i_1=i_k$.
The following theorem combines Theorems~1 and 2 of \cite{halpern2020fair}.
\begin{theorem}[Halpern and Shah \cite{halpern2020fair}]\label{thm:envy-graph}
For any allocation $A=(A_1,\dots,A_n)$ and any $q\in \mathbb{R}_+$, the following two are equivalent:
\begin{itemize}
    \item[(a)] $A$ is envy-freeable with a subsidy of at most $q$ for each agent.
    \item[(b)] $G_A$ has neither a positive-weight cycle nor a path with a weight larger than $q$.
\end{itemize}
When (b) holds, if we set $p_i$ as the maximum weight of any path starting at $i$ in $G_A$ for each $i\in N$, then $(A,p)$ is envy-free. 
\end{theorem}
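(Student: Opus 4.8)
The plan is to recast envy-freeness of $(A,p)$ as a system of difference constraints on the subsidies and then invoke the classical correspondence between feasible potentials and the absence of positive-weight cycles. By definition, $(A,p)$ is envy-free exactly when $v_i(A_i)+p_i \ge v_i(A_j)+p_j$ for all $i,j\in N$, which rearranges to $p_i-p_j \ge v_i(A_j)-v_i(A_i)=w(i,j)$. Thus the whole statement is about the existence of a vector $p\in\mathbb{R}_+^n$ with $p_i\le q$ satisfying $p_i-p_j\ge w(i,j)$ for every arc $(i,j)$ of $G_A$.

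For the direction (a)$\Rightarrow$(b), I would take such a $p$ and argue by telescoping. Summing $w(i_t,i_{t+1})\le p_{i_t}-p_{i_{t+1}}$ along a cycle makes the right-hand side collapse to $0$, forcing $w(C)\le 0$, so there is no positive-weight cycle. Summing along a path $P=(i_1,\dots,i_k)$ collapses the right-hand side to $p_{i_1}-p_{i_k}$; since $p_{i_1}\le q$ and $p_{i_k}\ge 0$ (as $p$ is nonnegative), we get $w(P)\le q$, so no path has weight exceeding $q$. Note how both halves of (b) come out of the same telescoping, with nonnegativity of $p$ used only for the path bound.

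For the converse (b)$\Rightarrow$(a), which simultaneously establishes the final sentence, I would set $p_i$ to be the maximum weight of a path starting at $i$ (the trivial one-node path having weight $0$). The enabling observation is that, absent a positive-weight cycle, the maximum weight over \emph{walks} starting at $i$ coincides with the maximum over \emph{paths}: any walk repeating a vertex splits into a path together with cycles, and discarding the (nonpositive) cycles does not lower the weight. Hence $p_i$ is finite, satisfies $p_i\ge 0$ from the trivial path, and satisfies $p_i\le q$ because every path has weight at most $q$ by (b); so $p\in\mathbb{R}_+^n$ meets the size bound. To verify the constraints, prepend the arc $(i,j)$ to a best walk starting at $j$: this is a walk starting at $i$ of weight $w(i,j)+p_j$, and by the walk-equals-path equivalence $p_i\ge w(i,j)+p_j$, which is precisely the required envy-freeness inequality.

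The only delicate step is this walk-versus-path equivalence: it is what lets me define the potential over paths (to keep $p_i$ finite and bounded by $q$) while reasoning about the prepending operation over walks, where $i$ might already lie on the chosen path and naive prepending would violate distinctness. Everything else is routine telescoping, and I expect no further obstacles. An equivalent phrasing, if preferred, would add a dummy source with zero-weight arcs to every node and read off $p_i$ from a single longest-path computation, the nonexistence of a positive cycle being exactly what guarantees that computation is well defined.
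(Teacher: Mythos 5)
Your proof is correct. Note that the paper itself gives no argument for this statement: it imports it wholesale, attributing the cycle characterization to Theorem~1 and the path-weight subsidy bound to Theorem~2 of Halpern and Shah, so the relevant comparison is with that cited source rather than with anything in the paper. There, the direction corresponding to your (a)$\Rightarrow$(b) is routed through an intermediate characterization: envy-freeability is first shown equivalent to $A$ maximizing utilitarian welfare over all reassignments of its bundles to agents (summing the envy-freeness inequalities over a permutation so that the subsidies cancel), and a positive-weight cycle is then read off as a welfare-improving reassignment. Your telescoping of the difference constraints $p_i - p_j \ge w(i,j)$ along cycles and paths shortcuts that detour and, in the same stroke, delivers the upper bound $q$ on path weights, which in the original is handled separately. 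Your (b)$\Rightarrow$(a) direction coincides with theirs: set $p_i$ to the maximum path weight starting at $i$, and use the walk-versus-path equivalence (valid precisely because all cycles have nonpositive weight) to justify prepending the arc $(i,j)$ to an optimal path from $j$ even when that path passes through $i$ --- you correctly isolate this as the one delicate point. What your version buys is a self-contained, purely graph-theoretic proof via feasible potentials for a system of difference constraints; what the original's detour buys is the extra characterization that envy-freeable allocations are exactly those maximizing welfare across bundle permutations, a fact your potential argument does not surface but which the paper does not need here.
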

The equivalence of envy-freeability and the nonexistence of positive-weight cycles in $G_A$
is shown in Theorem~1 of Halpern and Shah~\cite{halpern2020fair}.
The relationship between the bound of subsidy and the maximum weights of paths in $G_A$ follows from their Theorem~2.

\subsubsection*{Mechanisms}
In each subsequent section, we assume that a valuation function of each agent is taken from some specified function class $V$. 
For example, in Section~\ref{sec:matroidal}, we let $V$ be the set of all matroidal functions on $M$.
A {\em valuation profile}, or just a {\em profile}, is a tuple $(v_1,\dots,v_n)\in V^N$ of the valuation functions of the all agents in $N$.
For resource allocation with a subsidy, a {\em mechanism} is a mapping of a valuation profile to an allocation with a subsidy.
A mechanism first asks each agent to report a valuation function and then outputs an allocation with subsidy on the basis of the reported valuations. We notice that the reported valuations may be different from the true ones. 

Some agents may have incentives to report a false valuation function to obtain a larger utility.
To prevent such manipulation, truthfulness is a standard requirement for mechanisms. 
A mechanism is \emph{truthful} if reporting the true valuation function maximizes the agent's utility, given the fixed reports of the other agents. A more precise definition is as follows:
for every agent $i$, every profile $(v_1,\dots,v_n)\in V^N$, and every $v'_i\in V$, 
if we denote by $(A, p)$ and $(A',p')$ the outputs of the mechanism for the profiles $(v_1,\dots,v_i,\dots,v_n)$ and $(v_1,\dots v'_i,\dots,v_n)$, respectively, then $v_i(A_i)+p_i\geq v_i(A'_i)+p'_i$.

We say that a mechanism satisfies property P if it outputs an allocation satisfying P.
For example, a mechanism satisfies envy-freeness if it outputs an envy-free outcome, and similarly for other properties such as MNW, completeness, and utilitarian optimality.

\section{Matroidal valuations}\label{sec:matroidal}
In previously explained applications such as shift scheduling, goods usually have substitute properties; therefore, we are interested in the setting with submodular valuation functions. For such a setting, can we design a mechanism that simultaneously achieves truthfulness, efficiency, and fairness with small amount of subsidies?

Generally, the impossibility result of the combinatorial auction applies to monotone submodular valuations \cite{Feldman2012}; we are, however, able to answer this question affirmatively for the class of matroid rank valuations, i.e., submodular functions with dichotomous marginals. By setting the domain $V$ of the valuation functions as matroidal functions, we can show that giving at most $1$ subsidy to each agent suffices to accomplish these goals. Note that such a mechanism has not been shown to exist even for binary additive valuations. Our main theorem in this section is stated as follows: 


%

\begin{theorem}\label{thm:matroidrank:truthful}
For matroidal valuations, there is a polynomial-time implementable mechanism that is truthful, utilitarian optimal, and envy-free with each agent receiving subsidy $0$ or $1$, and the the total subsidy being at most $n-1$.
\end{theorem}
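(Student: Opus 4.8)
The plan is to realize the \SE mechanism and to verify its four properties using one structural fact as the engine: for matroidal valuations the set of value vectors $\{(v_1(A_1),\dots,v_n(A_n)):A\in\cLD\}$ is M-convex, so any two of its members are joined by single-coordinate exchanges that stay inside $\cLD$. Two consequences are recorded first. Since each member of $\cLD$ Lorenz dominates every other, they all share one sorted value vector $\lambda$; and by the Proposition every $A\in\cLD$ is utilitarian optimal. The mechanism then computes one clean Lorenz dominating allocation $A$ (any one, by an arbitrary tie-break) and sets the subsidy of agent $i$ to $p_i=u_i-v_i(A_i)$, where $u_i:=\max_{A'\in\cLD}v_i(A'_i)$ is the largest value $i$ can secure inside $\cLD$. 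With this rule the realized payoff of $i$ is exactly $v_i(A_i)+p_i=u_i$, independent of the chosen $A$; this is the invariance noted informally in the introduction, and it is also the agent's truthful payoff. Note that this envy-eliminating subsidy is genuinely larger than the minimal path-based subsidy of Theorem~\ref{thm:envy-graph}: an agent whose whole-bundle envy already vanishes may still be owed money because a single transferable item would have raised its value.

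Three of the four requirements follow quickly. The subsidy lies in $\{0,1\}$ because each coordinate of the M-convex family ranges over at most two consecutive integers, a standard majorization property of the Lorenz-maximal elements of an M-convex set: were some agent able to attain both $a$ and $a+2$ across $\cLD$, the exchange property would supply a strictly more balanced member than $\lambda$, contradicting Lorenz-maximality; hence $0\le p_i=u_i-v_i(A_i)\le 1$. Utilitarian optimality is immediate from Lorenz dominance. For the total bound, pick an agent $i^\ast$ holding a largest bundle: its value equals $\max_j\lambda_j$, which no allocation in $\cLD$ can exceed, so $u_{i^\ast}=v_{i^\ast}(A_{i^\ast})$ and $p_{i^\ast}=0$; thus at most $n-1$ agents are subsidized, and the one-item, $n$-agent instance shows this is tight.

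Envy-freeness I would prove directly from the identity $u_j-v_j(A_j)=p_j$ together with the exchange property. After substitution, the envy-free inequality $v_i(A_i)+p_i\ge v_i(A_j)+p_j$ is equivalent to $u_i\ge v_i(A_j)+p_j$, so the task reduces to exhibiting, for each ordered pair $(i,j)$, a member of $\cLD$ in which $i$ attains value at least $v_i(A_j)+p_j$. The matroidal exchange operations supply such an allocation by letting $i$ absorb the items it finds useful while staying inside $\cLD$, which certifies $u_i\ge v_i(A_j)+p_j$. (As a sanity check, utilitarian optimality already rules out positive-weight cycles in $G_A$, so $A$ is envy-freeable; the content here is that the specific, invariance-inducing subsidy is itself envy-free.)

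The main obstacle is truthfulness, precisely the manipulation flagged in the introduction: since the subsidy is what restores fairness, a crude rule would reward an agent for fabricating envy. Writing the honest payoff as $u_i=\max_{A'\in\cLD(v)}v_i(A'_i)$, I must show that for every misreport $v'_i$ and every allocation $\widehat A\in\cLD(v_{-i},v'_i)$ the mechanism might output, the true payoff $v_i(\widehat A_i)+\widehat p_i$ does not exceed $u_i$, where $\widehat p_i\le 1$ is computed from the reported profile. The plan is to compare $\cLD(v)$ and $\cLD(v_{-i},v'_i)$ through their common M-convex structure while exploiting dichotomous marginals: a matroidal misreport can only conceal truly useful items or claim useless ones, so the number of genuinely valued items $i$ can extract under any report is at most what honesty already secures, and the single extra unit of subsidy cannot close the resulting gap. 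This exchange-based comparison is the technical heart of the argument. Polynomial-time implementability is comparatively routine: a clean Lorenz dominating allocation and each quantity $u_i=\max_{A'\in\cLD}v_i(A'_i)$ are computable by matroid-union/intersection routines, so both $A$ and $p$ are produced in polynomial time.
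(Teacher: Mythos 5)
There is a genuine gap, and it is located exactly where you acknowledge the difficulty lies. Your subsidy rule $p_i=u_i-v_i(A_i)$ with $u_i=\max_{A'\in\cLD}v_i(A'_i)$ is \emph{not} the paper's SE mechanism, and it is not truthful. The paper's mechanism gives subsidy $1$ to every agent satisfying $|A_i|=\min_{B\in\cLD}|B_i|$ \emph{and} $|A_i|<\max_{j\in N}|A_j|$, so by Proposition~\ref{prop:SE_uu} the realized utility of a non-largest-bundle agent is $\min_{B\in\cLD}|B_i|+1$; when agent $i$'s bundle size is invariant across $\cLD$ (spread $0$) but not the largest, this equals $\max_{B\in\cLD}|B_i|+1$, i.e., strictly \emph{more} than your $u_i$. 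That extra ``generous'' dollar, paid even to agents who currently envy no one, is precisely what Example~\ref{ex:lying} shows to be unavoidable: in profile $P_2$ (agent $1$ wants the single item, agent $2$ does not), your rule pays agent $2$ nothing ($u_2=0=v_2(A_2)$), but if agent $2$ fabricates interest in the item the reported profile becomes $P_1$, the mechanism may (tie-breaking is arbitrary) give the item to agent $1$, and your rule then pays agent $2$ a subsidy of $1$, raising her true utility from $0$ to $1$. So your mechanism is manipulable by exactly the ``fabricated envy'' deviation you flag, and your claim that ``the single extra unit of subsidy cannot close the resulting gap'' is false for your rule.

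Beyond the wrong subsidy rule, the truthfulness portion of your argument is a plan rather than a proof: comparing $\cLD(v)$ with $\cLD(v_{-i},v_i')$ ``through their common M-convex structure'' is where all the work lives. The paper does this by factoring truthfulness into two properties, \smonotone{} and \sfaithful{} (Lemma~\ref{lem:ss}), and then proving each for the SE mechanism (Lemmas~\ref{lem:smonotone} and~\ref{lem:sfaithful}); these in turn rest on three nontrivial structural lemmas about how $\cLD$ behaves under restriction of one agent's valuation (Lemmas~\ref{lem:min_monotone}, \ref{lem:sfaithful0}, and~\ref{lem:maximp}), each proved by careful exchange arguments on the M-convex set of size vectors. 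Your remaining pieces are broadly sound --- the $\{0,1\}$ bound via Proposition~\ref{prop:01}, the $n-1$ total via the largest-bundle agent, utilitarian optimality from Lorenz dominance, and polynomial-time computability via matroid-intersection/Frank--Murota routines match the paper --- and your envy-freeness reduction to exhibiting an allocation in $\cLD$ certifying $u_i\ge v_i(A_j)+p_j$ is a reasonable skeleton (the paper's Lemma~\ref{lem:SE_EF} carries it out in three cases). But as written, the mechanism itself must be repaired first, and the truthfulness argument must then be supplied essentially from scratch.
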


%
Before presenting a mechanism, let us illustrate how the problem can become tricky even when agents have binary additive valuations: The following simple example shows that we have to give some subsidy to agents who want nothing.  

\begin{example}\label{ex:lying}
Consider two agents $N=\{1,2\}$ and one item $M=\{e\}$ with each agent either wanting the item or not (i.e., valuation for the item is either $0$ or $1$). Suppose that there is a mechanism that is truthful, envy-free, and utilitarian optimal. Consider two profiles $P_1$ and $P_2$. 
In $P_1$, both agents desire the single item. In this case, the outcome must be such that one agent receives the item and the other receives nothing. Without loss of generality, we assume that agent $1$ receives the item. By envy-freeness, agent $2$ must obtain at least $1$ subsidy. 
In $P_2$, agent $1$ reports that she wants the item but agent $2$ does not; then, the item must be allocated to agent $1$ who desires the item by utilitarian optimality. 
Now, it appears that no subsidy is needed in $P_2$ because agents do not envy each other. However, it turns out that we \emph{do} have to subsidize the agent who wants nothing; otherwise, agent $2$ would benefit by misreporting that she wants the item. 
\end{example}

%

Note that, for a Lorenz dominating allocation, we can easily compute the amount of subsidy required to make it envy-free by Theorem~\ref{thm:envy-graph}; however, as we observed in Example~\ref{ex:lying}, 
the mechanism should account for an exponential number of profiles if it aims to compute the minimum amount of additional subsidies to achieve truthfulness.
Rather, the mechanism “generously” distributes subsidies.

Our mechanism, which we refer to as \emph{subsidized egalitarian} (SE), proceeds as follows. First, it arbitrarily chooses a clean Lorenz dominating allocation that coincides with a clean MNW and is thus guaranteed to exist under matroidal valuations~\cite{babaioff2020fair,benabbou2020}; then, it subsidizes agents with the following condition: the valuation of allocated bundle is (i) the same as the worst (clean) Lorenz dominating allocation and (ii) not the largest among the agents. The mechanism thus ensures that the utility of agent $i$ is equal to the valuation of the worst clean Lorenz dominating allocation plus $1$ if she is not the one who receives the largest bundle.

Recall that, for matroidal valuations, allocation $A$ is clean if and only if $v_i(A_i)=|A_i|$ for any $i\in N$. 
For a profile $P=(v_1,\dots,v_n)$, let $\cLD[P]$ be the set of clean Lorenz dominating allocations.
To ease notation, we often omit the argument $P$ if no confusion will arise. 
Formally, our mechanism is summarized as follows.



\begin{tcolorbox}[title=Subsidized Egalitarian, left=0mm]
\begin{enumerate}[label=\textbf{Step \arabic*.},leftmargin=*]
    \item Allocate items according to an arbitrarily chosen $A \in \cLD$.
    \item Give $1$ subsidy to each $i\in N$ if (i)~$|A_i|=\min_{B\in\cLD}|B_i|$ and (ii)~$|A_i|<\max_{j\in N}|A_j|$.
\end{enumerate}
\end{tcolorbox}
The mechanism returns a utilitarian optimal allocation according to the property of Lorenz dominating allocations. Clearly, the subsidy for each agent is $0$ or $1$. The total subsidy is at most $n-1$ since at least one agent (who receives $\max_{j\in N}|A_j|$ items) gets no subsidy.
Remarkably, we observe that the difference between the valuations of the best and the worst Lorenz dominating allocations is at most one for every agent (Proposition~\ref{prop:01}) and that 
the utility of each agent does not change with the choice of an allocation in Step~1 (Proposition~\ref{prop:SE_uu}).
Hence, the utility of each agent is at least the valuation of the best clean Lorenz dominating allocation.

Here, we note that the SE mechanism imposes the condition (ii)~$|A_i|<\max_{j\in N}|A_j|$ in Step 2 to avoid giving all agents subsidy $1$. In fact, a variant of the SE mechanism in which condition (ii)~is removed 
fulfills all properties required by Theorem~\ref{thm:matroidrank:truthful} except that the total subsidy is at most $n$, instead of $n-1$.
It is also worth noting that, without subsidy, simply picking an arbitrary allocation in $\cLD$ does not guarantee truthfulness \cite[Example 4]{babaioff2020fair}.

We will prove that the SE mechanism satisfies the desired properties in Theorem~\ref{thm:matroidrank:truthful} through the following steps. First, we will provide the structural properties of $\cLD$ in Section~\ref{subsec:structure} and prove that the SE mechanism is polynomial time implementable in Lemma~\ref{lem:SE_poly}.
We will further show that the mechanism is envy-free and truthful in Lemmas~\ref{lem:SE_EF} and~\ref{lem:SE_truthful}, respectively. Throughout, we assume that all agents have matroidal valuations. 

\subsection{Structure of clean Lorenz dominating allocations}\label{subsec:structure}
As a preparation for the proof of Theorem~\ref{thm:matroidrank:truthful}, we introduce some notations. For an allocation $A$, let $\sv(A)$ be a size vector $(|A_1|,\dots,|A_n|)$ and let $\sv^\uparrow(A)$ be a vector obtained from $\sv(A)$ by rearranging its components in increasing order.
Recall that a clean allocation $A$ is Lorenz dominating if and only if for each clean allocation $B$, it holds that $\sum_{i=1}^k\sv^\uparrow(A)_i\ge \sum_{i=1}^k\sv^\uparrow(B)_i$ for each $k\in [n]$. Note that $\sv^\uparrow(A)$ is unique across all $A\in\cLD[P]$ according to the definition of $\cLD[P]$.

For any finite set $E$ and any $i\in E$, a characteristic vector $\chi_{i}$ is an $E$-dimensional vector whose $i$-th entry is $1$ and whose other entries are all $0$.
For two vectors $x, y\in \mathbb{Z}^E$, we define $\supp^+(x-y)\coloneqq \{i\in E\mid x(i)>y(i)\}$ and $\supp^-(x-y)\coloneqq\{i\in E\mid x(i)<y(i)\}$.
For a valuation function $v_i$ and $X \subseteq M$, a set function $\restr{v_i}{X}$ defined as $\restr{v_i}{X}(Y)=v_i(X\cap Y)$ for all $Y\subseteq M$ is called a restriction of $v_i$ to $X$.

Recall that, for a matroidal valuation function $v_i$, a subset $X\subseteq M$ is called {\em independent} if $v_i(X)=|X|$. The family of independent sets of any matroidal function is known to satisfy the following {\em augmentation property}: if both $X$ and $Y$ are independent and $|X|<|Y|$, then there exists an item $e\in Y\setminus X$ such that $X\cup\{e\}$ is also independent. A maximal independent set is called a {\em base}; by the augmentation property, all bases have the same cardinality.
\smallskip

We first present the following lemma, shown in the proof of \cite[Lemma 17]{babaioff2020fair}, concerning an operation that moves an allocation closer to another allocation in the terms of size vectors.
Note that this operation can be interpreted as an augmenting path in the exchange graph of a matroid intersection (see, e.g., \cite{Schrijver2003} for details).
For an allocation $A$, we use $A_0$ to denote the set of unallocated items $M\setminus\bigcup_{i\in N}A_i$.
\begin{lemma}\label{lem:transfer}
Let $A$ and $B$ be two clean allocations, and let $i$ be an agent.
If $|A_i|>|B_i|$, there exists a sequence of clean allocations $C^0,C^1,\dots,C^r$ with the following properties:
\begin{enumerate}[label={\rm(\roman*)}]
\item $C^0=B$, $k^0=i$,
\item $e^{t}$ is an item such that $e^{t}\in A_{k^{t-1}}\setminus C^{t-1}_{k^{t-1}}$ and $C^{t-1}_{k^{t-1}}\cup\{e^t\}$ is independent for $v_{k^{t-1}}$ ($t=1,\dots,r$),\label{condition:item}
\item $k^t\in N\cup\{0\}$ is the index such that $e^t\in C^{t-1}_{k^t}$ ($t=1,\dots,r$),
\item $C^{t}$ is the allocation that is obtained from $C^{t-1}$ by transferring $e^{t}$ from $k^{t}$ to $k^{t-1}$ ($t=1,\dots,r$),
\item $|A_{k^r}|<|B_{k^r}|$.
\end{enumerate}
\end{lemma}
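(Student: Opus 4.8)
The plan is to build the sequence greedily, letting a ``focus'' move along a walk $k^0=i,k^1,\dots,k^r$ of agents. Starting from $C^0=B$, at step $t$ I would augment the current focus agent $k^{t-1}$: since both $A_{k^{t-1}}$ and $C^{t-1}_{k^{t-1}}$ are independent for $v_{k^{t-1}}$ (every $C^t$ will be clean, and for matroidal valuations clean means $v_j(C^t_j)=|C^t_j|$), the matroid augmentation property yields an item $e^t\in A_{k^{t-1}}\setminus C^{t-1}_{k^{t-1}}$ with $C^{t-1}_{k^{t-1}}\cup\{e^t\}$ independent, provided $|C^{t-1}_{k^{t-1}}|<|A_{k^{t-1}}|$. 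I then transfer $e^t$ from its current holder $k^t$ to $k^{t-1}$, obtaining $C^t$, and move the focus to $k^t$; I stop the first time the new loser $k^t$ satisfies $|A_{k^t}|<|B_{k^t}|$ and set $r=t$. Cleanliness of each $C^t$ is immediate: the gainer's bundle is independent by the choice of $e^t$, the loser's bundle loses one item and stays independent, and all other bundles are untouched.

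The crux of the argument is a telescoping size identity: for every $j$ and every $t$,
$|C^t_j|=|B_j|+\mathbf{1}[k^0=j]-\mathbf{1}[k^t=j]$.
This holds because each step raises the tail $k^{t-1}$ by one and lowers the head $k^t$ by one, and the occurrences of $j$ along the walk telescope, so only the two endpoints of the walk differ in size from $B$. This identity is exactly what guarantees that augmentation never fails before we stop. At step $t$, either $k^{t-1}=i$, in which case $|C^{t-1}_i|=|B_i|<|A_i|$ by hypothesis; or $k^{t-1}\neq i$ is the loser of the previous step, which did not trigger the stopping rule, so $|A_{k^{t-1}}|\ge|B_{k^{t-1}}|$ and the identity gives $|C^{t-1}_{k^{t-1}}|=|B_{k^{t-1}}|-1<|A_{k^{t-1}}|$. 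In both cases $|C^{t-1}_{k^{t-1}}|<|A_{k^{t-1}}|$, so the augmentation property supplies $e^t$ and condition \ref{condition:item} is met.

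For termination I would argue that the transferred items $e^1,\dots,e^r$ are distinct: once $e^t$ is moved to $k^{t-1}$ it sits in its $A$-correct bundle ($e^t\in A_{k^{t-1}}$, and items lie in at most one $A$-bundle), and it can never be transferred again, since a later transfer of the same item would force its holder to be simultaneously the gainer and the loser of that step, which are always distinct. Hence $r\le m$, and since augmentation never fails the walk must reach a loser satisfying (v) within $m$ steps. The step I expect to be the main obstacle is the interaction with the unallocated pool, i.e.\ the index $0$: when the augmenting item is drawn from $M\setminus\bigcup_j C^{t-1}_j$ we have $k^t=0$, and I must verify that reaching $0$ is a legitimate terminus, namely that it forces $|A_0|<|B_0|$. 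The telescoping identity extends verbatim to $j=0$, and in the intended regime---where $A$ and $B$ are clean Lorenz dominating, hence maximum-size clean allocations---reaching $0$ would make $C^t$ allocate strictly more items than the maximum, a contradiction; so the pool is never touched and the walk ends at a genuine agent. Pinning down exactly this maximality (or total-size) input, and checking it is what the application provides, is the delicate point of the proof.
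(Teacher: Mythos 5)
Your core construction coincides with the paper's own proof: build the walk greedily, and justify each augmentation by noting that the focus agent's current bundle is strictly smaller than its $A$-bundle (the paper argues exactly your two cases, since the focus is either $i$ itself or the most recent loser, whose bundle has size $|B_{k^{t-1}}|-1$ by your telescoping identity). Your termination argument (the transferred items are pairwise distinct, hence $r\le m$) differs only cosmetically from the paper's (the potential $\sum_{i\in N}|A_i\bigtriangleup C^t_i|$ strictly decreases); the two are interchangeable.

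The genuine gap is your treatment of the index $0$. You assert that the pool is never touched, invoking a maximality (Lorenz dominance / maximum total size) hypothesis on $A$ and $B$; but no such hypothesis appears in the lemma, which is stated for arbitrary clean allocations, explicitly allows $k^t\in N\cup\{0\}$ in condition (iii), and is followed by a remark that $k^r\in N$ is guaranteed only under the \emph{additional} assumption that $B$ is utilitarian optimal. This generality is actually needed downstream: Lemma~\ref{lem:M} applies the present lemma to the set $S_1$ of size vectors of \emph{all} clean allocations, not just the utilitarian optimal ones, so the walk must be allowed to pass through the pool. Concretely, take $N=\{1,2\}$, $M=\{a,b\}$, $v_1(X)=|X\cap\{a\}|$, $v_2(X)=|X\cap\{b\}|$, $A=(\{a\},\emptyset)$, $B=(\emptyset,\{b\})$, and $i=1$. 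The only admissible first step pulls $a$ from the pool, so $k^1=0$, and at that moment $|A_0|=|B_0|=1$: condition (v) fails, the pool \emph{is} touched, and the walk must continue through it, pulling $b$ out of agent $2$'s bundle and terminating at $k^2=2$ with $|A_2|=0<1=|B_2|$. So the clean way to close the gap is not to forbid the pool but to absorb it: treat index $0$ as an extra agent whose valuation is the free matroid (every set independent, $v_0(X)=|X|$). Then your telescoping identity and both termination arguments extend verbatim, and when the focus sits at $0$ with $|A_0|\ge|B_0|$, the identity gives $|C^{t-1}_0|=|B_0|-1<|A_0|$, so $A_0\setminus C^{t-1}_0\neq\emptyset$ by counting and any item in it may be moved into the pool with no independence condition to verify; the walk then terminates, in full generality, at an index of $N\cup\{0\}$ satisfying (v).
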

\begin{proof}
We can find the sequence $C^0,C^1,\dots,C^r$ by arbitrarily selecting the items $e^1,e^2,\dots,e^r$ satisfying condition \ref{condition:item} of Lemma~\ref{lem:transfer} iteratively.
In fact, if $|A_{k^{t-1}}|\ge |B_{k^{t-1}}|$, we have $|A_{k^{t-1}}|>|C^t_{k^{t-1}}|$ because $|B^t_{k^{t-1}}| > |C^t_{k^{t-1}}|$; hence, there exists $e^{t}$ such that
$e^{t}\in A_{k^{t-1}}\setminus C^{t-1}_{k^{t-1}}$ and $C^{t-1}_{k^{t-1}}\cup\{e^{t}\}$ is independent for $v_{k^{t-1}}$ by the matroid augmentation property.
This procedure terminates in a finite number of steps because $\sum_{i\in N}|A_i\bigtriangleup C_i^t|$ is strictly monotone decreasing with respect to $t$.
\end{proof}
Note that $\sv(C^t)=\sv(C^0)+\chi_{k^0}-\chi_{k^t}$ if $k^t\in N$; additionally, if allocation $B$ is utilitarian optimal, then $k^r$ must be in $N$.

A key structure of $\cLD$ is the \emph{M-convex} structure of size vectors. A non-empty set $S\subseteq\mathbb{Z}^E$ is said to be \emph{M-convex} if it satisfies the following (simultaneous) \emph{exchange property}:
\begin{description}
\item[(B-EXC)]
For any $x,y\in S$ and for any $i\in\supp^+(x-y)$, there exists some $j\in\supp^-(x-y)$ such that $x-\chi_i+\chi_j\in S$ and $y+\chi_i-\chi_j\in S$.
\end{description}
It is known that M-convex sets are also characterized in terms of the following (seemingly weaker but actually equivalent) exchange property~\cite{Murota1998}:
\begin{description}
\item[(B-EXC${}_+$)]
For any $x,y\in S$ and for any $i\in\supp^+(x-y)$, there exists some $j\in\supp^-(x-y)$ such that $y+\chi_i-\chi_j\in S$.
\end{description}

An M-convex set $S$ is said to be \emph{matroidal M-convex} if 
$|x_e-y_e|\le 1$ for any $x,y\in S$ and any $e\in E$.%
\footnote{In other words, an M-convex set is matroidal if it is obtained from some matroid on $E$ by translating 
the characteristic vectors of the bases by the same integral vector.}
Lemma~\ref{lem:transfer} implies that the set of size vectors of the clean allocations and the clean utilitarian optimal allocations satisfy (B-EXC${}_+$).
\begin{lemma}\label{lem:M}
The following sets are M-convex:
\begin{align}
&S_1=\bigl\{(|A_0|,|A_1|,\dots,|A_n|)\in\mathbb{Z}^{N\cup\{0\}}\mid \text{$A$ is clean allocation}\bigr\} \quad\text{and} \label{eq:M0}\\
&S_2=\bigl\{(|A_1|,\dots,|A_n|)\in\mathbb{Z}^{N}\mid \text{$A$ is a clean utilitarian optimal allocation}\bigr\}.\label{eq:M1}
\end{align}
\end{lemma}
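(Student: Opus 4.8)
The plan is to verify the exchange property directly and invoke the equivalence (B-EXC)\,$\Leftrightarrow$\,(B-EXC$_+$) recorded just above the statement, so that it suffices to check (B-EXC$_+$) for each of $S_1$ and $S_2$. In both cases the engine will be Lemma~\ref{lem:transfer}: given two clean allocations and an agent whose bundle is strictly larger under the first, the lemma produces a terminal clean allocation $C^r$ with $\sv(C^r)=\sv(B)+\chi_{k^0}-\chi_{k^r}$ and $|A_{k^r}|<|B_{k^r}|$. Reading this telescoped identity on the \emph{augmented} size vectors $(|A_0|,|A_1|,\dots,|A_n|)\in\mathbb{Z}^{N\cup\{0\}}$ shows that $C^r$ realizes exactly the target vector demanded by (B-EXC$_+$), with the exchange partner $j\coloneqq k^r$ automatically lying in $\supp^-(x-y)$ by property (v).

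For $S_1$, I would take two clean allocations $A,B$ with augmented size vectors $x,y$ and an index $i\in\supp^+(x-y)$. If $i\in N$, I apply Lemma~\ref{lem:transfer} to $A,B,i$ and set $j=k^r$; the terminal allocation $C^r$ is clean and satisfies $(|C^r_0|,\dots,|C^r_n|)=y+\chi_i-\chi_j\in S_1$, giving (B-EXC$_+$). The one index not covered by the lemma is $i=0$, so I treat it separately: here $|A_0|>|B_0|$ forces $\sum_{\ell\in N}|A_\ell|<\sum_{\ell\in N}|B_\ell|$ (the total $m$ is fixed), hence some $j\in N$ has $|A_j|<|B_j|$, i.e.\ $j\in\supp^-(x-y)$ and $|B_j|\ge 1$. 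Moving one item of $B_j$ into the unallocated pool keeps the allocation clean, since every subset of an independent set is independent, and produces the augmented size vector $y+\chi_0-\chi_j\in S_1$.

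For $S_2$ the argument is the same, except that now $A,B$ are clean \emph{utilitarian optimal} allocations and $x,y\in\mathbb{Z}^N$, so every $i\in\supp^+(x-y)$ already lies in $N$ and only the agent case arises. Applying Lemma~\ref{lem:transfer}, the crucial point is the remark following it: because $B$ is utilitarian optimal, $k^r$ must lie in $N$. Consequently the whole transfer stays within $N$, so $|C^r_0|=|B_0|$ and $C^r$ allocates exactly as many items as $B$; since for matroidal valuations the utilitarian welfare of a clean allocation equals the number of allocated items, $C^r$ is again utilitarian optimal, and $\sv(C^r)=y+\chi_i-\chi_{k^r}\in S_2$. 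The main obstacle I anticipate is precisely this bookkeeping of the pool coordinate: one must confirm that the vector $y+\chi_i-\chi_j$ produced by the transfer sequence genuinely lands in the intended set — handling the uncovered index $i=0$ by hand for $S_1$, and leaning on $k^r\in N$ to preserve optimality for $S_2$ — rather than in the invocation of Lemma~\ref{lem:transfer} itself, which supplies everything else.
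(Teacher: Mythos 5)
Your proposal is correct and follows essentially the same route as the paper: the paper derives Lemma~\ref{lem:M} directly from Lemma~\ref{lem:transfer} by observing that the transfer sequence witnesses (B-EXC${}_+$), which is equivalent to M-convexity by the cited result of Murota. You simply make explicit the details the paper leaves implicit — the separate treatment of the pool coordinate $i=0$ for $S_1$, and the use of the remark that $k^r\in N$ under utilitarian optimality to keep the exchange inside $S_2$ — and these details are handled correctly.
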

Note that, for each of the above M-convex sets $S_i$ for $i=1,2$,
the following problems are solvable in polynomial time via matroid intersection~\cite{Edmonds1970}:
\begin{description}
\item[(Initialization)] computing an element of $S_i$, and 
\item[(Membership)] deciding whether a given size vector is in $S_i$.
\end{description}
Also, for a given vector $x$ in $S_1$ or $S_2$, there is a polynomial time algorithm that finds an allocation whose size vector is equal to $x$; indeed, we can find such an allocation by computing a clean utilitarian optimal allocation for the profile $P'=(v_1',\dots,v_n')$ such that $v_i'(X)=\min\{v_i(X),\,x_i\}$ for each $i\in N$ and $X\subseteq N$.

Frank and Murota~\cite[Theorem 5.7]{FM19} proved that the set of Lorenz dominating elements\footnotemark of an M-convex set is a matroidal M-convex set. 
\footnotetext{For a given set of vectors, Lorentz dominating element is an element such that the smallest entry is as large as possible; within this, the next smallest entry is as large as possible; and so on.}
Furthermore, they showed that, in the matroidal M-convex set, a Lorenz dominating element that minimizes a linear function can be found in polynomial time if (Initialization) and (Membership) for the M-convex set can be solved in polynomial time.
By combining this with the fact that $S_2$ is a M-convex set, we obtain the following lemma.
\begin{lemma}\label{lem:Mdmin}
The set of size vectors corresponding to clean Lorenz dominating allocations $S^*\coloneqq\{\sv(A)\mid A\in\cLD\}$ is a matroidal M-convex set. 
Additionally, for a given weight $w\in\mathbb{R}^N$ a minimum-weight clean Lorenz dominating allocation $\argmin_{s\in S^*}\sum_{i\in N}w_is_i$ can be found in polynomial time.
\end{lemma}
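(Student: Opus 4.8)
The plan is to identify $S^*$ with the set of Lorenz dominating elements of the M-convex set $S_2$ from Lemma~\ref{lem:M}, and then to invoke the results of Frank and Murota~\cite{FM19} on such elements. Recall that every Lorenz dominating allocation is utilitarian optimal, so every clean Lorenz dominating allocation is in particular a clean utilitarian optimal allocation; hence $S^*\subseteq S_2$, and it makes sense to study $S^*$ inside $S_2$.

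First I would establish the set identity $S^*=\{s\in S_2\mid s\text{ is Lorenz dominating in }S_2\}$. The inclusion $\subseteq$ is direct: if $A\in\cLD$, then $\sv(A)\in S_2$, and since $A$ Lorenz dominates every clean allocation, $\sv(A)$ Lorenz dominates in particular every element of $S_2$, so it is a Lorenz dominating element of $S_2$. For the reverse inclusion I would use the fact, recalled earlier in the excerpt, that a clean Lorenz dominating allocation always exists under matroidal valuations. Fix such an allocation $A^*$ with size vector $s^*$; by the forward inclusion, $s^*$ is a Lorenz dominating element of $S_2$. Now take any Lorenz dominating element $s$ of $S_2$. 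Since $s$ and $s^*$ each Lorenz dominate the whole of $S_2$, they mutually dominate one another, which forces their sorted profiles to agree, i.e.\ $s^\uparrow=(s^*)^\uparrow$ (consistent with the remark that $\sv^\uparrow(A)$ is unique across $A\in\cLD$). Because Lorenz dominance between two vectors depends only on their sorted profiles and $s^*$ Lorenz dominates \emph{every} clean allocation, so does $s$. Choosing a clean utilitarian optimal allocation $A$ with $\sv(A)=s$ (which exists as $s\in S_2$), we conclude that $A$ is clean Lorenz dominating, so $s\in S^*$.

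With this identity in hand, the first assertion is immediate: Frank and Murota~\cite[Theorem 5.7]{FM19} prove that the set of Lorenz dominating elements of any M-convex set is matroidal M-convex, and $S_2$ is M-convex by Lemma~\ref{lem:M}, so $S^*$ is matroidal M-convex. For the algorithmic claim I would invoke their companion result: a Lorenz dominating element minimizing a linear function can be computed in polynomial time provided (Initialization) and (Membership) for the ambient M-convex set are solvable in polynomial time. Both oracles for $S_2$ run in polynomial time via matroid intersection, as noted after Lemma~\ref{lem:M}; hence $\argmin_{s\in S^*}\sum_{i\in N}w_is_i$ is computable in polynomial time, and the corresponding allocation is then recovered in polynomial time by the allocation-reconstruction procedure for vectors of $S_2$ described after Lemma~\ref{lem:M}.

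The main obstacle is the reverse inclusion of the set identity: a priori, a Lorenz dominating element of $S_2$ dominates only the clean \emph{utilitarian optimal} allocations, whereas membership in $\cLD$ demands domination over \emph{all} clean allocations. The crux is to leverage the guaranteed existence of a genuine clean Lorenz dominating allocation together with the fact that Lorenz dominance is a function of the sorted profile alone, so that matching the (unique) sorted profile of that allocation upgrades domination over $S_2$ to domination over every clean allocation. Everything else is a direct application of the cited M-convexity machinery.
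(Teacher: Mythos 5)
Your proof is correct and follows essentially the same route as the paper: both arguments identify $S^*$ with the set of Lorenz dominating elements of the M-convex set $S_2$ from Lemma~\ref{lem:M} and then invoke Frank and Murota's Theorem 5.7 together with its algorithmic companion (plus the allocation-reconstruction step noted after Lemma~\ref{lem:M}). The only difference is that you explicitly verify the identification of $S^*$ with the Lorenz dominating elements of $S_2$ (upgrading domination over clean utilitarian optimal allocations to domination over all clean allocations via the unique sorted profile), a step the paper treats as immediate.
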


Since $S^*$ is a matroidal M-convex set, the difference between values of the best and the worst clean Lorenz dominating allocations for each agent is at most one.
\begin{proposition}\label{prop:01}
\begin{align}
\max_{B\in\cLD}|B_i|-\min_{C\in\cLD}|C_i|\in\{0,1\} \quad \text{for any $i\in N$}. 
\end{align}
\end{proposition}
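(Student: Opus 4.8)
The plan is to read the statement off almost immediately from the matroidal M-convexity of $S^*$ that has already been established in Lemma~\ref{lem:Mdmin}; all the genuine structural work has been done there, so this proposition is essentially a one-line corollary obtained by unpacking the relevant definition.

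First I would recall that, by Lemma~\ref{lem:Mdmin}, the set $S^*=\{\sv(A)\mid A\in\cLD\}$ of size vectors of clean Lorenz dominating allocations is matroidal M-convex. By the very definition of \emph{matroidal} M-convexity, this means precisely that for any two vectors $x,y\in S^*$ and any coordinate $e\in N$ we have $|x_e-y_e|\le 1$. In other words, no two clean Lorenz dominating allocations can differ in any single agent's bundle size by more than one.

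Next I would fix an agent $i\in N$ and choose $x,y\in S^*$ realizing the extremes in coordinate $i$, i.e.\ $x_i=\max_{B\in\cLD}|B_i|$ and $y_i=\min_{C\in\cLD}|C_i|$; such vectors exist since $S^*$ is finite and non-empty. Then $\max_{B\in\cLD}|B_i|-\min_{C\in\cLD}|C_i|=x_i-y_i$. By the choice of the maximizer and minimizer we have $x_i\ge y_i$, so $x_i-y_i\ge 0$; and by the matroidal M-convexity bound $|x_i-y_i|\le 1$, so $x_i-y_i\le 1$. Since every component of an element of $S^*$ is an integer bundle size, $x_i-y_i$ is an integer lying in $[0,1]$, hence equal to $0$ or $1$, which is exactly the claim.

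The main point is that there is essentially no remaining obstacle here: the difficulty was entirely absorbed into proving that $S^*$ is \emph{matroidal} M-convex (Lemma~\ref{lem:Mdmin}, via the Frank--Murota theorem on Lorenz dominating elements of an M-convex set). Once that is in hand, the coordinatewise difference bound is definitional, and the only thing to check is the trivial fact that a nonnegative integer bounded by $1$ must be $0$ or $1$.
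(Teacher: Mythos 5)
Your proof is correct and follows exactly the paper's own route: the paper states Proposition~\ref{prop:01} as an immediate consequence of Lemma~\ref{lem:Mdmin} (matroidal M-convexity of $S^*$, meaning $|x_e-y_e|\le 1$ coordinatewise), which is precisely the argument you spell out. The only difference is that you make explicit the trivial bookkeeping (choosing the extremal vectors, integrality, nonnegativity) that the paper leaves implicit.
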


Combining this with the fact that $\sv^\uparrow(A)$ is unique across all $A\in\cLD$, we obtain the following essential property of the SE mechanism.

\begin{proposition}\label{prop:SE_uu}
The utility of each agent $i$ does not change according to the choice of allocation at Step~$1$ in the SE mechanism.
\end{proposition}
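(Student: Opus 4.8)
The plan is to exploit two facts already in hand: that $v_i(A_i)=|A_i|$ for clean allocations under matroidal valuations, so the utility of agent $i$ equals $|A_i|+p_i$; and that $\sv^\uparrow(A)$ is identical for every $A\in\cLD$. The latter implies that the quantity $M^*:=\max_{j\in N}|A_j|$ appearing in condition (ii) of Step~2 is a single constant, independent of the choice of $A$. Thus the only source of variation is which agent receives which size, and by Proposition~\ref{prop:01} each agent's size $|A_i|$ ranges over at most the two consecutive integers $\ell_i:=\min_{C\in\cLD}|C_i|$ and $h_i:=\max_{B\in\cLD}|B_i|$, with $h_i-\ell_i\in\{0,1\}$.

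First I would dispose of the easy case $h_i=\ell_i$. Here $|A_i|$ equals the constant $\ell_i$ for every choice of $A$; condition (i) of Step~2 always holds, and condition (ii) reduces to comparing the two fixed numbers $\ell_i$ and $M^*$, so its truth value---and hence the subsidy $p_i$---is fixed. The utility $|A_i|+p_i$ is therefore the same for all $A\in\cLD$.

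The substantive case is $h_i=\ell_i+1$, where $|A_i|\in\{\ell_i,\ell_i+1\}$. When $i$ receives the larger bundle $|A_i|=\ell_i+1$, condition (i) fails (since $\ell_i+1\ne\ell_i$), so $p_i=0$ and the utility is $\ell_i+1$. When $i$ receives the smaller bundle $|A_i|=\ell_i$, condition (i) holds, and I claim condition (ii) holds as well, yielding $p_i=1$ and utility $\ell_i+1$ once more. Matching these two values is exactly the content of the proposition, so the crux is verifying condition (ii) in this second subcase.

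That verification is the main obstacle, but it is short. Condition (ii) here reads $\ell_i<M^*$. Since $h_i=\ell_i+1$ is attained by agent $i$ in some $B\in\cLD$, we have $\ell_i+1=|B_i|\le\max_{j\in N}|B_j|=M^*$, whence $\ell_i<M^*$. The step that makes this comparison legitimate across all choices is precisely the constancy of $M^*$, which is why the uniqueness of $\sv^\uparrow$ over $\cLD$ is indispensable here. Collecting the cases shows that $|A_i|+p_i$ is independent of the allocation selected in Step~1, as claimed.
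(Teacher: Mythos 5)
Your proof is correct, and it rests on the same two facts as the paper's proof---Proposition~\ref{prop:01} and the uniqueness of $\sv^\uparrow(A)$ across $\cLD$ (hence the constancy of $M^*=\max_{j\in N}|A_j|$)---but it organizes the case analysis differently, and the difference is worth noting. The paper splits on a property of the \emph{chosen} allocation: whether $|A_i|=\max_{j\in N}|A_j|$ or $|A_i|<\max_{j\in N}|A_j|$, deriving the utility $\max_{j\in N}|B_j|$ in the first case and $\min_{B\in\cLD}|B_i|+1$ in the second. Because these cases depend on which $A$ is picked, a single agent may land in different cases for different choices, so constancy additionally requires checking that the two case-values coincide whenever both cases are realizable; the paper supplies this only tersely, via the parenthetical invoking Proposition~\ref{prop:01}. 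Your split is instead on a property of the \emph{profile}: whether $\max_{B\in\cLD}|B_i|-\min_{C\in\cLD}|C_i|$ is $0$ or $1$. Since that dichotomy does not depend on the allocation selected in Step~1, constancy of the utility is immediate within each of your cases, and no cross-case reconciliation is needed; the only nontrivial step is your verification that condition (ii) holds when $h_i=\ell_i+1$ and agent $i$ receives the smaller size, which you handle correctly by bounding $\ell_i+1=|B_i|\le\max_{j\in N}|B_j|=M^*$. In short, both proofs have the same mathematical content, but your decomposition removes a subtlety that the paper's phrasing glosses over, at the cost of making the explicit utility formulas ($\min_{B\in\cLD}|B_i|+1$ in the non-maximal case) slightly less visible.
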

\begin{proof}
Let $A$ be the allocation chosen at Step 1. Then, for agent $i$ with $|A_i|<\max_{j \in N}|A_j|$, $i$ gets the final utility $\min_{B \in \cLD}|B_i| +1$, which does not depend on the choice of $A$ (note that if the difference is $1$, $\max_{B \in \cLD}|B_i|= \min_{B \in \cLD}|B_i| +1$). For agent $i$ with $|A_i|=\max_{j \in N}|A_j|$, $i$ gets no subsidy and the final utility is $|A_i|=\max_{j \in N}|A_j|=\max_{j \in N}|B_j|$ for any $B$ in $\cLD$ (because the size vectors are identical across all allocations in $\cLD$); again, this final utility does not depend on the choice of $A$.
\end{proof}

Note that $\min_{B\in\cLD}|B_i|$ can be computed in polynomial time for each $i$ using Lemma~\ref{lem:Mdmin}, e.g., by setting the weight $w$ as $w_i=0$ and $w_j=1$ for all $j\in N\setminus\{i\}$.
Hence, the outcome of the SE mechanism can be computed in polynomial time.
\begin{lemma}\label{lem:SE_poly}
The SE mechanism is polynomial time implementable.
\end{lemma}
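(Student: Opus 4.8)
The plan is to show that every computational operation the SE mechanism performs is either a call to the polynomial-time routine guaranteed by Lemma~\ref{lem:Mdmin} or a trivial $O(n)$ bookkeeping step; since the mechanism invokes these only polynomially many times, the overall running time is polynomial. Concretely, I would separate the argument according to the two steps of the mechanism: selecting an allocation $A\in\cLD$ (Step~1), and determining, for each agent, whether the two subsidy conditions (i) and (ii) hold (Step~2).

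For Step~1, I would apply Lemma~\ref{lem:Mdmin} with the all-zero weight vector $w=0$. This returns, in polynomial time, the size vector $\sv(A)\in S^*$ of some clean Lorenz dominating allocation. Since every clean Lorenz dominating allocation is clean utilitarian optimal, we have $S^*\subseteq S_2$, so an actual allocation realizing this size vector can be recovered in polynomial time by the capped-profile matroid-intersection procedure described before Lemma~\ref{lem:M} (computing a clean utilitarian optimal allocation for the profile $v_i'(X)=\min\{v_i(X),\,|A_i|\}$). This yields the arbitrarily chosen $A\in\cLD$ required by Step~1.

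For Step~2, condition (ii)~$|A_i|<\max_{j\in N}|A_j|$ is checked directly from the bundles of $A$ in $O(n)$ time. The only nontrivial quantity is $\min_{B\in\cLD}|B_i|$ appearing in condition (i), which I would obtain, for each fixed $i\in N$, by a single application of Lemma~\ref{lem:Mdmin}: taking the weight $w$ with $w_i=1$ and $w_j=0$ for $j\ne i$, the minimizer $\argmin_{s\in S^*}\sum_{j\in N}w_js_j$ has $i$-th coordinate equal to $\min_{B\in\cLD}|B_i|$. Running this once per agent amounts to $n$ polynomial-time calls; the subsequent comparisons and the subsidy assignment are immediate.

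The main point to stress is that essentially no obstacle remains at this level: all of the structural and algorithmic difficulty has already been discharged in Lemma~\ref{lem:Mdmin} (which itself rests on the Frank--Murota theorem and matroid intersection). The only care needed is to confirm that every quantity the mechanism consults---an arbitrary member of $\cLD$, the largest bundle size in $A$, and the per-agent minimum bundle size over $\cLD$---is either read off $A$ directly or expressible as a linear minimization over the matroidal M-convex set $S^*$, and is therefore covered by Lemma~\ref{lem:Mdmin}.
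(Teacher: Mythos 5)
Your proposal is correct and follows essentially the same route as the paper: the paper's (very terse) proof likewise reduces everything to Lemma~\ref{lem:Mdmin}, computing $\min_{B\in\cLD}|B_i|$ for each agent by a linear minimization over $S^*$ and recovering an actual allocation from a size vector via the same capped-profile matroid-intersection trick you cite. The only substantive difference is the weight vector: you take $w_i=1$, $w_j=0$ for $j\neq i$, which minimizes $s_i$ directly, whereas the paper suggests $w_i=0$, $w_j=1$ for $j\neq i$; since all vectors in $S^*$ have the same coordinate sum, the paper's weights in fact maximize $s_i$, so your choice is the one that literally yields $\min_{B\in\cLD}|B_i|$ (and thus quietly repairs a slip in the paper's ``e.g.'').
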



Now, we show three other important properties of $\cLD$. First, we show that the minimum valuation realized by each agent in clean Lorenz dominating allocations is monotone with respect to a restriction of their valuation function.
Note that this property can be also proved via the monotonicity of the PE mechanism~\cite[Lemma~21]{babaioff2020fair} with respect to the priority ordering for which $i$ has the lowest priority. 
\begin{lemma}\label{lem:min_monotone}
For any $i\in N$ and sets $X\subseteq Y\subseteq M$, 
\begin{align}
\min_{B\in\cLD[P]}|B_i|&\le\min_{B'\in\cLD[P']}|B'_i|.\label{eq:monotone1}
\end{align}
where $P=(v_1,\dots,\restr{v_i}{X},\dots,v_n)$ and $P'=(v_1,\dots,\restr{v_i}{Y},\dots,v_n)$.
\end{lemma}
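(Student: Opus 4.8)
The plan is to prove the statement by an exchange argument built on Lemma~\ref{lem:transfer} and the M-convexity of size vectors (Lemmas~\ref{lem:M} and~\ref{lem:Mdmin}), after two simplifying reductions.

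First I would record the monotonicity of the restricted valuations themselves: since $X\subseteq Y$, we have $\restr{v_i}{X}(Z)=v_i(Z\cap X)\le v_i(Z\cap Y)=\restr{v_i}{Y}(Z)$ for every $Z\subseteq M$, using that $v_i$ is monotone. A key structural consequence is that every clean allocation for $P$ is also a clean allocation for $P'$: if $A$ is clean for $P$ then $A_i\subseteq X\subseteq Y$ and $A_i$ is $v_i$-independent, so $\restr{v_i}{Y}(A_i)=|A_i|$ and $A$ is clean for $P'$ as well (the other agents' valuations are identical in $P$ and $P'$). In particular $A_i\subseteq X$ for every clean-$P$ allocation, and the unique Lorenz-maximal sorted size vector of $P'$ dominates that of $P$.

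Next I would reduce to the case $|Y\setminus X|=1$. Taking a maximal chain $X=Z_0\subset Z_1\subset\dots\subset Z_m=Y$ with $|Z_{t+1}\setminus Z_t|=1$ and the associated profiles, the claimed inequality telescopes, so it suffices to treat $Y=X\cup\{e\}$. Let $b=\min_{B'\in\cLD[P']}|B'_i|$ and fix a minimizer $B'$. If some minimizer has $e\notin B'_i$, then $B'_i\subseteq X$, so $B'$ is clean for $P$ by the observation above; being Lorenz dominating over the larger family of clean-$P'$ allocations, it is a fortiori Lorenz dominating over the clean-$P$ allocations, whence $B'\in\cLD[P]$ and $\min_{B\in\cLD[P]}|B_i|\le|B'_i|=b$.

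The remaining, and main, case is when every minimizer $B'\in\cLD[P']$ satisfies $e\in B'_i$, i.e., the extra item $e$ is pinned to $i$ in $P'$. Here I would delete $e$ from $i$ to obtain a clean-$P$ allocation $D$ with $|D_i|=b-1$, and then argue that $D$ can be completed to a clean Lorenz dominating allocation of $P$ without enlarging $i$'s bundle, by rerouting $e$ and rebalancing only among the agents $N\setminus\{i\}$ through applications of Lemma~\ref{lem:transfer} and the (B-EXC$_+$) property within the M-convex set $S_1$ of clean-$P$ allocations. I expect this to be the hard step: merely having a clean allocation $D$ with small $|D_i|$ does not by itself bound $\min_{B\in\cLD[P]}|B_i|$, since balance can force $i$'s bundle up in a Lorenz dominating allocation even when a non-dominating clean allocation gives $i$ little. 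The argument must therefore crucially exploit that $D$ descends from a $P'$-Lorenz dominating allocation and that $P$ and $P'$ differ only in agent $i$'s valuation, with the removed item $e$ lying outside $X$. A clean way to sidestep this obstacle altogether is the alternative route indicated after the statement: the PE mechanism with the priority ordering placing $i$ last returns a clean Lorenz dominating allocation attaining $\min_{B\in\cLD}|B_i|$, and its monotonicity~\cite[Lemma~21]{babaioff2020fair} together with $\restr{v_i}{X}\le\restr{v_i}{Y}$ yields the inequality directly.
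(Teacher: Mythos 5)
Your reduction to $|Y\setminus X|=1$ and your easy case are both sound and consistent with the paper's setup: every clean allocation under $P$ is also clean under $P'$, so a minimizer $B'\in\cLD[P']$ with $a\notin B'_i$ lies in $\cLD[P]$ and the inequality follows. But your argument stops exactly where the real work begins. In the remaining case (the new item $a$ belongs to $B'_i$ for \emph{every} minimizer), your plan --- delete $a$ to get a clean-$P$ allocation $D$ with $|D_i|=b-1$, then ``reroute and rebalance among $N\setminus\{i\}$'' to reach an element of $\cLD[P]$ without pushing items onto $i$ --- is a goal, not an argument, and you yourself flag it as the unresolved hard step. That step is precisely the content of the paper's proof: it takes extremal $A\in\argmin_{B\in\cLD[P]}|B_i|$ and $A'\in\argmin_{B'\in\cLD[P']}|B'_i|$ with the additional tie-breaking that $\sum_{j\in N}j\cdot|B_j|$ and $\sum_{j\in N}j\cdot|B'_j|$ are minimized, assumes $|A_i|>|A'_i|$, selects two pivot agents $s$ (where $|A_s|>|A'_s|$) and $t$ (where $|A_t|<|A'_t|$) by a careful minimum-size/minimum-index rule, and then in each of two cases applies the exchange property (B-EXC${}_+$) --- once inside the M-convex set of clean-$P'$ size vectors to the pair $(A,A')$, and once inside the clean-$P$ set to the pair $(A,A'')$, where $A''$ is $A'$ with $a$ removed --- so that every branch contradicts Lorenz dominance, the tie-breaking minimality, or the minimality of $|A_i|$. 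None of this machinery (the tie-break, the pivots $s,t$, the two exchange applications) appears in your sketch, so the main case is a genuine gap.

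Your fallback is the alternative route the paper itself mentions (monotonicity of the PE mechanism, Lemma~21 of \cite{babaioff2020fair}, with $i$ given lowest priority), but as written it is a citation rather than a proof: it silently relies on the bridging claim that the PE mechanism with $i$ last outputs an allocation attaining $\min_{B\in\cLD}|B_i|$, in both profiles $P$ and $P'$. That claim is true --- since all vectors in the M-convex set $\{\sv(B)\mid B\in\cLD\}$ have the same component sum, any vector that is lexicographically maximal in the coordinates of $N\setminus\{i\}$ must have minimum $i$-th coordinate, by the exchange property --- but it must be stated and proved (or located in the cited work) before Lemma~21 yields the lemma's inequality. With that bridge supplied, your fallback would be a complete and shorter proof; without it, neither of your two routes closes.
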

\begin{proof}
It is sufficient to prove the case when $|Y|=|X|+1$ (since we can apply induction).
Let $a$ be the item that is in $Y$ but not in $X$ (i.e., $Y=X\cup\{a\}$).

Let $A\in\argmin_{B\in\cLD[P]}|B_i|$ and $A'\in\argmin_{B'\in\cLD[P']}|B'_i|$ such that $\sum_{j\in N}j\cdot|B_j|$ and $\sum_{j\in N}j\cdot|B'_j|$ are minimized, respectively.
Suppose to the contrary, we assume that $|A_i|>|A'_i|$.
Let $R$ be the set of agents $s$ such that $|A_s|>|A'_s|$ and subject to that $|A_s|$ is minimized. If $R=\{i\}$, then let $s=i$, and otherwise, let $s$ be the agent with smallest index in $R\setminus\{i\}$.
As $A'$ allocates at least as many items as $A$, there exists an agent $j$ such that $|A_j|<|A'_j|$.
Let $t$ be the agent with $|A_t|<|A'_t|$ such that $|A_t|$ is minimized, and if there are multiple such agents, choose the one with the smallest index.

Consider the case where $|A_s|<|A'_t|$, or $|A_s|=|A'_t|$ and $i\ne s<t$.
By the exchange property for $A$ and $A'$ with $|A_s|>|A'_s|$ (recall that \(\{(|B'_0|,|B'_1|,\dots,|B'_n|)\mid \text{$B'$ is clean under $P'$}\}\) is M-convex by \eqref{eq:M0}), there exists a clean allocation $C$ under $P'$ such that 
$\sv(C)=\sv(A')+\chi_s-\chi_k$ for some $k\in N$ with $|A'_k|>|A_k|$.
Note that $|A'_s|<|A_s|\le|A'_t|\le|A'_k|$.
If $|A'_s|\le |A'_k|-2$, then $A'$ does not Lorenz dominate $C$, a contradiction.
Otherwise, i.e., $|A'_s|+1=|A_s|=|A'_t|=|A'_k|$, we have $C\in\cLD[P']$ and $|C_i|=|A'_i|$ by $s\ne i$.
By $s<t\le k$, we have $\sum_{j\in N}j\cdot |C_j|<\sum_{j\in N}j\cdot|A'_j|$, a contradiction. 

Finally, consider the other case, i.e., (i) $|A_s|>|A'_t|$, (ii) $|A_s|=|A'_t|$ and $i\ne s>t$, or (iii) $|A_s|=|A'_t|$ and $s=i$.
Let $A''=(A'_1,\dots,A'_i\setminus\{a\},\dots,A'_n)$.
By the exchange property for $A$ and $A''$ with $|A_t|<|A'_t|$ (recall that \(\{(|B_0|,|B_1|,\dots,|B_n|)\mid \text{$B$ is clean under $P$}\}\) is M-convex by \eqref{eq:M0}), there exists a clean allocation $D$ under $P$ such that 
$\sv(D)=\sv(A)+\chi_t-\chi_\ell$ for some $\ell\in N$ with $|A''_\ell|<|A_\ell|$ (here, $\ell\ne 0$ by $|A''_0|=|M|-\sum_{j\in N}|A''_j|\ge |M|-\sum_{j\in N}|A_j|=|A_0|$).
Note that $|A_t|<|A'_t|\le |A_s|\le |A_\ell|$.
If $|A_t|\le |A_\ell|-2$, then $A$ does not Lorenz dominate $D$, a contradiction.
Otherwise, i.e., $|A_t|+1=|A'_t|=|A_s|=|A_\ell|$, we have $D\in\cLD[P]$.
If $\ell\ne i$, we have $\sum_{j\in N}j\cdot |D_j|<\sum_{j\in N}j\cdot|A'_j|$ by $t<s\le \ell$, a contradiction. 
If $\ell=i$, we have $|D_i|<|A_i|$, which contradicts the assumption that $A\in\argmin_{B\in\cLD[P]}|B_i|$.
\end{proof}



We next show that, if an agent receives a bundle $A_i$ and changes her report to the restriction $\restr{v_i}{X}$ for $X \subseteq A_i$,
then she would be allocated the set $X$ at some allocation in $\cLD$.
Again, this property can be proved by the strong faithfulness of the PE mechanism \cite[Lemma~22]{babaioff2020fair} with respect to appropriate priority orders.
\begin{lemma}\label{lem:sfaithful0}
Let $P=(v_1,\dots,v_n)$, $i\in N$, $A\in\cLD[P]$, $X\subseteq A_i$, and $P'=(v_1,\dots,\restr{v_i}{X},\dots,v_n)$.
Then, we have
\begin{itemize}
    \item $B_i=X$ for some $B\in\cLD[P']$, and
    \item if $|A_i|=\min_{A'\in\cLD[P]}|A'_i|$, then $B_i=X$ for any $B\in\cLD[P']$.
\end{itemize}
\end{lemma}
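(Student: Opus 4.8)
The plan is to reduce both bullets to a statement about the \emph{size} $|B_i|$ of agent $i$'s bundle over $\cLD[P']$, and then to exploit the exchange structure of Lemma~\ref{lem:transfer} together with the M-convexity of Lemma~\ref{lem:M} to locate an allocation assigning agent $i$ exactly $X$. First I would record two structural facts. Since $X\subseteq A_i$ and $A$ is clean, $X$ is independent for $v_i$, so $\restr{v_i}{X}$ is matroidal with independent sets exactly the subsets of $X$; hence in every clean allocation for $P'$ agent $i$ receives a bundle inside $X$. As such a bundle is automatically independent for $v_i$, the clean allocations for $P'$ are precisely the clean allocations for $P$ whose $i$-th bundle lies in $X$, and for all of them the valuation vector coincides with the size vector. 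Consequently $\cLD[P']$ is the family of clean allocations of $P$ with $B_i\subseteq X$ whose size vectors are Lorenz-maximal among these, and $|B_i|\le|X|$ always, with equality iff $B_i=X$. The first bullet thus becomes $\max_{B\in\cLD[P']}|B_i|=|X|$; using that $\sv^\uparrow$ is constant on $\cLD[P']$ and that $\max-\min\le1$ by Proposition~\ref{prop:01} applied to $P'$, the second bullet becomes $\min_{B\in\cLD[P']}|B_i|=|X|$.

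For the existence claim I would suppose every $B\in\cLD[P']$ has $|B_i|<|X|$ and take one with $|B_i|$ maximal. Then $X\setminus B_i\ne\emptyset$, and by utilitarian optimality under $P'$ any $e^\ast\in X\setminus B_i$ is held by some $\ell\ne i$. Applying Lemma~\ref{lem:transfer} to $A$, $B$ and the agent $i$ (so $C^0=B$ and $|A_i|\ge|X|>|B_i|$), I would choose the first transferred item $e^1=e^\ast$ inside $X$; this is allowed since $e^\ast\in A_i\setminus B_i$ and $B_i\cup\{e^\ast\}\subseteq X$ is independent. The output $C$ has $C_i=B_i\cup\{e^\ast\}\subseteq X$, so $C$ is clean for $P'$, and $\sv(C)=\sv(B)+\chi_i-\chi_k$ with $|A_k|<|B_k|$. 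The terminal index $k$ cannot be $0$: otherwise $C$ would be a $P'$-clean allocation with strictly larger total size than $B$, contradicting that $B$ is utilitarian optimal for $P'$. If $|B_k|>|B_i|$, the transfer lowers no Lorenz prefix, so $C\in\cLD[P']$ with $|C_i|=|B_i|+1$, contradicting maximality. The delicate case $|B_k|=|B_i|$, in which the single transfer unbalances the size vector, I would rule out by additionally choosing $B$ to minimize $\sum_{j\in N}j\cdot|B_j|$ and re-running the exchange so that it lands on an agent of size exceeding $|B_i|$, exactly as in the proof of Lemma~\ref{lem:min_monotone}.

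For the uniqueness claim I would first reduce to a \emph{pinned} base profile: because $X\subseteq A_i$, replacing $v_i$ by $\restr{v_i}{A_i}$ changes neither $P'$ nor $\cLD[P']$, and under the hypothesis $\min_{A'\in\cLD[P]}|A'_i|=|A_i|$ every clean Lorenz dominating allocation of the modified profile assigns agent $i$ exactly $A_i$ (such allocations form a subfamily of $\cLD[P]$ on which $|B_i|\le|A_i|$ while $|B_i|\ge\min_{A'\in\cLD[P]}|A'_i|=|A_i|$, forcing $B_i=A_i$). I would then induct, peeling the items of $A_i\setminus X$ off one at a time, maintaining the invariant that the current base pins agent $i$ to the current restriction set. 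The single step is: if the base pins agent $i$ to $Y$ and $X=Y\setminus\{e\}$, then every allocation in $\cLD[\restr{v_i}{X}]$ gives agent $i$ exactly $X$. Here the existence part supplies a witness with $|B_i|=|X|$, so by Proposition~\ref{prop:01} it remains to exclude $|B_i|=|X|-1$; then $X\setminus B_i$ is a single item $e^\ast$ on some $\ell\ne i$, and if $|B_\ell|\ge|X|$ the transfer of $e^\ast$ to $i$ is Lorenz-improving within the $P'$-clean allocations, contradicting $B\in\cLD[P']$.

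The main obstacle is the residual degenerate case in which the needed item $e^\ast$ sits on a \emph{small} agent ($|B_\ell|\le|X|-1$), where no single transfer helps. I expect to resolve it by routing $e^\ast$ to agent $i$ through a longer augmenting chain obtained from Lemma~\ref{lem:transfer} between $B$ and the existence-witness, and by invoking the pinning of agent $i$ to $Y$ in the base $\cLD$ to show that such a chain cannot terminate without strictly improving the Lorenz vector. This pinning-propagation step, together with the analogous equal-size case in the existence proof, is where the real bookkeeping lies, and I anticipate it will require the same minimizer-plus-case-analysis device already deployed for Lemma~\ref{lem:min_monotone}.
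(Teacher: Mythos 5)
Your overall architecture (pinning agent $i$ via the restriction $\restr{v_i}{A_i}$, reducing both bullets to statements about the size $|B_i|$, and peeling items off one at a time) parallels the paper's proof, which establishes the single implication that if some allocation in $\cLD[Q']$ gives agent $i$ exactly $Z$, then \emph{every} allocation in $\cLD[Q]$ gives agent $i$ exactly $Y=Z\setminus\{a\}$ (this one implication, iterated, yields both bullets at once, so your separate existence argument duplicates work the paper avoids). But there are genuine gaps. First, a step that fails as written: when you apply Lemma~\ref{lem:transfer} to $A$ (clean under $P$) and $B$ (clean under $P'$) with $e^1=e^\ast$, you conclude that the output satisfies $C_i=B_i\cup\{e^\ast\}\subseteq X$, hence that $C$ is clean under $P'$. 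The lemma guarantees no such thing: its chain stops only at an agent $k^r$ with $|A_{k^r}|<|B_{k^r}|$, and since $|A_i|>|B_i|$ the chain may revisit agent $i$, remove an item from $i$'s bundle and exchange in items of $A_i\setminus X$ (independence along the chain is with respect to $v_i$, not $\restr{v_i}{X}$), so the final bundle of $i$ need not lie in $X$ and the contradiction with maximality of $|B_i|$ over $\cLD[P']$ evaporates. This is repairable---run the exchange entirely inside the $P'$-clean world by replacing $A$ with $\tilde A=(A_1,\dots,X,\dots,A_n)$, which is clean under $P'$---but as stated it is wrong. Relatedly, your dichotomy in the uniqueness step (``if $|B_\ell|\ge|X|$ the transfer is Lorenz-improving'') is off by one: at $|B_\ell|=|X|$ the transfer merely permutes the sorted size vector, producing another member of $\cLD[P']$ and no contradiction with the existence of $B$.

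Second, and more fundamentally, the cases you label ``delicate'' and defer---the exchange landing on an agent whose bundle is no larger than $|B_i|$, and the pinned item sitting on a small agent---are not residual bookkeeping; they are the entire content of the lemma, and the paper's proof spends essentially all of its effort there. Resolving them requires the double extremal choice ($C$ with minimum $|C_i|$ and $C'$ with $C'_i=Z$, each tie-broken by minimizing $\sum_{j\in N}j\cdot|C_j|$), the selection of agents $s$ and $t$ via the set $R$, and exchange arguments run in \emph{two different} M-convex sets (the clean allocations under $Q$ for the pair $C,C''$, and under $Q'$ for the pair $C,C'$), with the index-ordering tie-break delivering the final contradiction; the boundary situations $s=i$ and ``exchange partner equals $i$'' each need their own treatment. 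Your proposal gestures at ``the same minimizer-plus-case-analysis device as Lemma~\ref{lem:min_monotone}'' without executing it, and the transplant is not routine precisely because here one must alternate between the $Q$-side and $Q'$-side exchange structures. As it stands, the proposal is a plausible plan with the hard steps missing, not a proof.
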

\begin{proof}
Let $P''=(v_1,\dots,\restr{v_i}{A_i},\dots,v_n)$.
Note that $A\in\cLD[P'']\subseteq\cLD[P]$ since $A$ is clean under $P''$.
In addition, $B''_i=A_i$ for any $B''\in\cLD[P'']$ if $|A_i|=\min_{A'\in\cLD[P]}|A'_i|$ (since otherwise $B''_i\subsetneq A_i$ for some $B''\in\cLD[P'']\subseteq\cLD[P]$, and hence $|B''_i|<|A_i|=\min_{A'\in\cLD[P]}|A'_i|\le \min_{A''\in\cLD[P'']}|A''_i|$, a contradiction).

To prove the proposition, we show that 
\begin{align}
\text{$C_i=Y$ for all $C\in\cLD[Q]$ if $C'_i=Z$ for some $C'\in\cLD[Q']$}
\end{align}
for any $Y\subsetneq Z\subseteq M$ where $Q=(v_1,\dots,\restr{v_i}{Y},\dots,v_n)$ and $Q'=(v_1,\dots,\restr{v_i}{Z},\dots,v_n)$.
It is sufficient to prove the case when $|Z|=|Y|+1$ (since we can apply induction).
Let $a$ be the item that is in $Z$ but not in $Y$ (i.e., $X=Y\cup\{a\}$).

Assume towards a contradiction that $C_i\subsetneq Y$ for some $C\in\cLD[Q]$ but $C'_i=Z$ for some $C'\in\cLD[Q']$.
Let $C\in\cLD[Q]$ be an allocation with $|C_i|=\min_{A'\in\cLD[Q]}|A'_i|$ such that $\sum_{j\in N}j\cdot|C_j|$ is minimized.
Similarly, let $C'\in\cLD[Q']$ be an allocation with $C'_i=Z$ such that $\sum_{j\in N}j\cdot|C'_j|$ is minimized.
In addition, let $C''=(C'_1,\dots,C'_i\setminus\{a\},\dots,C'_n)$.
Note that $C''_i=Y$ and $C''$ is clean under $Q$.

Let $R$ be the set of agents $s$ such that $|C_s|<|C''_s|$ and subject to that $|C''_s|$ is minimized. If $R=\{i\}$, then let $s=i$, and otherwise, let $s$ be the agent with smallest index in $R\setminus\{i\}$.
As $C$ allocates at least as many items as $C''$, there exists an agent $j$ such that $|C_j|>|C''_j|$.
Let $t$ be the agent with $|C_t|>|C''_t|$ such that $|C_t|$ is minimized, and if there are multiple such agents, choose the one with the smallest index.

Consider the case where $|C''_s|<|C_t|$, or $|C''_s|=|C_t|$ and $i\ne s<t$.
By the exchange property for $C$ and $C''$ with $|C_s|<|C''_s|$, (recall that \(\{(|A'_0|,|A'_1|,\dots,|A'_n|)\mid \text{$A'$ is clean under $Q$}\}\) is M-convex by \eqref{eq:M0}), 
there exists a clean allocation $D$ under $Q$ such that 
$\sv(D)=\sv(C)+\chi_s-\chi_k$ for some $k\in N$ with $|C_k|>|C''_k|$.
Note that $|C_s|<|C''_s|\le|C_t|\le|C_k|$.
If $|C_s|\le |C_k|-2$, then $C$ does not Lorenz dominate $D$, a contradiction.
Otherwise, i.e., $|C_s|+1=|C''_s|=|C_t|=|C_k|$, we have $D\in\cLD[Q]$ and $|D_i|=|C_i|$ by $s\ne i$.
By $s<t\le k$, we have $\sum_{j\in N}j\cdot |D_j|<\sum_{j\in N}j\cdot|C_j|$, a contradiction. 

Finally, consider the other case, i.e., (i) $|C''_s|>|C_t|$, (ii) $|C''_s|=|C_t|$ and $i\ne s>t$, or (iii) $|C''_s|=|C_t|$ and $s=i$.
By the exchange property for $C$ and $C'$ with $|C_t|>(|C''_t|=)|C'_t|$,
(recall that \(\{(|A'_0|,|A'_1|,\dots,|A'_n|)\mid \text{$A'$ is clean under $Q'$}\}\) is M-convex by \eqref{eq:M0}), 
there exists a clean allocation $D'$ under $Q'$ such that 
$\sv(D')=\sv(C')+\chi_t-\chi_\ell$ for some $\ell\in N$ with $|C'_\ell|>|C_\ell|$.
Note that $|C'_t|<|C_t|\le|C''_s|\le|C''_\ell|\le |C'_\ell|$.
If $|C'_t|\le |C'_\ell|-2$, then $C'$ does not Lorenz dominate $D'$, a contradiction.
Otherwise, i.e., $|C'_t|+1=|C_t|=|C''_s|=|C''_\ell|=|C'_\ell|$, we have $D'\in\cLD[Q']$ and $\ell\ne i$.
Thus, we have $\sum_{j\in N}j\cdot |D'_j|<\sum_{j\in N}j\cdot|C'_j|$ by $t<s\le\ell$, a contradiction. 

\end{proof}


Finally, we analyze the effect of restriction upon the number of items allocated to each agent.
\begin{lemma}\label{lem:maximp}
Fix agent $i$. Let $P=(v_1,\dots,\restr{v_i}{X},\dots,v_n)$ and $P'=(v_1,\dots,\restr{v_i}{Y},\dots,v_n)$ for subsets $X\subseteq Y\subseteq M$.
Suppose that for some $A'\in\cLD[P']$, $A'_i=Y$, and $i$'s bundle has a strictly smaller size than the largest bundle, i.e., $|A'_i|<\max_{j\in N}|A'_j|$. Then,  $|A_i|<\max_{j\in N}|A_j|$ for any $A\in\cLD[P]$.
\end{lemma}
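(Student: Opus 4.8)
The plan is to reduce the statement to a lower bound on the largest bundle under $P$, and then contradict the Lorenz-optimality of $A'$ by a single Robin--Hood transfer.

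First I would record two consequences of the hypotheses. Since $A'$ is clean under $P'$ and $A'_i=Y$, the set $Y$ is independent for $v_i$ and $|A'_i|=v_i(Y)=|Y|$; moreover $|Y|=\max_{B'\in\cLD[P']}|B'_i|$, because under $\restr{v_i}{Y}$ no clean bundle of $i$ can exceed $Y$. On the other side, for every $A\in\cLD[P]$ cleanness forces $A_i\subseteq X\subseteq Y$ with $A_i$ independent, so $|A_i|\le|X|\le|Y|$, and the very same allocation $A$ is then clean under $P'$ as well. As $\sv^\uparrow(A)$ is common to all $A\in\cLD[P]$, it therefore suffices to prove the single inequality $\max_{j\in N}|A_j|\ge|Y|+1$ for one (equivalently, every) $A\in\cLD[P]$: combined with $|A_i|\le|Y|$ this yields $|A_i|<\max_{j\in N}|A_j|$, which is exactly the claim.

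Next I would establish this inequality by contradiction. Assume $M:=\max_{j\in N}|A_j|\le|Y|$ for $A\in\cLD[P]$. By the hypothesis $|A'_i|=|Y|<\max_j|A'_j|$ there is an agent $j^*\ne i$ with $|A'_{j^*}|=\max_j|A'_j|\ge|Y|+1$. Both $\sv(A)$ and $\sv(A')$ lie in the M-convex set of size vectors of clean allocations under $P'$ (Lemma~\ref{lem:M}, \eqref{eq:M0}), and $A'\in\cLD[P']$ Lorenz-dominates the clean allocation $A$ under $P'$. Since $|A'_{j^*}|\ge|Y|+1>M\ge|A_{j^*}|$, we have $j^*\in\supp^+(\sv(A')-\sv(A))$, so the exchange property (B-EXC) produces an index $j_0\in\supp^-(\sv(A')-\sv(A))$ together with a clean allocation $D'$ under $P'$ satisfying $\sv(D')=\sv(A')-\chi_{j^*}+\chi_{j_0}$. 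For a genuine agent $j_0\in N$ one has $|A'_{j_0}|<|A_{j_0}|\le M\le|Y|\le|A'_{j^*}|-1$, hence $|A'_{j^*}|-|A'_{j_0}|\ge 2$. Thus $D'$ is obtained from $A'$ by moving one unit out of the largest bundle $j^*$ into a bundle that is at least two smaller: this is a strict Pigou--Dalton (Robin--Hood) transfer, so $D'$ strictly Lorenz-dominates $A'$, contradicting $A'\in\cLD[P']$ and proving the inequality.

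The one point that needs care, and which I expect to be the main obstacle, is guaranteeing that the transfer lands on a \emph{real} agent $j_0\in N$ rather than on the unallocated pile (coordinate $0$); this ambiguity can arise only when restricting $i$ strictly lowers the optimal welfare, i.e.\ when $|A'_0|<|A_0|$. I would resolve it in the same spirit as the proofs of Lemmas~\ref{lem:min_monotone} and~\ref{lem:sfaithful0}: use allocations minimizing $\sum_{j}j\,|B_j|$ for tie-breaking, and route the unit freed from $j^*$ along an augmenting path (Lemma~\ref{lem:transfer}) toward $A$. Because the target $A$ is utilitarian optimal, the path must terminate at a genuine agent $k^r\in N$ with $|A'_{k^r}|<|A_{k^r}|\le|Y|\le|A'_{j^*}|-1$, reinstating the gap-at-least-two Pigou--Dalton transfer. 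Equivalently, one first completes $A$ to a utilitarian-optimal clean allocation under $P'$ and runs the exchange inside the M-convex set $S_2$ of \eqref{eq:M1}, in which coordinate $0$ is absent; here the utilitarian optimality of $A$ under $P$ is precisely what prevents any item freed from the top bundle from being unabsorbable by the remaining agents.
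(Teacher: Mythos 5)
Your reduction is sound, and the Pigou--Dalton contradiction is valid \emph{whenever the exchange index $j_0$ is a genuine agent}; moreover you correctly isolate the dangerous case, namely when restricting $i$ from $Y$ to $X$ strictly lowers the optimal welfare, so that $0\in\supp^-(\sv(A')-\sv(A))$ and (B-EXC) over $N\cup\{0\}$ may hand you $j_0=0$. This case genuinely occurs under the lemma's hypotheses (e.g.\ $v_i$ binary additive on $Y=\{a,b\}$ with $X=\{b\}$, agent $2$ valuing $\{c,d,e\}$, agent $3$ valuing $\{f\}$: under $P'$ the unique optimum has welfare $6$ and uses $a$ in $i$'s bundle, and under $P$ the welfare drops to $5$), so it cannot be dismissed. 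The problem is that neither of your repairs closes it. For the augmenting-path repair: Lemma~\ref{lem:transfer} lets you choose the \emph{initial} agent $k^0$ (one with $|A_{k^0}|>|A'_{k^0}|$), not the terminal one; the terminal agent satisfies $|A_{k^r}|<|A'_{k^r}|$ (your displayed inequality states the opposite), and nothing prevents $k^r=i$, in which case $|A'_{k^r}|=|Y|$ while $|A'_{k^0}|$ can equal $|Y|-1$, a transfer of gap one that contradicts nothing. Also, the paper's remark after Lemma~\ref{lem:transfer} guarantees $k^r\in N$ when the \emph{starting} allocation is utilitarian optimal in the ambient profile, which here is $P'$; attributing this to ``the target $A$ is utilitarian optimal'' invokes the wrong allocation and the wrong profile---$A$ is utilitarian optimal under $P$, and in exactly the dangerous case it is \emph{not} utilitarian optimal under $P'$. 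For the $S_2$ repair: completing $A$ to a clean utilitarian optimal allocation $\tilde A$ under $P'$ perturbs the size vector by $+\chi_k$ for some $k$; if $k=j^*$ then $j^*$ may leave $\supp^+(\sv(A')-\sv(\tilde A))$, and in general the landing agent only satisfies $|A'_{j_0}|\le|\tilde A_{j_0}|-1\le|Y|$, which is not $\le|A'_{j^*}|-2$ when $\max_j|A'_j|=|Y|+1$ exactly---so the gap-two transfer fails precisely in the tight case.

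There is also a structural concern: you aim at $\max_j|A_j|\ge|Y|+1$, which is strictly stronger than needed (since $|A_i|\le|X|$, the bound $\max_j|A_j|\ge|X|+1$ already gives the lemma) and strictly stronger than what the paper proves. The paper's proof avoids exchange arguments entirely: it inducts on $|Y\setminus X|$, sets $A''=(A'_1,\dots,A'_i\setminus\{a\},\dots,A'_n)$, which is clean under $P$ while $A$ is clean under $P'$, obtains the sandwich $\sum_{j=1}^k\sv^\uparrow(A'')_j\le\sum_{j=1}^k\sv^\uparrow(A)_j\le\sum_{j=1}^k\sv^\uparrow(A')_j$ for all $k$, notes the entries below $|X|$ coincide and the total drops by at most one, and then uses $\max_j|A'_j|\ge|X|+2$ to force $\max_j|A_j|\ge|X|+1$ by counting. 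That counting absorbs the welfare-drop case automatically. Your stronger claim may well be true---every counterexample I tried is destroyed by matroid augmentation/parallelism---but your proposal does not prove it, and the missing piece is exactly where the lemma's difficulty lies.
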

\begin{proof}
It is sufficient to prove the case when $|Y|=|X|+1$ because we can apply induction if $|Y|>|X|$.
Let $a$ be the item that is in $Y$ but not in $X$, i.e., $Y=X\cup\{a\}$.

Suppose that $|A'_i|<\max_{j\in N}|A'_j|$ and $A'_i=Y$ for some $A'\in\cLD[P']$.
Consider any allocation $A\in\cLD[P]$ and let $A''=(A'_1,\dots,A'_i\setminus\{a\},\dots,A'_n)$.
As $A$ is clean under $P'$ and $A''$ is clean under $P$, we have
\begin{align}
\sum_{j=1}^k\sv^\uparrow(A'')_j\le
\sum_{j=1}^k\sv^\uparrow(A)_j\le
\sum_{j=1}^k\sv^\uparrow(A')_j\label{eq:maximp}
\end{align}
for any $k\in\{1,\dots,n\}$.
Let $j^*$ be the index such that $\sv^\uparrow(A')_{j^*-1}<|X|$ and $\sv^\uparrow(A')_{j^*}\ge |X|$.
Note that $|A'_i|$ is placed on or after $j^*$ because $|A'_i|=|X|+1$.
Then, for any $j<j^*$, we see that $\sv^\uparrow(A')_j=\sv^\uparrow(A'')_j$, and hence $\sv^\uparrow(A)_j=\sv^\uparrow(A')_j$.
Here, $\sum_{j\in N}|A_j|\ge \sum_{j\in N}|A''_j|=\sum_{j\in N}|A'_j|-1$ by \eqref{eq:maximp} with $k=n$.
As $|X|<|Y|=|A'_i|<\max_{j\in N}|A'_j|$, we have $|X|+2\le\max_{j\in N}|A'_j|$. This together with $\sv^\uparrow(A')_j\ge |X|$ for each $j\ge j^*$, we have
\begin{align}
\sum_{j=j^*}^n \sv^\uparrow(A)_j
\ge \sum_{j=j^*}^n \sv^\uparrow(A')_j-1
\ge \sum_{j=j^*}^n |X|+1.
\end{align}
Thus, $\max_{j\in N}|A_j|\ge |X|+1$.
As $A_i\subseteq X$, we obtain $|A_i|\le |X|<\max_{j\in N}|A_j|$.
\end{proof}

\subsection{Envy-freeness of the SE mechanism}
We are now ready to show that the SE mechanism is envy-free. 
\begin{lemma}\label{lem:SE_EF}
The SE mechanism is envy-free.
\end{lemma}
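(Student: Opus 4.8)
The plan is to check the defining inequality of envy-freeness directly for the produced pair $(A,p)$. Writing $w(i,j)=v_i(A_j)-v_i(A_i)$ for the weight of arc $(i,j)$ in $G_A$ and recalling that $A$ is clean (so $v_i(A_i)=|A_i|$), envy-freeness of $(A,p)$ is exactly the statement that $p$ is a feasible potential: $p_i-p_j\ge w(i,j)$ for every ordered pair $i,j$. Since $p_i-p_j\in\{-1,0,1\}$, an arc with $w(i,j)\le-1$ is automatically fine, so after first establishing the EF1-type bound $w(i,j)\le 1$ I only need to treat $w(i,j)\in\{0,1\}$: show $p_i=1$ and $p_j=0$ when $w(i,j)=1$, and $p_i\ge p_j$ when $w(i,j)=0$.

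The workhorse is a one-step transfer observation: for $A\in\cLD$, if $v_i(A_j)>|A_i|$ then necessarily $|A_j|=|A_i|+1$ and $A_j$ is independent for $v_i$. Indeed, by matroid augmentation there is $e\in A_j\setminus A_i$ with $A_i\cup\{e\}$ independent for $v_i$; moving $e$ from $j$ to $i$ yields a clean $A'$ with $\sv(A')=\sv(A)+\chi_i-\chi_j$. If $|A_j|\ge|A_i|+2$ this is a balancing (mean-preserving) move from a richer to a poorer agent, so $A'$ strictly Lorenz dominates $A$, contradicting $A\in\cLD$; hence $|A_j|=|A_i|+1$, the move merely swaps two sizes, and $A'\in\cLD$. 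This gives $w(i,j)\le1$. The same swap has two further consequences I will use: since $A'\in\cLD$ forces $|A'_i|=|A_i|+1\le h_i:=\max_{B\in\cLD}|B_i|$, Proposition~\ref{prop:01} yields $|A_i|=\ell_i:=\min_{B\in\cLD}|B_i|$; and if moreover $|A_j|=\ell_j$ then $|A'_j|=\ell_j-1<\ell_j$ is impossible, so $p_j=1$ (which entails $|A_j|=\ell_j$) forces $v_i(A_j)\le|A_i|$.

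These facts settle the case $w(i,j)=1$ and most of $w(i,j)=0$. If $w(i,j)=1$, i.e. $v_i(A_j)=|A_i|+1$, then $p_j=1$ is ruled out by the last consequence, so $p_j=0$; and $|A_i|=\ell_i$ together with $|A_i|=|A_j|-1<\max_k|A_k|=:M^*$ makes both subsidy conditions hold, so $p_i=1$. If $w(i,j)=0$, I must exclude $p_i=0,\ p_j=1$. Here $p_j=1$ gives $|A_j|=\ell_j<M^*$ and $v_i(A_j)=|A_i|$, and since $v_i(A_j)\le|A_j|<M^*$ we cannot have $|A_i|=M^*$; thus $p_i=0$ forces $|A_i|>\ell_i$, i.e. $|A_i|=h_i=\ell_i+1$ by Proposition~\ref{prop:01}. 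The entire argument therefore reduces to excluding one configuration: $A\in\cLD$ with $|A_i|=h_i=\ell_i+1<M^*$ ($i$ at its own maximum but below the global maximum, hence reducible), $|A_j|=\ell_j\ge|A_i|$ ($j$ at its minimum), and $v_i(A_j)=|A_i|$.

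To exclude this configuration I will argue by exchange, and this is the step I expect to be the main obstacle. Because $i$ is reducible, applying \textbf{(B-EXC)} to $\sv(A)$ and an allocation attaining $|{\cdot}_i|=\ell_i$ produces an agent $k$ with $|A_k|=\ell_i=|A_i|-1$, which is strictly poorer than $j$ (as $|A_j|\ge\ell_i+1$). Using an independent set $S\subseteq A_j$ witnessing $v_i(A_j)=|A_i|$ to release an item from $j$ and routing it down to $k$ — via an augmenting chain in the spirit of Lemma~\ref{lem:transfer} — I aim to realize a clean allocation effecting a net transfer of one unit from $j$ to $k$; since $|A_k|<|A_j|$ this is again a balancing move, so the result either strictly Lorenz dominates $A$ (when $|A_j|>|A_k|+1$) or lies in $\cLD$ with $|{\cdot}_j|=\ell_j-1$ (when $|A_j|=|A_k|+1$), and either outcome contradicts $A\in\cLD$ or the minimality of $\ell_j$. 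The delicate point is precisely to turn the size-vector exchange guaranteed by \textbf{(B-EXC)} into an explicit chain of item moves that keeps every intermediate allocation clean, thereby certifying that this unit transfer is actually realizable.
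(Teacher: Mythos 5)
Your reduction is sound, and in substance it reproduces the paper's own case analysis: your observation that arcs with $w(i,j)\le -1$ are harmless is the paper's Case 1, your one-step swap lemma (forcing $|A_j|=|A_i|+1$, $|A_i|=\min_{B\in\cLD}|B_i|$, and ruling out $p_j=1$ whenever $v_i(A_j)>|A_i|$) is exactly the paper's Case 2 argument, and the residual configuration you isolate ($p_i=0$, $p_j=1$, $|A_i|=\min_{B\in\cLD}|B_i|+1\le|A_j|=\min_{B\in\cLD}|B_j|$, $v_i(A_j)=|A_i|$, together with the agent $k$ with $|A_k|=|A_i|-1$ obtained from \textbf{(B-EXC)}) is precisely the paper's Case 3. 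However, the step you flag as ``the main obstacle'' is not a technicality to be smoothed out later --- it is the crux of the paper's proof, and your sketch of it would fail in the order you propose. You cannot begin by \emph{releasing an item from $A_j$}: to move $e\in A_j$ into some agent $\ell$'s bundle while keeping the allocation clean, you need $v_\ell(A_\ell\cup\{e\})=v_\ell(A_\ell)+1$, and the only agent about whose valuation of $A_j$ anything is known is $i$ itself, for whom $v_i(A_j)=v_i(A_i)$ holds with \emph{equality} --- exactly the situation in which matroid augmentation guarantees nothing. So there is no certified first move out of $A_j$, and no routing ``in the spirit of Lemma~\ref{lem:transfer}'' can start there.

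The paper resolves this by creating the deficit at $i$ \emph{first} and letting the item leave $A_j$ \emph{last}. Take a clean Lorenz dominating allocation $B$ with $\sv(B)=\sv(A)-\chi_i+\chi_k$ (your agent $k$, and note $k\ne j$), and apply Lemma~\ref{lem:transfer} to $B$ and $A$ with the roles interchanged: this produces clean allocations $C^0=A,C^1,\dots,C^r$ with $k^0=k$, $k^r=i$, and $\sv(C^t)=\sv(A)+\chi_k-\chi_{k^t}$, i.e., an explicit clean realization of a one-unit transfer from $i$ down to $k$. Now split into two cases. If $k^t=j$ for some $t$, truncate the chain there: $C^t$ is clean with $\sv(C^t)=\sv(A)+\chi_k-\chi_j$, which is already the $j\to k$ transfer you want. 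If $j$ never appears on the chain, then $C^r_j=A_j$ while $v_i(C^r_i)=|A_i|-1<|A_i|=v_i(C^r_j)$, so matroid augmentation \emph{now} applies and lets you move an item of $C^r_j$ to $i$ cleanly, again yielding a clean allocation with size vector $\sv(A)+\chi_k-\chi_j$. In either case your balancing argument finishes the job: since $|A_k|=|A_i|-1\le|A_j|-1$, the resulting allocation either strictly Lorenz dominates $A$ or lies in $\cLD$ with $j$'s bundle of size $\min_{B\in\cLD}|B_j|-1$, contradicting $A\in\cLD$ or $p_j=1$ respectively. With this reordering of the moves your outline closes the gap and becomes the paper's proof; as written, it does not.
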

\begin{proof}
Let $(A,p)$ be the clean allocation with a subsidy returned by the SE mechanism.
To obtain a contradiction, suppose that $i$ envies $j$, i.e., $v_i(A_i)+p_i<v_i(A_j)+p_j$.
We separately consider three cases: $v_i(A_i)>v_i(A_j)$, $v_i(A_i)<v_i(A_j)$, and $v_i(A_i)=v_i(A_j)$.

\smallskip
\noindent\textbf{Case 1.} Suppose that $v_i(A_i)>v_i(A_j)$.
This case is impossible since $v_i(A_i)+p_i<v_i(A_j)+p_j$ and $p_i,p_j\in\{0,1\}$.

\smallskip
\noindent\textbf{Case 2.} Suppose that $v_i(A_i)<v_i(A_j)$.
By the matroid augmentation property, there exists an item $e\in A_j$ such that $v_i(A_i\cup\{e\})=v_i(A_i)+1$.
Let $B$ be the allocation that is obtained from $A$ by moving item $e$ from $j$ to $i$.
As $|A_i|<|A_j|$ and $A$ Lorenz dominates $B$, we have that $|B_i|=|A_i|+1=|A_j|=|B_j|+1$. 
Hence, $B$ is also a clean Lorenz dominating allocation.
Thus, $\max_{C\in\cLD}|C_i|=|A_i|+1$ and $\min_{C\in\cLD}|C_j|=|A_j|-1$, which implies $p_i=1$ and $p_j=0$ by Proposition~\ref{prop:01}.
This contradicts the assumption that $i$ envies $j$ because $v_i(A_i)+p_i=|A_i|+1=|A_j|=v_i(A_j)+p_j$. 

\smallskip
\noindent\textbf{Case 3.} Suppose that $v_i(A_i)=v_i(A_j)$.
Note that $|A_i|=v_i(A_i)=v_i(A_j) \leq |A_j|$. 
As $v_i(A_i)+p_i<v_i(A_j)+p_j$, the subsidies must be $p_i=0$ and $p_j=1$.
Then $|A_j|=\min_{A'\in\cLD}|A'_j|<\max_{k\in N}|A_k|$. 
We observe that $\min_{A'\in\cLD}|A'_i|=|A_i|-1$, because otherwise $\min_{A'\in\cLD}|A'_i|=|A_i|=\max_{k\in N}|A_k| > |A_j|$, and hence $v_i(A_i)=|A_i|>|A_j| \geq v_i(A_j)$, which is a contradiction.
As $\cLD$ is an M-convex set, there is a clean Lorenz dominating allocation $B$ such that $\sv(B)=\sv(A)-\chi_i+\chi_k$ for some $k\in N$.
As $A$ and $B$ are both in $\cLD$ and hence $\sv^\uparrow(A)=\sv^\uparrow(B)$, we have that $|B_i|+1=|A_i|=|B_k|=|A_k|+1$.
Note that $k\ne j$ because $|A_i|\le |A_j|$ by $v_i(A_i)=v_i(A_j)$.

By applying Lemma~\ref{lem:transfer} to $B$ and $A$ (note that the roles are interchanged), 
we obtain a sequence of clean allocations $C^0,C^1,\dots,C^r$ with $k^0,k^1,\dots,k^r$ and $e^1,\dots,e^r$ where $C^0=A$, $k^0=k$, $k^r=i$, and $\sv(C^r)=\sv(C^0)+\chi_{k^0}-\chi_{k^r}=\sv(B)$.
If $k^t=j$ for some $t$, then $\sv(C^t)=\sv(A)+\chi_k-\chi_j$ and $|A_k|+1=|A_i|\leq|A_j|$, and hence $C^t$ is a clean Lorenz dominating allocation with $|C^t_j|<|A_j|$.
This implies $p_j=0$, which is a contradiction.
Otherwise, i.e., $k^t\ne j$ for all $t$, we have $C^r_j=A_j$. Then, there exists an element $e\in C^r_j$ such that $v_i(C^r_i\cup\{e\})=|A_i|$ by $v_i(C^r_i)=|A_i|-1<|A_i|=v_i(C^r_j)$ and the matroid augmentation property. Thus, the allocation that is obtained from $C^r$ by transferring $e$ from $j$ to $i$ is a clean Lorenz dominating allocation. This also implies that $p_j=0$, which is a contradiction.
\end{proof}

\subsection{Truthfulness of the SE mechanism}
Finally, we prove that the SE mechanism is truthful.
In a setting without money, Babaioff et al.~\cite{babaioff2020fair} proved that a mechanism is truthful if it satisfies \emph{monotonicity} and \emph{strong faithfulness}. We introduce two similar properties that can be applied to a setting with subsidies.

\begin{definition}[\smonotone]
We say that a mechanism is \emph{\smonotone} if the utility of an agent is monotone with respect to the restriction, i.e.,
\[v_i(A_{i})+p_{i}\le v_i(A'_{i})+p'_{i}\] 
for any valuation function $(v_1,\dots,v_n)$, agent $i\in N$, and subsets  $X\subseteq Y\subseteq M$, where $(A,p)$ and $(A',p')$ are the allocations with subsidies returned by the mechanism when agents report $P=(v_1,\ldots,\restr{v_i}{X},\ldots,v_n)$ and $P'=(v_1,\ldots,\restr{v_i}{Y},\ldots,v_n)$, respectively.
\end{definition}

\begin{definition}[\sfaithful]
We say that a mechanism is \emph{\sfaithful} if 
\[v_i(X)+p_i\le v_i(A'_i)+p'_i\] 
for any valuation function $(v_1,\dots,v_n)$, agent $i\in N$, and subset $X\subseteq A_i$, where $(A,p)$ and $(A',p')$ are the allocations with subsidies returned by the mechanism when agents report $P=(v_1,\ldots,v_i,\ldots,v_n)$ and $P'=(v_1,\ldots,\restr{v_i}{X},\ldots,v_n)$, respectively.
\end{definition}

\begin{lemma}\label{lem:ss}
A mechanism is truthful if it is \smonotone and \sfaithful.
\end{lemma}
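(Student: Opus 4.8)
The plan is to show that every misreport is weakly dominated by truthful reporting, by routing it through a restriction of the true valuation. I would fix agent $i$ with true valuation $v_i$, fix the others' reports $v_{-i}$, and take an arbitrary report $v'_i\in V$; writing $(A,p)$ and $(A',p')$ for the outputs on $(v_{-i},v_i)$ and $(v_{-i},v'_i)$, the goal is $v_i(A_i)+p_i\ge v_i(A'_i)+p'_i$. The easy half is a monotone climb: for any $X\subseteq M$ we have $\restr{v_i}{M}=v_i$, so applying subsidized-monotonicity to the nested pair $X\subseteq M$ shows that reporting $\restr{v_i}{X}$ yields utility at most the truthful utility $v_i(A_i)+p_i$. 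Hence it suffices to produce a single set $X$ such that reporting $\restr{v_i}{X}$ already secures utility at least $v_i(A'_i)+p'_i$.

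The key idea is the choice of $X$. Using that the output allocation is clean under the reported valuation (as it is for our mechanism), $A'_i$ is $v'_i$-independent, so every subset of $A'_i$ is $v'_i$-independent. I would then let $X\subseteq A'_i$ be a maximal $v_i$-independent subset, so that $v_i(X)=|X|=v_i(A'_i)$ by the matroid property and $X$ is independent under both $v_i$ (by choice) and $v'_i$ (as a subset of $A'_i$). The payoff of this choice is that the two restrictions coincide as set functions, $\restr{v_i}{X}(Z)=|X\cap Z|=\restr{v'_i}{X}(Z)$ for every $Z\subseteq M$, so the profiles $(v_{-i},\restr{v_i}{X})$ and $(v_{-i},\restr{v'_i}{X})$ are literally identical; call the common output $(\hat A,\hat p)$. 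This identification is exactly what lets the two hypotheses -- one read from the side of the true valuation $v_i$, the other from the side of the report $v'_i$ -- act on a single profile.

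To finish I would sandwich $(\hat A,\hat p)$. Reading subsidized-faithfulness with $v'_i$ in the role of the agent's valuation and $X\subseteq A'_i$ as the chosen subset gives $|X|+p'_i=v'_i(X)+p'_i\le v'_i(\hat A_i)+\hat p_i$. Since $\hat A$ is clean for the restricted valuation we get $\hat A_i\subseteq X$, hence $v_i(\hat A_i)=|\hat A_i|=v'_i(\hat A_i)$, and the inequality turns into $v_i(A'_i)+p'_i\le v_i(\hat A_i)+\hat p_i$, which is the $v_i$-utility of reporting $\restr{v_i}{X}$; the monotone climb of the first paragraph then yields $v_i(A_i)+p_i\ge v_i(\hat A_i)+\hat p_i\ge v_i(A'_i)+p'_i$. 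I expect the main obstacle to be precisely this middle step: arguing that $X$ can be taken independent in both matroids simultaneously -- so that the two restricted profiles genuinely coincide -- and that $\hat A_i$ is valued identically by $v_i$ and $v'_i$. This is where the matroidal domain and the cleanness of outputs are indispensable, whereas the two applications of the hypotheses are then immediate.
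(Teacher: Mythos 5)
Your proof is correct and follows essentially the same route as the paper's: you choose $X$ to be a maximum-size $v_i$-independent subset of $A'_i$, apply subsidized-faithfulness on the misreported profile, use the identity $\restr{v_i}{X}=\restr{v'_i}{X}$ to transfer to a single restricted profile, and finish with subsidized-monotonicity against $\restr{v_i}{M}=v_i$. If anything, you are more explicit than the paper about where cleanness of the mechanism's output and the matroidal domain are needed (for $v'_i(X)=|X|$, for $\hat A_i\subseteq X$, and for the two restrictions to coincide), which the paper's proof uses implicitly.
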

\begin{proof}
Let $(A,p)$ be the allocation with a subsidy returned by the mechanism when agents report $(v_1,\ldots,v_{i},\ldots,v_n)$ truthfully; let $v'_i$ be the matroidal valuation function for $i$ such that $v'_i \neq v_i$; let $(A',p')$ be the allocation with a subsidy returned by the mechanism when agents report $(v_1,\ldots,v'_{i},\ldots,v_n)$. We will show that agent $i$ will not benefit from misreporting $v'_i$, i.e.,
\begin{align}\label{eq:truthful}
v_i(A'_i) +p'_i \leq v_i(A_i) + p_i.
\end{align}

To this end, let $X$ be a minimum subset of $A'_i$ such that $v_i(X)=v_i(A'_i)$; equivalently, $X$ is a maximum-size independent set contained in $A'_i$ under the valuation $v_i$. 
Let $(A'',p'')$ be the allocation with the subsidy returned by the mechanism when agents report $(v_1,\ldots,\restr{v'_i}{X},\ldots,v_n)$. 
By the \sfaithful~property, 
\begin{align}\label{eq:faithful}
v_i(A'_i)+p'_i=|X|+p'_i=v'_i(X)+p'_i \leq v'_i(A''_i) + p''_i=|A''_i|+p''_i.
\end{align}
Further, since $\restr{v'_{i}}{X}= \restr{v_{i}}{X}$, agent $i$ obtains $A''_i$ together with $p''_i$ when $i$ reports $\restr{v_i}{X}$. 
Now, by the \smonotone~property, 
\begin{align}
|A''_i|+p''_i = v_i(A''_i) + p''_i \leq  v_i(A_i) +p_i,  
\end{align}
which, together with \eqref{eq:faithful}, implies inequality \eqref{eq:truthful}. 
\end{proof}

Below, we prove that the SE mechanism is \smonotone and \sfaithful.
\begin{lemma}\label{lem:smonotone}
The SE mechanism is \smonotone. 
\end{lemma}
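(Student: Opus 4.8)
The plan is to prove that the SE mechanism is \smonotone, i.e., that an agent's utility cannot decrease when her reported valuation is enlarged from a restriction $\restr{v_i}{X}$ to $\restr{v_i}{Y}$ with $X\subseteq Y$. As usual it suffices to handle the single-element case $Y=X\cup\{a\}$ and induct. Write $(A,p)$ and $(A',p')$ for the SE outputs on $P=(\dots,\restr{v_i}{X},\dots)$ and $P'=(\dots,\restr{v_i}{Y},\dots)$; I must establish $v_i(A_i)+p_i\le v_i(A'_i)+p'_i$. The key simplification is that by Proposition~\ref{prop:SE_uu} the utility is independent of the choice of allocation in Step~1, so I may describe each utility in closed form: for profile $P$ the utility of agent $i$ equals $\min_{B\in\cLD[P]}|B_i|$ if $i$ always attains the maximum bundle size (no subsidy), and equals $\min_{B\in\cLD[P]}|B_i|+1$ otherwise (subsidy~$1$). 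The same holds for $P'$. So the whole lemma reduces to comparing these two closed-form quantities.

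The main engine is Lemma~\ref{lem:min_monotone}, which gives exactly $\min_{B\in\cLD[P]}|B_i|\le\min_{B'\in\cLD[P']}|B'_i|$; call these quantities $\mu$ and $\mu'$, so $\mu\le\mu'$. The only danger is the subsidy term: enlarging the report could conceivably cause agent $i$ to lose the subsidy (go from utility $\mu+1$ under $P$ to utility $\mu'$ under $P'$), and a priori we only know $\mu'\ge\mu$, not $\mu'\ge\mu+1$. I would therefore split into cases according to whether $i$ receives subsidy under $P$. If $i$ gets no subsidy under $P$, then her utility under $P$ is $\mu$, and her utility under $P'$ is at least $\mu'\ge\mu$, so we are done immediately. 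The delicate case is when $i$ \emph{does} receive subsidy under $P$, so her $P$-utility is $\mu+1$; here I must show the $P'$-utility is at least $\mu+1$, i.e., either $\mu'\ge\mu+1$, or ($\mu'=\mu$ and $i$ still receives subsidy under $P'$).

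This is where Lemma~\ref{lem:maximp} does the real work and is the main obstacle. Receiving subsidy under $P$ means there is a clean Lorenz dominating allocation $A\in\cLD[P]$ with $|A_i|=\min_{B\in\cLD[P]}|B_i|=\mu$ and $|A_i|<\max_{j}|A_j|$. Take such an $A$ and set $X':=A_i\subseteq X$; then $A\in\cLD[(\dots,\restr{v_i}{X'},\dots)]$ with $A_i=X'$ attaining a strictly-below-maximum size. Applying Lemma~\ref{lem:maximp} with the pair $(X',Y)$ (note $X'\subseteq X\subseteq Y$) shows that for \emph{every} $A'\in\cLD[P']$ we have $|A'_i|<\max_j|A'_j|$, i.e., agent $i$ never holds a maximum bundle under $P'$. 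Consequently, in the SE mechanism on $P'$ agent $i$ satisfies condition (ii) and thus receives subsidy $1$, giving her $P'$-utility $\mu'+1\ge\mu+1$, which is what we need. Combining the two cases yields $v_i(A_i)+p_i\le v_i(A'_i)+p'_i$ and completes the induction. I would present the argument in exactly this order—reduce to one added item, express utilities via Proposition~\ref{prop:SE_uu}, invoke Lemma~\ref{lem:min_monotone} for the baseline inequality, and then use Lemma~\ref{lem:maximp} to rescue the subsidy term in the hard case.
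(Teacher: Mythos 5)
Your overall scaffolding matches the paper's proof: reduce to $Y=X\cup\{a\}$, use Proposition~\ref{prop:SE_uu} to pin down utilities, and invoke Lemma~\ref{lem:min_monotone} to get $\mu\le\mu'$; the easy case ($p_i=0$) is handled identically. The gap is in your hard case. You apply Lemma~\ref{lem:maximp} in the reverse direction: as stated, its \emph{hypothesis} concerns the larger restriction (some $A'\in\cLD[P']$ with $A'_i=Y$ and $|A'_i|<\max_{j}|A'_j|$) and its \emph{conclusion} concerns the smaller one, whereas you have a below-maximum fact for the smaller restriction $X'$ and want to export it to the larger restriction $Y$. The lemma does not license this inference, and you could not even verify its hypothesis (there is no reason any allocation in $\cLD[P']$ gives agent $i$ all of $Y$). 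Worse, the intermediate claim you are trying to establish --- that an agent who receives subsidy under $P$ can never hold a maximum bundle under $P'$, and hence keeps her subsidy --- is false. Take $N=\{1,2\}$, $M=\{e_1,e_2\}$, let $v_2$ be binary additive with value $1$ on both items, let $v_1(\{e_1\})=1$, and take $X=\emptyset$, $Y=\{e_1\}$. Under $P$ the unique clean Lorenz dominating allocation is $(\emptyset,\{e_1,e_2\})$, so agent $1$ is below the maximum and receives the subsidy (utility $1$); under $P'$ it is $(\{e_1\},\{e_2\})$, so agent $1$ holds a maximum bundle and receives no subsidy (utility $1$). The subsidy \emph{can} be lost; monotonicity survives only because $\mu'=\mu+1$ compensates for it.

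This is precisely why the paper proves the weaker, correct statement in the hard case: it shows that $|A_i|=|A'_i|$ (i.e., $\mu=\mu'$) and $(p_i,p'_i)=(1,0)$ cannot hold \emph{simultaneously}, allowing the subsidy to disappear exactly when the minimum bundle size strictly increases. Its argument is a direct exchange argument on the M-convex set of size vectors of clean allocations: comparing $\sv^\uparrow(A'')$, $\sv^\uparrow(A)$, $\sv^\uparrow(A')$, where $A''$ is $A'$ with $a$ removed from $i$'s bundle, it produces a clean allocation $C$ under $P$ with $C\in\cLD[P]$ and $|C_i|<|A_i|$, contradicting $|A_i|=\min_{B\in\cLD[P]}|B_i|$. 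Your proof would need to be repaired along these lines; Lemma~\ref{lem:maximp} is the right tool for the \sfaithful property (Lemma~\ref{lem:sfaithful}), not for \smonotone.
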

\begin{proof}
Consider any agent $i$ and sets $X \subseteq Y \subseteq M$. 
It is sufficient to prove the case when $|T|=|S|+1$ since we can apply induction.
Let $a$ be the item that is in $Y$ but not in $X$ (i.e., $Y=X\cup\{a\}$); let $(A,p)$ and $(A',p')$ be the allocations with the subsidy returned by the mechanism when agents report $P=(v_1,\ldots,\restr{v_i}{X},\ldots,v_n)$ and $P'=(v_1,\ldots,\restr{v_i}{Y},\ldots,v_n)$, respectively. 
Without loss of generality, we assume that, $|A_i|=\min_{B\in\cLD[P]}|B_i|$ and $|A'_i|=\min_{B'\in\cLD[P']}|B'_i|$ (recall that the utility of every agent does not change with the choice of clean Lorenz dominating allocation).
By Lemma~\ref{lem:min_monotone}, we have $|A_i|\le |A'_i|$.

If $|A_i|<|A'_i|$ or $(p_i,p'_i)\ne (1,0)$, then
\begin{align}
v_i(A_i)+p_i=|A_i|+p_i\le |A'_i|+p'_i=v_i(A'_i)+p'_i.
\end{align}
Hence, we only need to prove that $|A_i|=|A'_i|$ and $(p_i,p'_i)=(1,0)$ cannot be satisfied simultaneously.
To the contrary, suppose that $|A_i|=|A'_i|$ and $(p_i,p'_i)=(1,0)$.
Let $\gamma=|A_i|$. 

If $a$ is not in $A'_i$, the allocation $A'$ must be in $\cLD[P]$ and hence $\sv^\uparrow(A')=\sv^\uparrow(A)$.
By $(p_i,p'_i)=(1,0)$, we have $\gamma<\max_{j\in N}|A_j|=\max_{j\in N}|A'_j|=\gamma$, a contradiction.
Thus, we assume that $a\in A'_i$.

Let $A''=(A'_1,\dots,A'_i\setminus\{a\},\dots,A'_n)$ and $s=|\{j\in N\mid |A'_j|=\gamma\}|$.
Then, $A$ Lorenz dominates $A''$ since $A''$ is clean under $P$.
Also, $A'$ Lorenz dominates $A$ since $A$ is clean under $P'$.
Hence, we obtain
\begin{align}
\sum_{j=1}^{k}\sv^\uparrow(A'')_j\le \sum_{j=1}^{k}\sv^\uparrow(A)_j\le\sum_{j=1}^{k}\sv^\uparrow(A')_j
\end{align}
for any $k=1,\dots,n$.
For $j=1,2,\dots,n-s$, we have $\sv^\uparrow(A')_j=\sv^\uparrow(A'')_j$, and hence $\sv^\uparrow(A)_j=\sv^\uparrow(A')_j=\sv^\uparrow(A'')_j$.
Note that $\sv^\uparrow(A)_n>\gamma$ by $p_i=1$.
As $A'$ allocates at least as many items as $A$, we have $\sv^\uparrow(A)=\sv^\uparrow(A')-\chi_{n-s+1}+\chi_n$ (see Figure~\ref{fig:smonotone}).
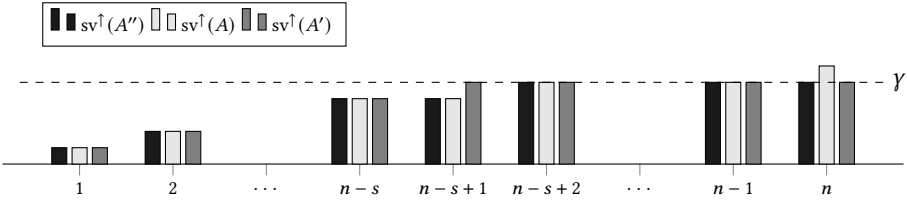
\begin{figure}
    \centering
    \begin{tikzpicture}[scale=1,a/.style={draw=black,fill=black!10},b/.style={draw=black,fill=black!50},c/.style={draw=black,fill=black!90}]
    \begin{axis}[
    ybar, 
    bar width=0.2cm,
    enlarge x limits=0.05,
    width=0.9\textwidth,
    height=3cm,
    major grid style={draw=black},
    enlarge y limits={value=.1,upper},
    ymin=0, ymax=6,
    axis x line*=bottom,
    axis y line*=right,
    hide y axis,axis line style={shorten >=-15pt, shorten <=-15pt},
    xticklabels={$1\strut$,$2\strut$,$\cdots\strut$,$n-s\strut$,$n-s+1\strut$, $n-s+2\strut$, $\cdots\strut$, $n-1\strut$, $n\strut$},
    xtick={1,2,3,4,5,6,7,8,9},
    xticklabel style = {font=\scriptsize,yshift=0.5ex},
    legend style={at={(0,1.5)},anchor=north west,legend columns=-1,font=\scriptsize},
    ]
    
    \addplot [c] coordinates {
       (1,1) (2,2) (4,4) (5,4) (6,5) (8,5) (9,5)};
    \addlegendentry{$\sv^\uparrow(A'')$};
    \addplot [a]  coordinates {
       (1,1) (2,2) (4,4) (5,4) (6,5) (8,5) (9,6) } coordinate (g);
    \addlegendentry{$\sv^\uparrow(A)$};
    \addplot [b] coordinates {
       (1,1) (2,2) (4,4) (5,5) (6,5) (8,5) (9,5)};
    \addlegendentry{$\sv^\uparrow(A')$};
    \end{axis}
    \draw[dashed] (${5/6}*(g)+(-9,0)$)--++(11.5,0) node[right] {$\gamma$};
\end{tikzpicture}
    \caption{Example of $\sv^\uparrow(A'')$, $\sv^\uparrow(A)$, $\sv^\uparrow(A')$. The $i$th block corresponds to the $i$th entries in the vectors, and the height of each bar represents the value.}
    \label{fig:smonotone}
\end{figure}
Thus, $A$ and $A'$ allocate the same number of items, i.e., $\sum_{j\in N}|A_j|=\sum_{j\in N}|A'_j|$.
By the exchange property for $A''$ and $A$ with $|A''_0|>|A_0|$ (recall that \(\{(|B_0|,|B_1|,\dots,|B_n|)\mid \text{$B$ is clean under $P$}\}\) is M-convex by \eqref{eq:M0}), there exists a clean allocation $C$ under $P$ such that $\sv(C)=\sv(A'')+\chi_\ell$ for some $\ell\in N$. 
As $A$ Lorenz dominates $C$ (and hence $\sum_{j=1}^{n-1}\sv^\uparrow(A)_j\ge\sum_{j=1}^{n-1}\sv^\uparrow(C)_j$), $\sum_{j\in N}|A_j|=\sum_{j\in N}|C_j|$, and $\sv^\uparrow(A)_n>\gamma$, we have $|C_\ell|=\gamma+1$ and $\ell\ne i$ (recall that $|A''_i|=\gamma-1$ and $|A''_j|\le\gamma$ for all $j\in N$).
Then, $\sv^\uparrow(C)=\sv^\uparrow(A)$ and hence $C\in\cLD[P]$.
This implies that $|A_i|=\gamma>\gamma-1=|A''_i|=|C_i|\ge \min_{B\in\cLD[P]}|B_i|$, which contradicts the assumption that $|A_i|=\min_{B\in\cLD[P]}|B_i|$.
\end{proof}

\begin{lemma}\label{lem:sfaithful}
The SE mechanism is \sfaithful.
\end{lemma}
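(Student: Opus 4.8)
The plan is to reduce the desired inequality $v_i(X)+p_i\le v_i(A'_i)+p'_i$ to a comparison between the two subsidies alone. First I would trivialize the valuation terms. Since $X\subseteq A_i$ and $A$ is clean under $P$, the bundle $A_i$ is independent for $v_i$, so $v_i(X)=|X|$; and since every allocation in $\cLD[P']$ is clean under $P'=(v_1,\dots,\restr{v_i}{X},\dots,v_n)$, agent $i$'s bundle there is always a subset of $X$, whence $v_i(A'_i)=|A'_i|$. By Proposition~\ref{prop:SE_uu} I may fix $A'$ to be a clean Lorenz dominating allocation under $P'$ with $|A'_i|=\min_{B'\in\cLD[P']}|B'_i|$ without changing $v_i(A'_i)+p'_i$. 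The first bullet of Lemma~\ref{lem:sfaithful0} gives some $B\in\cLD[P']$ with $B_i=X$, so together with $A'_i\subseteq X$ this yields $\max_{B'\in\cLD[P']}|B'_i|=|X|$; Proposition~\ref{prop:01} then forces $\min_{B'\in\cLD[P']}|B'_i|=|X|-\delta$ for some $\delta\in\{0,1\}$. Writing the subsidy rule for the chosen $A'$ (its condition (i) holds by construction), the goal becomes exactly $p_i+\delta\le p'_i$.

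It then suffices to certify $p'_i=1$ in the borderline cases and to rule out $p_i=\delta=1$. The case $\delta=1$ I would handle uniformly: here $|A'_i|=|X|-1$, while the global maximum bundle size $M'\coloneqq\max_{j\in N}|A'_j|$ is invariant across $\cLD[P']$ and satisfies $M'\ge\max_{B'\in\cLD[P']}|B'_i|=|X|$ (witnessed by $B$). Hence $|A'_i|=|X|-1<M'$, so condition (ii) of the subsidy rule holds and $p'_i=1$. Moreover, if $p_i=1$ then condition (i) of the rule gives $|A_i|=\min_{B\in\cLD[P]}|B_i|$, so the second bullet of Lemma~\ref{lem:sfaithful0} forces $B'_i=X$ for every $B'\in\cLD[P']$, i.e.\ $\delta=0$; contrapositively $\delta=1$ implies $p_i=0$. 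This simultaneously disposes of the impossible combination and verifies $p_i+\delta=1\le 1=p'_i$ whenever $\delta=1$.

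The remaining case is $\delta=0$ with $p_i=1$, where I must show $p'_i=1$, equivalently $|X|<M'$ (note $\delta=0$ means every $A'\in\cLD[P']$ has $|A'_i|=|X|$). I expect this to be the main obstacle, and it is where Lemma~\ref{lem:maximp} does the work. Let $\tilde P=(v_1,\dots,\restr{v_i}{A_i},\dots,v_n)$; as in the proof of Lemma~\ref{lem:sfaithful0} one has $A\in\cLD[\tilde P]$, so $A$ itself witnesses an allocation of $\cLD[\tilde P]$ with $i$'s bundle equal to $A_i$ and, since $p_i=1$ gives $|A_i|<\max_{j\in N}|A_j|$, strictly smaller than the largest bundle. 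Applying Lemma~\ref{lem:maximp} to the restriction from $\tilde P$ (the set $A_i$) down to $P'$ (the set $X\subseteq A_i$) yields $|A'_i|<\max_{j\in N}|A'_j|=M'$ for every $A'\in\cLD[P']$; combined with $|A'_i|=|X|$ this gives $|X|<M'$ and hence $p'_i=1\ge p_i$. The only delicate points are checking that the hypotheses of Lemma~\ref{lem:maximp} are met with the two restrictions oriented correctly and that the value $\max_{j\in N}|A_j|$ read off from $A$ coincides with the invariant maximum over $\cLD[\tilde P]$; both follow from $A\in\cLD[\tilde P]$.
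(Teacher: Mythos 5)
Your proof is correct and follows essentially the same route as the paper's: the crucial case $p_i=1$ is handled exactly as in the paper, using the second bullet of Lemma~\ref{lem:sfaithful0} (via condition (i) of the subsidy rule) to pin down agent $i$'s bundle as $X$, and Lemma~\ref{lem:maximp} applied through the intermediate profile $(v_1,\dots,\restr{v_i}{A_i},\dots,v_n)$ to certify condition (ii) and hence $p'_i=1$. Your reorganization—reducing the claim to $p_i+\delta\le p'_i$ via Proposition~\ref{prop:SE_uu} and Proposition~\ref{prop:01} and then splitting on $(p_i,\delta)$—is a cosmetic repackaging of the paper's case split on $p_i$, not a different argument.
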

\begin{proof}
Consider any agent $i$. Let $(A,p)$ be the allocation with a subsidy returned by the SE mechanism when agents report $P=(v_1,\ldots,v_{i},\ldots,v_n)$ and fix any $X \subseteq A_i$.
Let $(A',p')$ and $(A'',p'')$ be the allocations with the subsidies returned by the SE mechanism when agents report $P'=(v_1,\ldots,\restr{v_i}{A_i},\ldots,v_n)$ and $P''=(v_1,\ldots,\restr{v_i}{X},\ldots,v_n)$, respectively.
If $p_i=0$, we have $v_i(X)+p_i=|X|\le \max_{B''\in\cLD[P'']}|B''_i|\le v_i(A''_i)+p''_i$ by Lemma~\ref{lem:sfaithful0}.

In what follows, we assume $p_i=1$, i.e., $|A_i|=\min_{B\in\cLD[P]}|B_i|<\max_{j\in N}|A_j|$.
By Lemma~\ref{lem:sfaithful0}, $B'_i=A_i$ for any $B'\in\cLD[P']$; thus, $A'_i=A_i$ and $|A'_i|=|A_i|<\max_{j\in N}|A_j|=\max_{j\in N}|A'_j|$ (the last equality holds since $\sv^\uparrow(A)=\sv^\uparrow(A')$ by $A,A'\in\cLD[P']$).
By Lemma~\ref{lem:sfaithful0}, $B''_i=X$ for any $B''\in\cLD[P'']$, and in particular $A''_i=X$.
Also, by Lemma~\ref{lem:maximp}, $|A''_i|<\max_{j\in N}|A''_j|$.
Therefore, $p'_i=1$ and hence $v_i(X)+p_i=|X|+1=v_i(A''_i)+p''_i$.
\end{proof}

By combining Lemmas~\ref{lem:smonotone} and~\ref{lem:sfaithful} and using Lemma~\ref{lem:ss}, we obtain the desired truthfulness.
\begin{lemma}\label{lem:SE_truthful}
The SE mechanism is truthful.
\end{lemma}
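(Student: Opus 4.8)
The plan is to establish truthfulness of the SE mechanism by invoking the abstract criterion of Lemma~\ref{lem:ss}: any mechanism that is both \smonotone and \sfaithful is truthful. Since both of these properties have been verified for the SE mechanism in the two immediately preceding lemmas, the proof is essentially a one-line appeal to the combination of results already in hand.

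Concretely, the proof proceeds as follows. First I would cite Lemma~\ref{lem:smonotone}, which asserts that the SE mechanism is \smonotone, meaning that restricting an agent's reported valuation to a smaller set can only (weakly) decrease that agent's final utility. Second I would cite Lemma~\ref{lem:sfaithful}, which asserts that the SE mechanism is \sfaithful, meaning that reporting the restriction $\restr{v_i}{X}$ of one's valuation to a subset $X$ of one's allocated bundle yields utility at least $v_i(X)+p_i$. With both properties established, I would apply Lemma~\ref{lem:ss} directly to conclude that the SE mechanism is truthful.

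There is essentially no obstacle remaining at this stage, since all the substantive work has been offloaded into the three prior lemmas. The conceptual heart of the argument lives in Lemma~\ref{lem:ss}, whose proof decomposes a potential deviation $v_i'$ into two controlled steps: passing from $v_i'$ to its restriction $\restr{v_i'}{X}$ on a maximum-size independent set $X\subseteq A_i'$ (handled by \sfaithful/ness, using that $v_i'(X)=|X|=v_i(X)$ since $X$ is independent under both valuations), and then passing from $\restr{v_i}{X}$ up to the truthful report $v_i$ (handled by \smonotone/icity). The only thing to verify here is that the hypotheses of Lemma~\ref{lem:ss} are met, namely that all valuations involved lie in the matroidal class $V$, which holds because restrictions of matroidal valuations are matroidal and the deviating report $v_i'$ is matroidal by assumption. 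Thus the statement follows immediately by chaining Lemmas~\ref{lem:smonotone}, \ref{lem:sfaithful}, and~\ref{lem:ss}.
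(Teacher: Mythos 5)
Your proposal is correct and matches the paper's own proof exactly: the paper likewise derives Lemma~\ref{lem:SE_truthful} by combining Lemma~\ref{lem:smonotone} and Lemma~\ref{lem:sfaithful} and then invoking Lemma~\ref{lem:ss}. Your supplementary remarks on how Lemma~\ref{lem:ss} decomposes a deviation (via a maximum-size independent subset $X\subseteq A'_i$ and the fact that restrictions of matroidal valuations remain matroidal) are consistent with the paper's argument and require no changes.
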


\subsection{Without the free-disposal assumption}\label{sec:withoutfreedisposal}
In Theorem~\ref{thm:matroidrank:truthful}, 
we presented the so-called SE mechanism, which simultaneously attains truthfulness, utilitarian optimality, and
envy-freeness with each agent receiving a subsidy of $0$ or $1$.
In the mechanism's output, however, the allocation may not be complete (i.e., some items may be left unallocated).
In some situations, this disposal of items is not allowed. 
For example, when we consider a shift scheduling at a call center or a production factory, we must allocate all shifts to employees in order not to stop the operation, even if no one finds value in that time slot. 

It is ideal if there is a mechanism that outputs a complete allocation while attaining the nice properties of the SE mechanism (i.e., truthfulness, utilitarian optimality, and envy-freeness with each agent receiving a subsidy of at most $1$).
However, as shown below, the amount of subsidy needed to pay can be proportional to the number of items if we aim to attain truthfulness, envy-freeness, and completeness, while using Lorenz dominating allocations.
\begin{theorem}\label{thm:impossibility2}
If a truthful mechanism is envy-free, and returns a complete Lorenz dominating allocation, it requires a subsidy of $\Omega(m)$, even when there are two agents with binary additive valuations.
\end{theorem}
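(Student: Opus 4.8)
The plan is to reduce the per-profile subsidy to a graph-theoretic quantity, observe that this quantity is small in any single profile, and then show that it is \emph{truthfulness across a carefully linked family of profiles}---not envy within one profile---that forces the blow-up. Concretely, I would first invoke Theorem~\ref{thm:envy-graph}: for the returned allocation $A$, the least subsidy making $(A,p)$ envy-free equals, for each agent, the maximum weight of a path leaving that agent in $G_A$, and for two agents this is just $\max\{0,\,v_i(A_j)-v_i(A_i)\}$. Since a complete Lorenz dominating allocation is utilitarian optimal (by the Proposition with $k=n$), swapping the two bundles cannot raise welfare, so $G_A$ has no positive cycle and the single-profile subsidy is always small. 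Hence, as already foreshadowed by Example~\ref{ex:lying}, the $\Omega(m)$ bound must be extracted from the truthfulness constraints linking \emph{different} profiles, where a subsidy handed to the ``loser'' in one profile can be grabbed by a misreporting agent in another.

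The heart of the construction is a \emph{ladder} of $\Theta(m)$ profiles $P^{(0)},P^{(1)},\dots,P^{(m)}$ on two binary-additive agents, where $P^{(t)}$ and $P^{(t+1)}$ differ only in one agent's reported value of a single item. I would choose the valuations so that, because disposal is forbidden, at each rung the completeness requirement together with Lorenz dominance \emph{forces} one contested item to cross from one agent's bundle to the other's, exactly as in the one-item instance of Example~\ref{ex:lying} but embedded one step further along the ladder. Writing $s^{(t)}$ for the subsidy the distinguished agent must receive in $P^{(t)}$, the truthfulness constraint between consecutive rungs, applied through Theorem~\ref{thm:envy-graph}, should read $s^{(t)} \ge s^{(t-1)} + 1$: the agent who is truthful in $P^{(t)}$ can deviate to the report that realizes $P^{(t-1)}$, thereby inheriting the subsidy $s^{(t-1)}$ already committed there while additionally standing as the freshly created envious loser of $P^{(t)}$ (whose bundle she does not truly value). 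Telescoping these inequalities from $t=0$ up to $t=m$ yields $s^{(m)} = \Omega(m)$.

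The step I expect to be the main obstacle is ruling out \emph{all} tie-breaking. Because completeness creates several complete Lorenz dominating allocations at each rung (the unvalued or already-saturating items may be dumped on either agent), a clever mechanism might try to select, at every rung, the allocation and the envy-free subsidy that short-circuit the telescoping. To close this gap I would exploit the structural fact, recorded just above the SE mechanism, that for matroidal (hence binary additive) valuations every $A\in\cLD$ shares the same sorted size vector $\sv^\uparrow(A)$; I would engineer the ladder so that this forced sorted vector pins each agent's attained value at every rung, leaving the mechanism no freedom in \emph{who} is the loser and \emph{how much} value each party holds, only in the identity of the dumped items. With the utilities thus rigidified, the per-rung increment $s^{(t)}\ge s^{(t-1)}+1$ holds for every selection and every admissible $p$, and the cumulative lower bound survives. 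The remaining bookkeeping---verifying at each rung that the forced allocation indeed exhibits the claimed unit of genuine envy and that the deviation is feasible within the binary-additive domain---is routine once the ladder's valuations are fixed.
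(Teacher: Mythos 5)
Your opening diagnosis---that per-profile envy-freeness costs only $O(1)$ here, so the $\Omega(m)$ must be extracted from truthfulness constraints linking different profiles---is sound and matches the paper's strategy in spirit. But the heart of your construction, the per-rung inequality $s^{(t)}\ge s^{(t-1)}+1$, is exactly the step that fails, and your proposed repair does not address the failure. Unwinding the truthfulness constraint, that inequality requires the deviating agent's \emph{true} value of the bundle she would receive under $P^{(t-1)}$ to exceed her true value of the bundle she receives under $P^{(t)}$ by at least $1$. The only items that can create this gap are the ones she values at rung $t$ but not at rung $t-1$; under the \emph{reported} profile $P^{(t-1)}$ these items are worthless to both agents. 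Completeness forces them to be allocated, but neither Lorenz dominance nor utilitarian optimality constrains \emph{to whom}: a mechanism that always hands all reported-worthless items to the non-deviating agent makes the deviation gain nonpositive at every rung (the Lorenz rebalancing even makes the truthful bundle weakly larger), so truthfulness yields only $s^{(t)}\ge s^{(t-1)}-O(1)$ and the telescoping collapses. Your fix via the uniqueness of $\sv^\uparrow(A)$ across $\cLD$ pins down only the attained \emph{values}; you concede that the mechanism retains freedom over ``the identity of the dumped items,'' and that is precisely the freedom that defeats the argument---it is not routine bookkeeping. Note also that chaining Example~\ref{ex:lying}-style constraints cannot help by itself: there the deviator's bundle gain is $0$, which gives $s^{(t)}\ge s^{(t-1)}$, not $s^{(t)}\ge s^{(t-1)}+1$.

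The paper's proof needs only two profiles, and the crucial ingredient is that the second profile is constructed \emph{adaptively}, after observing the mechanism's choices on the first. With $m=6k$: in $P_1$ both agents report wanting $\{e_1,\dots,e_{2k}\}$, so Lorenz dominance gives each agent exactly $k$ valued items, while completeness forces the $4k$ worthless items to be allocated; by pigeonhole some agent receives at least $2k$ of them, and by symmetry one may relabel so that agent $1$ gets a bundle containing $\{e_1,\dots,e_k,e_{2k+1},\dots,e_{4k}\}$. Only now is $P_2$ defined, tailored to this outcome: agent $1$'s true valuation wants $\{e_1,\dots,e_{4k}\}$, i.e.\ the contested items \emph{plus the junk she was handed in $P_1$}. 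Truthful in $P_2$ she obtains value $2k$ (Lorenz dominance sends $e_1,\dots,e_{2k}$ to agent $2$), whereas misreporting $\{e_1,\dots,e_{2k}\}$ reproduces $P_1$ and is worth $3k$ to her, forcing a subsidy of at least $k=\Omega(m)$ in a single step. This pigeonhole-plus-adaptivity is the ingredient your ladder lacks: it neutralizes adversarial tie-breaking by letting the observed junk placement dictate which agent deviates and what her true valuation is, rather than attempting to control that placement in advance across $\Theta(m)$ pre-engineered profiles.
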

\begin{proof}
The following proof is inspired by the proof of \cite[Theorem 5]{halpern2020fair}, which shows the nonexistence of a mechanism without money that satisfies truthfulness, EFX, and completeness, while returning a Lorenz dominating allocation. Fix a positive integer $k$; let $N=\{1,2\}$ and $M=\{e_1,\dots,e_{6k}\}$.
Consider two profiles, $P_1$ and $P_2$.
In $P_1$, both agents report that they want the $2k$ items $\{e_1,\dots,e_{2k}\}$.
In this case, each agent receives exactly $k$ items by Lorenz dominance; additionally, one agent receives at least half of the items in $\{e_{2k+1},\dots,e_{6k}\}$.
Without loss of generality, agent $1$ gets a set of items including $\{e_1,\dots,e_k,e_{2k+1},\dots,e_{4k}\}$.
In $P_2$, agent $1$ reports that she wants $\{e_1,\dots,e_{4k}\}$ and agent $2$ reports that she wants $\{e_1,\dots,e_{2k}\}$.
In this case, the items $e_{1},\dots,e_{2k}$ are allocated to agent $2$ and the items $e_{2k+1},\dots,e_{4k}$ are allocated to agent $1$.
If $P_2$ is the true valuation profile, agent $1$ has an incentive to report that she wants $\{e_1,\dots,e_{2k}\}$ unless she gets at least $k~(=\Omega(m))$ subsidy because she receives $2k$ valuable items in $P_2$ but $3k$ in $P_1$.
\end{proof}

Here, we provide an algorithm that returns a Lorenz dominating allocation and simultaneously attains completeness and envy-freeness with each agent receiving a subsidy of at most $1$ while tolerating a violation of truthfulness. 
Note that an allocation that is both complete and envy-freeable with a subsidy of at most $1$ (where $1$ is the maximum marginal value) for each agent has been shown to exist only for additive valuations \cite{halpern2020fair, Brustle2020}. 
The following theorem guarantees the existence of such an allocation for matroidal valuations, which are non-additive.

\begin{theorem}\label{thm:matroidrank:complete}
For matroidal valuations, 
there is a polynomial-time algorithm for computing an allocation with a subsidy that is complete, utilitarian optimal, and envy-free, with each agent receiving a subsidy of $0$ or $1$ and the total subsidy being at most $n-1$.
\end{theorem}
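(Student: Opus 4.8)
The plan is to start from the output of the SE mechanism and \emph{complete} it by handing out the leftover items, without disturbing envy-freeness and while keeping every subsidy in $\{0,1\}$. Concretely, let $A\in\cLD[P]$ be the clean Lorenz dominating core chosen in Step~1 of the SE mechanism, and let $A_0=M\setminus\bigcup_{i}A_i$ be the unallocated items. The first observation is that utilitarian optimality forces every $e\in A_0$ to have zero marginal value for \emph{every} agent's own bundle: if $v_k(A_k\cup\{e\})>v_k(A_k)$ for some $k$, moving $e$ to $k$ would strictly raise the social welfare, contradicting utilitarian optimality of $A$. Hence any distribution of $A_0$ among the agents leaves each $v_i(\hat A_i)=|A_i|$ unchanged, so the completed allocation $\hat A$ has exactly the same valuation vector as $A$; in particular $\hat A$ is still Lorenz dominating and utilitarian optimal.

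The second, and pleasantly free, ingredient is that \emph{any} utilitarian optimal allocation is envy-freeable. Indeed, a positive-weight cycle $i_1\to\cdots\to i_k\to i_1$ in $G_{\hat A}$ would give $\sum_t v_{i_t}(\hat A_{i_{t+1}})>\sum_t v_{i_t}(\hat A_{i_t})$, so cyclically reassigning bundle $\hat A_{i_{t+1}}$ to agent $i_t$ would strictly increase the welfare---impossible for a utilitarian optimal $\hat A$. Thus $G_{\hat A}$ has no positive-weight cycle, and by Theorem~\ref{thm:envy-graph} \emph{every} completion $\hat A$ is envy-freeable. The entire difficulty is therefore reduced to the \emph{amount} of subsidy: by Theorem~\ref{thm:envy-graph} one must choose the distribution of $A_0$ so that the maximum weight of a path in $G_{\hat A}$ is at most $1$, after which setting $p_i$ to the maximum weight of a path starting at $i$ (a nonnegative integer, at most $1$) yields subsidies in $\{0,1\}$.

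The main obstacle is precisely this path bound. Adding a worthless item $e$ to agent $i$ raises only the arcs into $i$, namely $w(j,i)=v_j(\hat A_i)-v_j(\hat A_j)$ increases by $r_j(\hat A_i\cup\{e\})-r_j(\hat A_i)\in\{0,1\}$, while every other arc is untouched; a careless choice (for instance, dumping all of $A_0$ on a largest-bundle agent) can push some arc weight to $2$ and thereby create a path of weight exceeding $1$. I would process $A_0$ one item at a time, maintaining the invariant that $G$ carries a $\{0,1\}$-valued potential (a subsidy) with at least one zero entry and no path of weight above $1$, and prove by a matroid-exchange argument in the spirit of Lemma~\ref{lem:transfer} that a valid recipient always exists: one can always assign $e$ to some agent $i$ (note $e\in\mathrm{cl}_i(A_i)$ for all $i$ by worthlessness) so that $e$ is spanned by $A_i$ in the matroid of every agent currently envying $i$ with no slack. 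If no recipient were admissible, the resulting chain of strict marginals would yield an augmenting exchange that either uses $e$ to raise the welfare (contradicting optimality of the core) or produces a clean Lorenz dominating allocation violating the minimality and matroidal M-convexity (Lemma~\ref{lem:M}, Lemma~\ref{lem:Mdmin}) exploited by the SE mechanism. Since a zero-subsidy agent is preserved throughout, the total subsidy stays at most $n-1$.

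Finally, each step is implementable with a matroid rank oracle: identifying the agents that $e$ would make envious is a rank computation, the admissible recipient is located by the augmenting-path search underlying matroid intersection, and the final subsidies are read off as longest-path weights in the positive-cycle-free graph $G_{\hat A}$, all in polynomial time. This produces the claimed complete, utilitarian optimal, envy-free allocation with each subsidy in $\{0,1\}$ and total at most $n-1$; the EFX guarantee of the resulting allocation is established separately in the appendix.
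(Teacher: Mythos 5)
Your overall architecture coincides with the paper's SEC algorithm: start from a clean Lorenz dominating allocation $A$, observe that utilitarian optimality forces every unallocated item to have zero marginal value for every agent's current bundle (so any completion preserves the valuation vector, Lorenz dominance, and utilitarian optimality), note that utilitarian optimality rules out positive-weight cycles in the envy graph (so every completion is envy-freeable by Theorem~\ref{thm:envy-graph}), and reduce the whole problem to keeping the maximum path weight in the envy graph at most $1$. All of these observations are correct, and they are exactly the easy parts of the paper's proof.

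The gap is in the one claim that carries the theorem: that each leftover item $e$ can always be given to some agent $i$ so that no positive-weight path ends at $i$ in $G_{A^{i,e}}$. You assert this ``by a matroid-exchange argument in the spirit of Lemma~\ref{lem:transfer},'' and your fallback dichotomy --- that a failure ``either uses $e$ to raise the welfare \dots\ or produces a clean Lorenz dominating allocation violating the minimality and matroidal M-convexity (Lemma~\ref{lem:M}, Lemma~\ref{lem:Mdmin})'' --- is not a proof; the second horn plays no role in any correct argument (M-convexity is not needed at all for the completion step), and its presence indicates the contradiction has not been pinned down. The paper's proof (Lemmas~\ref{lem:complement2} and~\ref{lem:complement}) supplies exactly what is missing, using only utilitarian optimality: follow a chain of candidate recipients (if adding $e$ to the current candidate $i$ creates a positive-weight path ending at $i$, move to that path's initial agent), and show that no agent is ever revisited. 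If some agent were revisited, the traversed paths $P_1,\dots,P_k$ would concatenate into a closed walk $Q$ with $w(Q)=\sum_{t} w(P_t)\ge 0$ in $G_A$ (each $P_t$ has weight $\ge 1$ in the perturbed graph, hence $\ge 0$ in $G_A$, since only the last arc's weight can differ, by at most $1$); since $G_A$ has no positive-weight cycle, every cycle in the decomposition of $Q$, and hence every $P_t$, has weight exactly $0$; then the last arc $(j^*,i^*)$ of any $P_{i^*}$ must satisfy $v_{j^*}(A_{i^*}\cup\{e\})-v_{j^*}(A_{i^*})=1$, and cyclically reassigning bundles along the $0$-weight cycle through $(j^*,i^*)$ while giving $A_{i^*}\cup\{e\}$ to $j^*$ increases the utilitarian welfare by exactly $1$, contradicting optimality. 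This termination argument is the concrete ``augmenting exchange'' your sketch gestures at; without it (or an equivalent existence proof for a valid recipient), your proposal does not establish the theorem.
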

We construct the allocation required in the theorem by extending an arbitrary clean Lorenz dominating allocation $A=(A_1, A_2,\dots,A_n)$; that is, we initialize $A$ to be the one computed in Step 1 of the SE mechanism.
By Theorem~\ref{thm:matroidrank:truthful}, $A$ then maximizes the utilitarian social welfare $\sum_{i\in N}v_i(A_i)$ and is envy-freeable with a subsidy of at most 1 for each agent.
Therefore, we can obtain a desired allocation if we can allocate items in $M\setminus \bigcup_{i\in N}A_i$
while preserving the utilitarian optimality and the bound 1 of the subsidy for each agent.
Note that, for binary additive valuations, this task is trivial because an item unallocated in $A$ 
has a value of $0$ for all agents by the utilitarian optimality; hence allocating it to any agent does not cause envy. 
However, a similar argument does not apply to matroidal valuations, as shown by the following example.

\begin{example}
Let $N=\{1,2,3\}$ and $M=\{e_1,e_2,e_3,e_4,e_5\}$ and define the matroidal valuations $v_1,v_2,v_3$ by
$v_1(X)=|X\cap\{e_1,e_2\}|$, $v_2(X)=|X\cap \{e_1,e_2,e_3\}|$, and $v_3(X)=|X\cap \{e_1,e_2,e_3\}|+\min\bigl\{1,\,|X\cap \{e_4,e_5\}|\bigr\}$.
Then $A=(A_1,A_2,A_3)=\bigl\{\{e_1,e_2\},\{e_3\},\{e_4\}\bigr\}$ is a clean Lorenz dominating allocation.
It is not difficult to see that we cannot increase the utility of any agent by allocating $e_5$, which is currently unallocated.
However, if we allocate $e_5$ to agent $2$, the amount $w(3,2)=v_3(A_2)-v_3(A_3)$ 
of envy agent $3$ has towards $2$ changes from $0$ to $1$. 
To eliminate envy for the resultant allocation 
$A'=(A'_1,A'_2,A'_3)=\bigl\{\{e_1,e_2\},\{e_3,e_5\},\{e_4\}\bigr\}$, we need to pay at least one dollar to agent 2 because her envy towards agent 1 is $v_2(A'_1)-v_2(A'_2)=1$.
Then $v_3(A'_2)+p_2\geq 3$ while $v_3(A'_3)=1$, and to eliminate the envy of agent $3$ towards agent 2, 
we must pay at least $2$ dollars to agent $3$.
\end{example}

Here, we present the {\em subsidized egalitarian with completion} (SEC) algorithm, which extends any clean Lorenz dominating allocation to a complete allocation while preserving the property that each agent requires at most 1 subsidy.
Recall that, as defined in Section~\ref{sec:model}, the envy graph $G_A$ for an allocation $A$ is 
a complete directed graph with a node set $N$ in which the arc weights represent the amounts of envies
with respect to $A$. Since matroidal valuations are integer-valued, each arc weight is an integer.
\begin{tcolorbox}[title=Subsidized Egalitarian with Completion, left=0mm]
\begin{enumerate}[label=\textbf{Step \arabic*.},leftmargin=*]
    \item Allocate items according to an arbitrarily chosen $A \in \cLD$.
    \item For each unallocated item $e\in M\setminus\bigcup_{i\in N}A_i$, do the following.
\begin{itemize}
\setlength{\leftskip}{3mm}
	\item[(a)] Take an agent $i$ arbitrarily.
	\item[(b)] Let $A^{i,e}\coloneqq(A_1,\dots,A_i\cup\{e\},\dots,A_n)$.
If $G_{A^{i,e}}$ has a positive-weight path ending at $i$, then take such a path 
$P_i$ arbitrarily and update $i$ by the initial agent of $P_i$ and go to (b). Otherwise, go to (c).
	\item[(c)] Update $A\gets A^{i,e}$ (i.e., $A_i\gets A_i\cup\{e\}$).
\end{itemize}
    \item Give $1$ subsidy to each agent $i\in N$ such that the envy graph $G_{A}$ has a path of weight $1$ starting at $i$.
\end{enumerate}
\end{tcolorbox}
From the description, even the finite termination of the SEC algorithm is not obvious.
We will show this later and here provide the conditions preserved in the algorithm.
\begin{lemma}\label{lem:complement2}
The following conditions hold throughout the SEC algorithm.
\begin{itemize}
	\item[\rm (i)] $A=(A_1,\dots,A_n)$ is utilitarian optimal.
	\item[\rm (ii)] $G_{A}$ has neither a path of weight more than $1$ nor a positive-weight cycle.
\end{itemize}
\end{lemma}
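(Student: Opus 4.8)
The plan is to prove both conditions by induction on the number of times the allocation variable $A$ is overwritten in Step~2(c), since those are the only moments at which $A$ changes. For the base case, the initial $A$ is a clean Lorenz dominating allocation, hence utilitarian optimal, giving (i); and by Theorem~\ref{thm:matroidrank:truthful} this $A$ is envy-freeable with a subsidy of at most $1$ for each agent, so Theorem~\ref{thm:envy-graph} (with $q=1$) yields (ii). For the inductive step I would fix the current committed allocation $A$ (which satisfies (i) and (ii) by hypothesis) and the unallocated item $e$ under processing, keeping $A$ frozen throughout the inner loop so that only the tentative target $i$ moves; a commit then replaces $A$ by $A^{i,e}=(A_1,\dots,A_i\cup\{e\},\dots,A_n)$ for the agent $i$ at which Step~2(b) found \emph{no} positive-weight path ending at $i$. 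Finite termination of the inner loop is a separate matter and is not needed for this lemma.

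The observation driving everything is that optimality of $A$ forces every unallocated item to be redundant for every agent: if $v_j(A_j\cup\{e\})>v_j(A_j)$ for some $j$, then moving $e$ into $A_j$ would push the utilitarian welfare above the optimum. In particular $v_i(A_i\cup\{e\})=v_i(A_i)$, so $A^{i,e}$ has the same (optimal) welfare as $A$ and is again utilitarian optimal, establishing (i). The same redundancy pins down how the envy graph changes: every arc \emph{out} of $i$ is unchanged (its tail valuation is $v_i$ applied to the unaffected quantity $v_i(A_i\cup\{e\})=v_i(A_i)$), every arc not incident to $i$ is unchanged, and the only arcs that move are those \emph{into} $i$, each rising by the matroidal marginal $v_j(A_i\cup\{e\})-v_j(A_i)\in\{0,1\}$.

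For condition (ii) I would treat the two forbidden structures separately. To rule out a positive-weight cycle I use utilitarian optimality of $A^{i,e}$ alone: a cycle $j_1\to\cdots\to j_\ell\to j_1$ of positive envy-weight says $\sum_t v_{j_t}(A^{i,e}_{j_{t+1}})>\sum_t v_{j_t}(A^{i,e}_{j_t})$, so cyclically rotating these pairwise disjoint bundles is a valid reallocation of strictly larger welfare, contradicting optimality. To forbid a path of weight larger than $1$ I use the commit condition together with the fact that only in-arcs to $i$ grew. Take any simple path $R$ in $G_{A^{i,e}}$; it visits $i$ at most once. If $i\notin R$ then $R$ uses only unchanged arcs and $w(R)=w_A(R)\le 1$ by the invariant for $A$. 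Otherwise split $R$ at its unique occurrence of $i$ into a prefix $R_1$ ending at $i$ and a suffix $R_2$ starting at $i$ (either may be the trivial one-vertex path); the commit condition gives $w(R_1)\le 0$, while $R_2$ uses only unchanged arcs so $w(R_2)=w_A(R_2)\le 1$, whence $w(R)\le 1$.

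The part that superficially looks hardest---excluding a positive cycle---turns out to be immediate once optimality is invoked, so the real content lies in two smaller points: recognizing that redundancy of $e$ \emph{freezes every arc out of $i$}, so that only in-arcs change, and the prefix/suffix decomposition at $i$, which is what lets the purely local commit test ``no positive path ending at $i$'' bound a \emph{global} path weight. I expect the main place to be careful is exactly this decomposition, where one must verify the degenerate endpoints ($i$ as the first or last vertex, giving a trivial $R_1$ or $R_2$) and note that, because valuations are integer-valued, ``weight more than $1$'' is the same as ``weight at least $2$,'' which is what makes the tight $0+1$ split decisive.
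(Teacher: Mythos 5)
Your proposal is correct and follows essentially the same route as the paper's proof: the same base case via Theorems~\ref{thm:matroidrank:truthful} and~\ref{thm:envy-graph}, the same key observation that utilitarian optimality forces $v_i(A_i\cup\{e\})=v_i(A_i)$ so that only arcs into $i$ can change, the same cycle-rotation argument against positive-weight cycles, and the same prefix/suffix decomposition of a path at $i$ using the commit condition $w(R_1)\le 0$ and the invariant $w(R_2)\le 1$.
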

\begin{proof}
Just after Step 1, the conditions (i) and (ii) follow from Theorems~\ref{thm:envy-graph} and \ref{thm:matroidrank:truthful}.
Since each $A_i$ is monotone increasing in Step 2, the condition (i) is clearly preserved throughout the algorithm.
Then, $G_{A}$ has no positive-weight cycle, since otherwise we can increase the utilitarian social welfare 
by exchanging bundles along that cycle, in contradiction of (i).

We now show that, the nonexistence of a path of weight greater than $1$ is preserved whenever $A$ is updated.
Let $i\in N$ be an agent who receives an item $e$ in Step 2 (c).
By Step 2 (b), $G_{A^{i,e}}$ has no positive-weight path ending at $i$.
By condition (i), we have $v_{i}(A_{i}\cup\{e\})=v_{i}(A_{i})$.
Then, the weights of the arcs leaving $i$ do not change in $G_{A}$ and $G_{A^{i,e}}$.
Also, it is clear that arcs irrelevant to $i$ do not change their weights.
On the other hand, each $j\in N\setminus\{i\}$ satisfies $v_{j}(A_{i}\cup\{e\})\in\{v_{j}(A_{i}), v_{j}(A_{i})+1\}$.

Take any path $P$ in $G_{A^{i,e}}$. If it does not contain $i$, then its weight is the same as that in $G_{A}$ and is at most 1. 
If a path contains $i$, divide $P$ into $P'$ and $P''$, where the former is a subpath from the initial node to $i$ and 
the latter is a subpath from $i$ to the last node.
Since $G_{A^{i,e}}$ has no positive-weight path ending at $i$, the weight of $P'$ is non-positive. 
Moreover, the weight of $P''$ does not differ in $G_A$ and $G_{A^{i.e}}$, and hence is at most 1 also in $G_{A^{i.e}}$.
Hence, the weight of $P$ is at most 1 in $G_{A^{i.e}}$.  
\end{proof}

By condition (ii) in Lemma~\ref{lem:complement2} and Theorem~\ref{thm:envy-graph}, the allocation with a subsidy returned by the SEC algorithm is 
envy-free, with each agent receiving a subsidy of $0$ or $1$.
Furthermore, there is at least one agent $i\in N$ such that $G_A$ has no path of weight $1$ starting at $i$
(since otherwise there exists a positive-weight cycle in $G_A$, which contradicts (ii)).
Thus, the total subsidy is at most $n-1$.
By the algorithm and condition (i) in Lemma~\ref{lem:complement2}, 
the allocation is complete and utilitarian optimal.
To complete the proof of Theorem~\ref{thm:matroidrank:complete}, we show the following claim,
which is needed to demonstrate that the algorithm does not fall into an infinite loop at Step 2 (b).

\begin{lemma}\label{lem:complement}
In Step 2, for each item $e$, any agent is chosen as $i$ in (b) at most once. 
Hence, (b) is repeated at most $n$ times.
\end{lemma}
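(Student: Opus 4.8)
The plan is to argue by contradiction: assuming some agent is selected as $i$ in (b) twice, I will build a closed walk in the envy graph of the (fixed) current allocation $A$ whose structure forces a welfare-improving modification, contradicting the utilitarian optimality guaranteed by Lemma~\ref{lem:complement2}(i).

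First I would record the facts that make the whole argument go through. While a single item $e$ is being processed, the allocation $A$ is never modified inside the loop (b)---only the tentative allocation $A^{i,e}$ is examined---so $A$ satisfies conditions (i) and (ii) of Lemma~\ref{lem:complement2} throughout. Since $A$ is utilitarian optimal and $e$ is unallocated, $v_i(A_i\cup\{e\})=v_i(A_i)$ for every $i$; hence each $A^{i,e}$ is again utilitarian optimal, and therefore $G_{A^{i,e}}$ has no positive-weight cycle (a positive cycle would give a welfare-improving bundle rotation). Because matroidal valuations are integer valued, every arc weight is an integer, so any positive-weight path or cycle has weight at least $1$. Finally, forming $A^{i,e}$ from $A$ only raises the weights of the arcs entering $i$, each by $v_j(A_i\cup\{e\})-v_j(A_i)\in\{0,1\}$; in particular, for a path that ends at $i$ only its last arc is affected.

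Now suppose agents repeat, and let $i_a=i_b$ ($a<b$) be the first repetition, so that $i_a,\dots,i_{b-1}$ are distinct; let $P_t$ be the positive-weight path from $i_{t+1}$ to $i_t$ selected in $G_{A^{i_t,e}}$, let $u_t$ be the predecessor of $i_t$ on $P_t$, and set $\delta_t:=v_{u_t}(A_{i_t}\cup\{e\})-v_{u_t}(A_{i_t})\in\{0,1\}$. Since $P_t$ is simple and meets $i_t$ only at its end, exactly its last arc is boosted, so $w_A(P_t)=w_{A^{i_t,e}}(P_t)-\delta_t\ge 1-\delta_t\ge 0$. Concatenating $P_{b-1},P_{b-2},\dots,P_a$ yields a closed walk $W$ based at $i_a$ with $w_A(W)=\sum_{t=a}^{b-1}\bigl(w_{A^{i_t,e}}(P_t)-\delta_t\bigr)\ge 0$. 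Decomposing $W$ into simple cycles and using condition (ii), no such cycle is positive, so $w_A(W)\le 0$; this forces $w_A(W)=0$, every cycle to have weight exactly $0$, and every summand to vanish, i.e.\ $w_{A^{i_t,e}}(P_t)=\delta_t=1$ for all $t$.

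The remaining, and main, step is to turn this ``flat'' situation into a contradiction with optimality. The arc $(u_a,i_a)$ (the last arc of $P_a$) lies in one of the weight-$0$ simple cycles $C$ of the decomposition. Rotating bundles along $C$---each agent on $C$ taking the bundle of its successor---yields a valid allocation $A'$ with the same utilitarian welfare as $A$ (the change equals $w_A(C)=0$), hence still utilitarian optimal, and with $A'_{u_a}=A_{i_a}$. But $\delta_a=1$ means $v_{u_a}(A_{i_a}\cup\{e\})=v_{u_a}(A'_{u_a})+1$, so assigning the still-unallocated $e$ to $u_a$ strictly increases the welfare of $A'$, contradicting its optimality. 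Therefore no agent repeats, and since there are $n$ agents, (b) is executed at most $n$ times. I expect the bookkeeping around the boosted last arc, and the passage from ``closed walk of nonnegative weight'' to the welfare-improving rotation via the cycle decomposition, to be the delicate part, since it is exactly what rules out the otherwise-possible flat case.
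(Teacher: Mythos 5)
Your proposal is correct and follows essentially the same route as the paper's proof: concatenate the chosen paths into a closed walk, use integrality and the fact that only the last arc of each path is boosted to show the walk has nonnegative weight in $G_A$, decompose into cycles to force all weights to zero, and then derive a welfare-improving reallocation from the cycle containing the boosted arc, contradicting utilitarian optimality. The only cosmetic difference is that you perform the bundle rotation and the insertion of $e$ in two steps, whereas the paper defines the improved allocation in one step.
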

\begin{proof}
For any agent chosen in Step 2 (b), the weight of a path $P_i$ is at least $1$ in $G_{A^{i,e}}$.
By the argument in the proof of Lemma~\ref{lem:complement2}, then its weight in $G_A$ is at least $0$ (i.e., we have $w(P_i)\geq 0$, where $w$ denotes the weight function with respect to $G_A$).
Moreover, note that $G_A$ has no positive-weight cycle by (ii) in Lemma~\ref{lem:complement2}.

Suppose, to the contrary, that some agent is chosen in Step (b) multiple times.
Without loss of generality, let $1$ be such an agent and suppose that 
$2,3,\dots,k$ appear in this order between the first and the second appearance of $1$.
Then each $P_i~(i=1,2,\dots,k-1)$ is a path from an agent $i+1$ to $i$, and $P_k$ is a path from $1$ to $k$.
By connecting paths $P_i~(i=1,2,\dots,k)$, we can obtain a directed walk $Q$ that starts and ends at $1$.
If some arcs are used in multiple paths, then we replace them with multi-arcs so that each arc is used exactly once in $Q$.
Each multi-arc has the weight same as that of the original arc.
By $w(P_i)\geq 0~(i=1,2,\dots,k)$, we have $w(Q)=\sum_{i=1}^k w(P_i)\geq 0$; for each node, the indegree and outdegree in $Q$ coincide.
Then, the walk $Q$ is partitioned into a family $\mathcal{C}$ of directed cycles,
satisfying $\sum_{C\in \mathcal{C}}w(C)=w(Q)\geq 0$.
Since each path has a non-positive weight in $G_{A}$, every cycle in $\mathcal{C}$ has weight $0$.
Then we obtain $w(P_i)=0~(i=1,2,\dots,k)$. 

Fix any $i^*\in \{1,2,\dots,k\}$ and let $j^*$ be the second last node in $P_{i^*}$; that is, $(j^*,i^*)$ is the last arc in $P_{i^*}$.
Since $P_{i^*}$ has a positive weight in $G_{A^{i^*,e}}$ while $w(P_{i^*})=0$ in $G_A$,
we see that the weight of $(j^*,i^*)$ in $G_{A^{i^*,e}}$ is larger than its weight in $G_{A}$ by $1$,
i.e., $v_{j^*}(A_{i^*}\cup\{e\})-v_{j^*}(A_{i^*})=1$.
Among the cycles in $\mathcal{C}$, let $C^*$ be the one containing $(j^*,i^*)$.
Let us define $A'\coloneqq(A'_1, A'_2,\dots,A'_n)$ as follows:
$A'_{j^*}\coloneqq A_{i^*}+e$, $A'_j\coloneqq A_i~(\forall(j,i)\in C^*:j\neq j^*)$, $A'_j\coloneqq A_j~(\forall j\in N\setminus C^*)$.
Since $C^*$ has weight $0$ in $G_A$, we have $\sum_{(j,i)\in C^*}(v_j(A_i)-v_j(A_j))=0$.
Using this, we obtain
\begin{align*}
\sum_{j\in N}v_j(A'_j)-\sum_{j\in N}v_j(A_j)
=&~\sum_{j\in C^*}\bigl(v_j(A'_j)-v_j(A_j)\bigr)\\
=&\sum_{(j,i)\in C^*}\bigl(v_j(A_i)-v_j(A_j)\bigr)+\bigl(v_{j^*}(A_{i^*}\cup\{e\})-v_{j^*}(A_{i^*})\bigr)
=1.
\end{align*}
Then, the utilitarian social welfare of $A'$ is strictly larger than that of $A$, contradicting condition~(i) in Lemma~\ref{lem:complement2}.
\end{proof}

We now show that the SEC algorithm runs in polynomial time.
Step 1 of this algorithm is the same as that of the SE mechanism and hence can be computed in polynomial time by Lemma~\ref{lem:SE_poly}.
Moreover, Steps 2 and 3 can be computed in polynomial time by the method used by Halpern and Shah \cite{halpern2020fair}, i.e., by applying the Floyd-Warshall algorithm to  the graph obtained by negating all arc weights in the envy graph.

We remark that, as shown in Appendix (Proposition~\ref{prop:EFX}), the allocation returned by the SEC algorithm is EFX.
Thus, we obtain the following result: 
\begin{corollary}
For matroidal valuations, there exists a complete allocation that is utilitarian optimal and EFX.
\end{corollary}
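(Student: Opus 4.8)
The plan is to take the allocation $A$ produced by the SEC algorithm and show that it already witnesses the corollary once we discard the subsidy vector. By Theorem~\ref{thm:matroidrank:complete} the SEC algorithm terminates and outputs a complete, utilitarian optimal allocation $A=(A_1,\dots,A_n)$, so both completeness and utilitarian optimality are inherited for free. It therefore remains only to verify that $A$ (with no subsidy) is EFX; this is precisely the content I would isolate as Proposition~\ref{prop:EFX} and then invoke here.

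For the EFX verification I would work directly from invariant (ii) of Lemma~\ref{lem:complement2}, which guarantees that at termination the envy graph $G_A$ contains no path of weight exceeding $1$. In particular, since a single arc $(i,j)$ is itself a path, every arc weight satisfies $w(i,j)=v_i(A_j)-v_i(A_i)\le 1$, that is, $v_i(A_i)\ge v_i(A_j)-1$ for all $i,j\in N$. Now fix any pair $i,j$ and any good $e\in A_j$ that is relevant to $i$, meaning $v_i(A_j)-v_i(A_j\setminus\{e\})>0$. Because $v_i$ is matroidal, its marginals are $0$ or $1$, so relevance forces $v_i(A_j\setminus\{e\})=v_i(A_j)-1$. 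Combining the two inequalities yields $v_i(A_i)\ge v_i(A_j)-1=v_i(A_j\setminus\{e\})$, which is exactly the EFX condition for the pair $(i,j)$ and the good $e$. Ranging over all pairs and all relevant goods establishes EFX.

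The only genuine subtlety — the step I would flag as the main (conceptual rather than computational) obstacle — is the choice of EFX variant: one must quantify only over goods with strictly positive marginal value to the envier, since for matroidal valuations the ``remove any good, even a useless one'' variant degenerates into full envy-freeness and is unattainable in general. Granting that, the argument reveals that the subsidy bound of $1$ from Theorem~\ref{thm:matroidrank:complete} is no coincidence: in a world of $0/1$ marginals, ``envy at most one unit of value'' and ``envy up to one good'' are literally the same condition, so the envy-path-weight invariant maintained by SEC does double duty. All the heavy lifting — finite termination via Lemma~\ref{lem:complement}, and preservation of both utilitarian optimality and the path-weight bound via Lemma~\ref{lem:complement2} — is already in place, so the corollary follows by assembling these facts.
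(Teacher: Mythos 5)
There is a genuine gap: your argument proves a weaker property than the one the paper calls EFX. The paper's definition (Appendix, ``Envy-free up to any good'') quantifies over \emph{every} good $e \in A_j$, not only over goods with positive marginal value to the envier; your proof deliberately restricts attention to ``relevant'' goods and justifies this by claiming that the all-goods variant ``degenerates into full envy-freeness and is unattainable in general.'' Both halves of that claim are false in this setting. The all-goods variant does not degenerate into envy-freeness: it merely forces every envied bundle to be independent in the envier's matroid --- this is exactly condition ($\star$) in the paper's proof of Proposition~\ref{prop:EFX} --- and the paper proves that this stronger property \emph{is} attained by the SEC output. So the corollary, with the paper's definition of EFX, is not established by your argument.

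Concretely, the three facts you use --- completeness, utilitarian optimality, and the arc-weight bound $w(i,j)\le 1$ from Lemma~\ref{lem:complement2}(ii) --- do not imply the paper's EFX. Take $N=\{1,2\}$, $M=\{e_1,e_2\}$, $v_1(X)=v_2(X)=|X\cap\{e_1\}|$, and $A=(\emptyset,\{e_1,e_2\})$. This allocation is complete, utilitarian optimal, and every arc weight is at most $1$, yet agent $1$ envies agent $2$ and removing the (to agent $1$) worthless good $e_2$ from $A_2$ does not eliminate the envy, so the paper's EFX fails. The SEC algorithm never outputs this allocation, but only because of the redirection rule in Step 2(b), which your proof never invokes: your chain of inequalities would go through verbatim for the bad allocation above. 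To close the gap you need the stronger invariant that the paper maintains inductively: for every positive-weight arc $(i,j)$ in $G_A$, $v_i(A_j)=|A_j|=v_i(A_i)+1$. This holds after Step 1 because clean Lorenz dominating allocations are EFX (Babaioff et al.), and it is preserved at each update precisely because Step 2(b) forbids giving an item to an agent into whom a positive-weight path would then point; the final arc-weight bound alone is too weak to recover it.
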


\section{Superadditive valuations}\label{sec:superadditive}
In this section, we consider a class of valuations that do not possess the substitution property, namely, a class of superadditive valuations. Holmstr\"{o}m \cite{holmstrom:econometrica:1979} proved that
when the set $V$ of valuations satisfies the \emph{convexity} condition, the Groves mechanisms are the only utilitarian optimal and truthful mechanisms. For superadditive valuations,
some instances of the Groves mechanisms, including 
the VCG mechanism, satisfy envy-freeness \cite{papai:scw:2003}.
The class of superadditive valuations also satisfies convexity.
Thus, according to Holmstr\"{o}m \cite{holmstrom:econometrica:1979}, 
the Groves mechanisms are the only utilitarian optimal and truthful
mechanisms for such valuations.

We require that the subsidy for each agent must be non-negative; to fulfill this goal, we can use the following mechanism:
\begin{tcolorbox}[title=VCG with an upfront subsidy $m$, left=0mm]
\begin{enumerate}[label=\textbf{Step \arabic*.},leftmargin=*]
%
\item Allocate items according to an arbitrarily chosen $A^* \in \arg\max_{A} \sum_{j \in N} v_j(A_j)$.
\item Give $m - \bigl(\max_{A} \sum_{j \neq i} v_j(A_j) - \sum_{j\neq i} v_j(A^*_j)\bigr)$ subsidy to each $i \in N$.
\end{enumerate}
\end{tcolorbox}
\begin{theorem}
For superadditive valuations, the VCG with an upfront subsidy $m$ is truthful, utilitarian
optimal, and envy-free, and each subsidy is in $[0,m]$. 
\end{theorem}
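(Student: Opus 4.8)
The plan is to treat this mechanism as the Clarke (pivot) instance of VCG shifted upward by the constant $m$, and to verify the four requirements in increasing order of difficulty. Write $A^*$ for the chosen optimal allocation, $W=\sum_{k\in N}v_k(A^*_k)$ for the optimal utilitarian welfare, and $W_{-i}=\max_{A}\sum_{k\neq i}v_k(A_k)$ for the best welfare attainable by the remaining agents. With this notation the subsidy rule reads $p_i=m-\mathrm{pay}_i$, where $\mathrm{pay}_i:=W_{-i}-\sum_{k\neq i}v_k(A^*_k)$ is exactly the Clarke externality of agent $i$. Utilitarian optimality is immediate from Step~1. For the subsidy bounds I would show $0\le\mathrm{pay}_i\le m$. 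Non-negativity holds because deleting $i$'s bundle from $A^*$ is a feasible allocation for $N\setminus\{i\}$, so $W_{-i}\ge\sum_{k\neq i}v_k(A^*_k)$, giving $p_i\le m$. For the reverse, extending any allocation of $N\setminus\{i\}$ by giving $i$ nothing shows $W\ge W_{-i}$, whence $\mathrm{pay}_i\le W-\sum_{k\neq i}v_k(A^*_k)=v_i(A^*_i)\le v_i(M)\le m$ by monotonicity and the standing assumption $v_i(M)\le m$; this gives $p_i\ge0$.

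Truthfulness follows from the standard Groves argument, which I would spell out. Since
\[
p_i=\sum_{k\neq i}v_k(A^*_k)+\bigl(m-W_{-i}\bigr),
\]
and the term $m-W_{-i}$ depends only on the reports of agents other than $i$, the transfer is of Groves form. Concretely, if $i$ has true valuation $v_i$ but reports $v'_i$, the mechanism selects an allocation $A'$ maximizing $v'_i(\cdot_i)+\sum_{k\neq i}v_k(\cdot_k)$, and $i$'s realized utility equals $v_i(A'_i)+\sum_{k\neq i}v_k(A'_k)+(m-W_{-i})$. As $m-W_{-i}$ is constant in $i$'s report, $i$ wishes to maximize $v_i(A'_i)+\sum_{k\neq i}v_k(A'_k)$ over the reachable allocations, and reporting $v'_i=v_i$ makes the mechanism optimize precisely this objective; hence no misreport helps.

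The main obstacle is envy-freeness, and this is where superadditivity is used. Substituting $p_i$ and $p_j$ into $v_i(A^*_i)+p_i\ge v_i(A^*_j)+p_j$, the constants $m$ and the term $W$ cancel, so the condition is equivalent to
\[
W_{-j}\ \ge\ W_{-i}+v_i(A^*_j)-v_j(A^*_j).
\]
To prove this, I would exhibit an explicit allocation for $N\setminus\{j\}$ witnessing the lower bound on $W_{-j}$: let $\tilde A_i:=A^*_i\cup A^*_j$, $\tilde A_k:=A^*_k$ for $k\neq i,j$, and give $j$ nothing. This is a valid (sub)allocation not using $j$, so $W_{-j}\ge\sum_{k\neq j}v_k(\tilde A_k)$, and superadditivity applied to the disjoint bundles $A^*_i,A^*_j$ yields $v_i(A^*_i\cup A^*_j)\ge v_i(A^*_i)+v_i(A^*_j)$. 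Combining with $W\ge W_{-i}$ from the subsidy step gives
\begin{align}
W_{-j}
&\ \ge\ v_i(A^*_i)+v_i(A^*_j)+\sum_{k\neq i,j}v_k(A^*_k)\\
&\ =\ W+v_i(A^*_j)-v_j(A^*_j)
\ \ge\ W_{-i}+v_i(A^*_j)-v_j(A^*_j),
\end{align}
which is the desired inequality. (This reproduces, for the present setting, the envy-freeness of VCG for superadditive valuations established by P\'apai~\cite{papai:scw:2003}.) I expect the superadditive exchange step together with the bookkeeping reduction to the inequality above to be the only delicate part; the remaining items are either immediate or follow the textbook Groves computation.
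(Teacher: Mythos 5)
Your proposal is correct, and its skeleton is the same as the paper's: you view the mechanism as the Clarke/VCG payment rule shifted upward by the constant $m$, obtain utilitarian optimality directly from Step~1, and derive the bounds $p_i\in[0,m]$ from $0\le \mathrm{pay}_i\le v_i(A^*_i)\le v_i(M)\le m$, which is exactly the paper's bounding argument. The difference is one of self-containedness. The paper disposes of the two remaining properties by citation: truthfulness is the standard Groves argument, and envy-freeness of VCG under superadditive valuations is P\'{a}pai's theorem \cite{papai:scw:2003}; both are then preserved because every agent receives the same constant upfront subsidy $m$. You instead prove both inline: your Groves computation is the textbook one, and your reduction of envy-freeness to $W_{-j}\ge W_{-i}+v_i(A^*_j)-v_j(A^*_j)$, together with the witness allocation $\tilde A$ in which agent $i$ takes $A^*_i\cup A^*_j$ (using superadditivity on the disjoint bundles $A^*_i,A^*_j$ and the inequality $W\ge W_{-i}$), is precisely the content of P\'{a}pai's result, reproduced correctly. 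What each approach buys: the paper's proof is shorter and leans on known results, while yours makes explicit exactly where superadditivity, the disjointness of bundles, monotonicity, and the assumption $v_i(M)\le m$ enter --- valuable, since the envy-freeness step is the only part that is not standard VCG machinery.
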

\begin{proof}
By definition, the resulting allocation is utilitarian optimal. 
Note that the second term of the subsidy (i.e., 
$\max_{A} \sum_{j \neq i} v_j(A_j) - \sum_{j\neq i} v_j(A^*_j)$) is
equal to the standard VCG payment.
Thus, this mechanism is equivalent to the following
mechanism; first, each agent
obtains an upfront subsidy $m$. Then, items are allocated using the 
standard VCG, where each agent pays the VCG payment from the upfront
subsidy. By distributing the same amount of upfront subsidy for each
agent, the overall mechanism still satisfies envy-freeness and
truthfulness.
Moreover, the standard VCG payment is non-negative and at most
$v_i(A^*_i)$. Since we assume that $v_i(M) \leq m$ holds, the subsidy is
non-negative and at most $m$. 
\end{proof}

We note that, for additive valuations, one can compute a utilitarian optimal allocation in polynomial time.\footnote{This can be done by allocating each item to the agent who likes the most.} Hence, the above mechanism is polynomial-time implementable for a class of additive valuations. However, generally, the problem is NP-hard for superadditive valuations (see, e.g., Proposition 11.5 of \cite{Nisan}). 

Now, can the amount of subsidy for each agent be reduced while achieving envy-freeness and utilitarian optimality? The next theorem shows that the required subsidy for each agent is in fact $m$ even when there are two agents with additive valuations.
\begin{theorem}
\label{thm:impossibility1}
For any $\epsilon>0$,
if a mechanism is envy-free and utilitarian optimal, 
it requires a subsidy of $m-\epsilon$ for each agent,
even when there are only two agents with additive valuations such that the value of each item is at most $1$.
\end{theorem}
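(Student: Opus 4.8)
The plan is to exhibit, for each agent, a family of instances on which utilitarian optimality pins down a single allocation while envy-freeness forces that agent's subsidy to approach $m$. By symmetry it suffices to construct such instances forcing the subsidy of agent $2$; exchanging the roles of the two agents then yields the corresponding instances for agent $1$. We may also assume $\epsilon<m$, since otherwise the claimed bound $m-\epsilon$ is non-positive and is trivially met (subsidies are non-negative).

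Fix $\epsilon\in(0,m)$ and set $\delta=\epsilon/m\in(0,1)$. Consider the instance with $N=\{1,2\}$, $M=\{e_1,\dots,e_m\}$, and additive valuations given on singletons by $v_1(e_k)=1$ and $v_2(e_k)=1-\delta$ for every $k\in[m]$. Each single-item value is at most $1$, so this instance lies in the required class. Since $1>1-\delta$, reassigning any item from agent $1$ to agent $2$ strictly decreases the utilitarian social welfare; hence the \emph{unique} utilitarian optimal allocation is $A=(M,\emptyset)$, and every utilitarian optimal mechanism must output exactly this $A$. This uniqueness is the crucial point: it removes any freedom the mechanism might otherwise use to reduce the subsidy, which is why I use the strict inequality $1>1-\delta$ rather than ties.

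It remains to lower-bound the subsidy of agent $2$. For any subsidy vector $p\in\mathbb{R}_+^n$ making $(A,p)$ envy-free, the envy-freeness inequality for the ordered pair $(2,1)$ reads $v_2(A_2)+p_2\ge v_2(A_1)+p_1$, that is, $p_2\ge v_2(M)-v_2(\emptyset)+p_1=m(1-\delta)+p_1$. Invoking $p_1\ge 0$ gives $p_2\ge m(1-\delta)=m-m\delta=m-\epsilon$. Equivalently, in the language of Theorem~\ref{thm:envy-graph}, the arc $(2,1)$ of $G_A$ has weight $w(2,1)=v_2(A_1)-v_2(A_2)=m(1-\delta)$, so the single-arc path $(2,1)$ from agent $2$ already has weight $m-\epsilon$, and hence any envy-freeing subsidy must satisfy $p_2\ge m-\epsilon$.

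Thus every envy-free and utilitarian optimal mechanism subsidizes agent $2$ by at least $m-\epsilon$ on this instance, and the mirror instance (with $v_1(e_k)=1-\delta$ and $v_2(e_k)=1$) forces the same lower bound on the subsidy of agent $1$. I do not expect a genuine obstacle here; the only two points that need care are (i) arranging that the utilitarian optimum is unique so that the mechanism is compelled to allocate all items to the higher-valuing agent, and (ii) using non-negativity of the other agent's subsidy to turn the single envy inequality into an absolute lower bound on $p_2$.
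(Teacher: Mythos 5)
Your proposal is correct and follows essentially the same route as the paper: the identical instance $v_1(X)=|X|$, $v_2(X)=(1-\epsilon/m)|X|$, whose unique utilitarian optimal allocation $(M,\emptyset)$ forces $p_2\ge m-\epsilon$ via the envy-freeness inequality. Your additional remarks (the mirror instance for agent $1$, the trivial case $\epsilon\ge m$, and the explicit use of $p_1\ge 0$) are just careful elaborations of the same argument.
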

\begin{proof}
Suppose that there are two agents $N=\{1,2\}$ with valuation functions $v_1(X)=|X|$ and $v_2(X)=(1-\epsilon/m)|X|$ for each $X\subseteq M$.
Then, the unique utilitarian optimal allocation is $A=(M,\emptyset)$, and the mechanism must pay $m-\epsilon$ subsidy to agent $2$.
\end{proof}

\section{General monotone valuations}\label{sec:general}
In this section, we consider a class of monotone valuations. 
P\'{a}pai \cite{papai:scw:2003} showed that for general monotone valuations,
no instance of the Groves mechanisms \cite{groves:econometrica:1973}
satisfies envy-freeness; hence, for monotone valuations, there exists no mechanism satisfying utilitarian optimality, truthfulness, and envy-freeness. 
This negative result has been strengthened into a class of monotone submodular valuations by Feldman and Lai~\cite{Feldman2012}.\footnote{Specifically, Theorem \ref{thm:impossibility:general} applies to a subclass of monotone submodular valuations called \emph{capacitated valuations} \cite{Feldman2012}.}

\begin{theorem}[Feldman and Lai~\cite{Feldman2012}]\label{thm:impossibility:general}
No mechanism satisfies truthfulness, envy-freeness, and utilitarian optimality, even when all agents have monotone submodular valuations. 
\end{theorem}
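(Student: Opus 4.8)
The plan is to establish impossibility by constructing a single instance on which the three requirements clash, then leveraging truthfulness to propagate a contradiction. Since Theorem~\ref{thm:impossibility:general} is attributed to Feldman and Lai and the footnote pins it to \emph{capacitated} valuations, I would first recall the capacitated construction: each agent $i$ has a capacity $c_i$ and a desired set, with $v_i(X)=\min\{|X\cap D_i|,\,c_i\}$ for a target bundle $D_i\subseteq M$. Such functions are monotone and submodular (they are matroidal in fact, being rank functions of truncated partition matroids), so exhibiting the obstruction within this subclass is strictly stronger than for general submodular valuations. The goal is to choose a small profile, together with a designated deviation, so that utilitarian optimality forces the allocation, envy-freeness forces a constraint on any accompanying subsidy, and truthfulness then forces an inconsistency across two profiles.

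The key steps, in order, would be as follows. First I would fix the number of items and agents (a constant-size instance suffices for an impossibility) and specify two profiles $P$ and $P'$ that differ only in one agent's report, exactly as in the flavour of Example~\ref{ex:lying} and Theorem~\ref{thm:impossibility2}, but engineered using capacitated valuations so that the clash is unavoidable regardless of how large a subsidy is permitted. Second, using utilitarian optimality I would pin down (up to symmetry) the bundle each agent must receive under each profile, exploiting that capacities create a tension: maximizing $\sum_i v_i(A_i)$ cannot simultaneously give two agents the commonly desired items. Third, I would invoke envy-freeness via Theorem~\ref{thm:envy-graph}: for the chosen allocation the envy graph $G_A$ must have no positive-weight cycle, which translates into a concrete inequality relating the subsidies $p_i-p_j$ to the envy weights $w(i,j)=v_i(A_j)-v_i(A_i)$. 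Fourth, I would apply truthfulness across $P$ and $P'$: the deviating agent's utility $v_i(A_i)+p_i$ under the truthful report must weakly exceed $v_i(A'_i)+p'_i$ under the deviation, and symmetrically. Combining the envy-freeness inequalities at both profiles with the two truthfulness inequalities yields a strict contradiction that persists no matter how large the subsidies are allowed to grow.

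The main obstacle, as I see it, is the design of the profiles so that the contradiction is robust to \emph{unbounded} subsidy. With limited subsidy one merely bounds a quantity; here the impossibility must hold even with arbitrarily large money, so the argument cannot rest on any inequality of the form ``subsidy $\ge$ something''. Instead the contradiction must come from the fact that envy-freeness constrains only \emph{differences} $p_i-p_j$ (a positive-weight cycle in $G_A$ is unfixable by any subsidy), while truthfulness couples the absolute utilities across two profiles. The delicate part is choosing $D_i$ and $c_i$ so that in one profile a positive-weight envy cycle is forced by utilitarian optimality -- making that profile unenviable for \emph{any} subsidy -- yet the deviation to the other profile is strictly profitable for the manipulating agent, so that no truthful and envy-free utilitarian-optimal mechanism can exist. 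Since this is precisely the content Feldman and Lai establish, in practice I would either cite their construction directly or reproduce a minimal capacitated instance realizing this tension, and then verify monotonicity, submodularity, and the forced positive-weight cycle by direct computation on the small instance.
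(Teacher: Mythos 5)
Your proposal has two fatal gaps, both of which would sink the construction regardless of how the small instance is tuned. First, the valuation class you restrict to is the wrong one: with $v_i(X)=\min\{|X\cap D_i|,c_i\}$ the valuations are, as you yourself note, matroidal --- but then no counterexample can possibly exist in that subclass, because Theorem~\ref{thm:matroidrank:truthful} of this very paper constructs a truthful, envy-free, utilitarian-optimal mechanism (the SE mechanism) for matroidal valuations. Far from being ``strictly stronger,'' exhibiting the obstruction there is impossible. Feldman and Lai's capacitated valuations take the sum of the $c_i$ \emph{highest-valued} items in the bundle, with arbitrary (non-binary) item values; the non-dichotomous marginals are exactly what separates their negative result from the positive result for matroid rank functions. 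Second, your engine for the contradiction --- forcing ``a positive-weight envy cycle by utilitarian optimality'' so that one profile is unenviable for any subsidy --- can never fire. By Theorem~\ref{thm:envy-graph}, an allocation is envy-freeable if and only if its envy graph has no positive-weight cycle, and this holds precisely when the allocation maximizes welfare over all reassignments of its own bundles. A utilitarian optimal allocation maximizes welfare over \emph{all} allocations, in particular over such reassignments, so its envy graph never contains a positive-weight cycle: every utilitarian-optimal allocation is envy-freeable once subsidies are unbounded. Hence no single profile, and no pair of profiles hinged on such a cycle, can yield the impossibility.

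The actual argument (which the paper adopts by citation rather than reproving) runs through the payment structure rather than through any one allocation. The domain of all monotone submodular valuations is convex in Holmstr\"{o}m's sense, so any truthful and utilitarian-optimal mechanism on it must be a Groves mechanism; the impossibility is then established by showing that \emph{no} choice of the Groves pivot terms $h_i(v_{-i})$ can satisfy the envy-freeness inequalities simultaneously on a suitable family of capacitated profiles. In other words, envy-freeness constrains payment differences within each profile, truthfulness (via the Groves characterization) rigidly ties payments across all profiles, and the clash arises only from chaining these constraints over the family --- never from a single unenviable allocation. If you want a self-contained proof, the structure to reproduce is: (i) invoke Holmstr\"{o}m to reduce to Groves mechanisms, and (ii) adapt Feldman--Lai's capacitated instances, with genuinely non-binary item values, to show the Groves payments force envy.
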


If we require completeness instead of utilitarian optimality, we can construct a mechanism that satisfies truthfulness and envy-freeness; in fact, Caragiannis and Ioannidis~\cite{caragiannis2020computing} pointed out that the following mechanism satisfies these properties: allocate all items to the agent $i^*$ that values it the most and pay subsidy of $v_{i^*}(M)$ to every other agent.
Note that the subsidy for each agent is at most $m$ by the assumption that the maximum valuation is bounded by $m$. On the other hand, a complete and envy-free mechanism requires each agent to receive a subsidy of $m$.\footnote{Note that we do not know whether a similar example exists under the assumption that the maximum marginal contribution of each item is $1$ for each agent.}
\begin{theorem}\label{thm:general:complete}
If a mechanism satisfies completeness and envy-freeness,
then it requires a subsidy of $m$ for each agent, even when there are two agents.
\end{theorem}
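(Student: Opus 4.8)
The plan is to exhibit a single two-agent instance on which every complete envy-free allocation-with-subsidy is forced to hand some agent a subsidy of exactly $m$, matching the upper bound obtained by the Caragiannis--Ioannidis mechanism described just above. The guiding intuition is that the additive construction of Theorem~\ref{thm:impossibility1} only reaches $m-\epsilon$ because a mechanism can always rebalance an additive split to dampen envy; to attain the full $m$ I would concentrate all value on a single \emph{indivisible} critical item, so that completeness leaves the mechanism no room to maneuver.

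Concretely, I would take $N=\{1,2\}$, fix a distinguished item $e^\ast\in M$, and give both agents the same monotone valuation $v$ with $v(X)=m$ whenever $e^\ast\in X$ and $v(X)=0$ otherwise. First I would check admissibility: $v$ is monotone and $v(M)=m$, so it belongs to the general class of Section~\ref{sec:general}. Then, for an arbitrary complete allocation $A=(A_1,A_2)$, completeness forces $e^\ast$ into exactly one bundle; say $e^\ast\in A_1$, the other case being symmetric. The agent not holding $e^\ast$ values the other's bundle maximally: $v(A_1)=m$ while $v(A_2)=0$, so the envy-arc weight is $w(2,1)=v(A_1)-v(A_2)=m$. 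Using the envy-freeness constraint $v(A_2)+p_2\ge v(A_1)+p_1$ (equivalently, the path characterization of Theorem~\ref{thm:envy-graph}), I would conclude $p_2\ge m+p_1\ge m$. Hence on this instance every complete envy-free outcome subsidizes the $e^\ast$-less agent by at least $m$, and the symmetric choice of which agent receives $e^\ast$ forces the same bound on the other agent, so the per-agent subsidy of $m$ is unavoidable.

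I expect the only genuine obstacle to be the \emph{design} of the instance rather than its verification: one must ensure that the degrees of freedom in a complete allocation (which subset of items each agent receives) cannot be exploited to reduce envy. This is exactly where the naive candidates fail—both the ``all-or-nothing on the whole set $M$'' valuation and any additive valuation admit a two-sided split that erases (or halves) the envy. Concentrating the entire value $m$ on one non-shareable item $e^\ast$ is precisely what closes every escape route: no matter how the worthless items in $M\setminus\{e^\ast\}$ are distributed, exactly one agent holds $e^\ast$ and is therefore envied by $m$. Once this instance is fixed, the remaining deduction is immediate from Theorem~\ref{thm:envy-graph}, and the proof can mirror the structure of the proof of Theorem~\ref{thm:impossibility1}.
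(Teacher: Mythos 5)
Your proposal is correct and uses essentially the same construction as the paper: both agents share the valuation that assigns value $m$ to any bundle containing a single distinguished item (the paper's $e_1$, your $e^\ast$) and $0$ otherwise, whence completeness places that item with one agent and envy-freeness forces a subsidy of at least $m$ to the other. The paper's proof is a terser version of exactly this argument.
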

\begin{proof}
Suppose that there are $m$ items $M=\{e_1,\dots,e_m\}$ and two agents $N=\{1,2\}$ with valuation functions
\begin{align}
    v_1(X)=v_2(X)=
    \begin{cases}
    m    & \text{if }e_1\in X,\\
    0    & \text{if }e_1\not\in X.
    \end{cases}
\end{align}
By completeness, one agent receives $e_1$ and the other agent does not.
Without loss of generality, we assume that agent $1$ receives $e_1$; then, by envy-freeness, agent $2$ must be subsidized by at least $m$.
\end{proof}


\section{Conclusion}
We have studied the mechanism design for allocating an indivisible resource with a limited amount of subsidy. 
Although it is difficult in general to provide any theoretical guarantees, we identified that a class of matroidal valuations does admit a desired mechanism using a subsidy of at most $1$ for each agent. For superadditive valuations, we showed that there is a truthful mechanism that is both envy-free and utilitarian optimal and that requires a subsidy of $m$ for each agent.

There remain several questions left open. Although our work is primarily concerned with utilitarian optimality as an efficiency criterion, it would be interesting to study the compatibility of truthfulness and fairness with other efficiency requirements, such as completeness and non-wastefulness. 
As a specific question, the VCG mechanism with an upfront subsidy can allocate all items and achieves the bound of $m$ for additive valuations. An obvious direction would be to study whether the amount of $m$ is necessary to achieve a truthful, envy-free, and complete mechanism when agents have additive valuations, assuming that the maximum value of each item is $1$. 
Another important topic is to understand how truthful and complete mechanisms with limited subsidy look like. In the mechanism design without money, Amanatidis et al.~\cite{Amanatidis} characterized such mechanisms with concerning two agents. It remains a challenge to extend the result of \cite{Amanatidis} to the setting with a subsidy.  
We also highlight that, for general monotone valuations beyond matroidal and additive valuations, it remains an open question as to what is the asymptotically minimal amount of subsidies required to make some allocation envy-free. Brustle et al.~\cite{Brustle2020} showed that, for monotone valuations, an envy-free allocation with subsidy $2(n-1)$ for each agent exists, assuming that the maximum marginal contribution of each item is $1$ for each agent; however, it is unclear whether this bound is tight. 

\clearpage

\bibliographystyle{plain}

\begin{thebibliography}{10}

\bibitem{Alkan1991}
Ahmet Alkan, Gabrielle Demange, and David Gale.
\newblock Fair allocation of indivisible goods and criteria of justice.
\newblock {\em Econometrica}, 59(4):1023--1039, 1991.

\bibitem{Amanatidis}
Georgios Amanatidis, Georgios Birmpas, George Christodoulou, and Evangelos
  Markakis.
\newblock Truthful allocation mechanisms without payments: Characterization and
  implications on fairness.
\newblock In {\em Proceedings of the 18th ACM Conference on Economics and
  Computation (EC)}, page 545–562, 2017.

\bibitem{Aragones}
Enriqueta Aragones.
\newblock A derivation of the money rawlsian solution.
\newblock {\em Social Choice and Welfare}, 12:267--276, 1995.

\bibitem{Aziz2020}
Haris Aziz.
\newblock Achieving envy-freeness and equitability with monetary transfers.
\newblock {\em CoRR}, abs/2003.08125, 2020.

\bibitem{babaioff2020fair}
Moshe Babaioff, Tomer Ezra, and Uriel Feige.
\newblock Fair and truthful mechanisms for dichotomous valuations.
\newblock {\em CoRR}, abs/2002.10704, 2020.

\bibitem{benabbou}
Nawal Benabbou, Mithun Chakraborty, Edith Elkind, and Yair Zick.
\newblock Fairness towards groups of agents in the allocation of indivisible
  items.
\newblock In {\em Proceedings of the 28th International Joint Conference on
  Artificial Intelligence (IJCAI)}, pages 95--101, 7 2019.

\bibitem{benabbou2020}
Nawal Benabbou, Ayumi Igarashi, Mithun Chakraborty, and Yair Zick.
\newblock Finding fair and efficient allocations when valuations don’t add
  up.
\newblock In {\em Proceedings of the 13th International Symposium on
  Algorithmic Game Theory (SAGT)}, pages 32--46, 2020.

\bibitem{Brustle2020}
Johannes Brustle, Jack Dippel, Vishnu~V. Narayan, Mashbat Suzuki, and Adrian
  Vetta.
\newblock One dollar each eliminates envy.
\newblock In {\em Proceedings of the 21st ACM Conference on Economics and
  Computation (EC)}, pages 23--39, 2020.

\bibitem{caragiannis2020computing}
Ioannis Caragiannis and Stavros Ioannidis.
\newblock Computing envy-freeable allocations with limited subsidies.
\newblock {\em CoRR}, abs/2002.02789, 2020.

\bibitem{CKM+16a}
Ioannis Caragiannis, David Kurokawa, Herv\'e Moulin, Ariel~D. Procaccia, Nisarg
  Shah, and Junxing Wang.
\newblock The unreasonable fairness of maximum {Nash} welfare.
\newblock In {\em Proceedings of the 17th ACM Conference on Economics and
  Computation (EC)}, pages 305 -- 322, 2016.

\bibitem{clarke}
Edward~H. Clarke.
\newblock Multipart pricing of public goods.
\newblock {\em Public Choice}, 2:19--33, 1971.

\bibitem{cramton:2005}
Peter Cramton, Yoav Shoham, and Richard Steinberg, editors.
\newblock {\em Combinatorial Auctions}.
\newblock MIT Press, 2005.

\bibitem{Edmonds1970}
Jack Edmonds.
\newblock Submodular functions, matroids, and certain polyhedra.
\newblock In {\em Combinatorial Structures and Their Applications (Proceedings
  of the Calgary International Conference on Combinatorial Structures and Their
  Applications 1969)}, pages 69--87. Gordon and Breach, New York, 1970.

\bibitem{Feldman2012}
Michal Feldman and John Lai.
\newblock {Mechanisms and Impossibilities for Truthful, Envy-Free Allocations}.
\newblock In {\em Proceedings of the 5th International Symposium on Algorithmic
  Game Theory (SAGT)}, pages 120--131, 2012.

\bibitem{fife2017laminar}
Tara Fife and James Oxley.
\newblock Laminar matroids.
\newblock {\em European Journal of Combinatorics}, 62:206--216, 2017.

\bibitem{Foley}
Duncan~K. Foley.
\newblock Resource allocation and the public sector.
\newblock {\em Yale Economic Essays}, 7:45--98, 1967.

\bibitem{FM19}
András Frank and Kazuo Murota.
\newblock Discrete decreasing minimization, part i: Base-polyhedra with
  applications in network optimization, 2019.

\bibitem{groves:econometrica:1973}
Theodore Groves.
\newblock Incentives in teams.
\newblock {\em Econometrica}, 41:617--631, 1973.
\newblock Microeconomic Theory.

\bibitem{halpern2020fair}
Daniel Halpern, Ariel~D. Procaccia, Alexandros Psomas, and Nisarg Shah.
\newblock Fair division with binary valuations: One rule to rule them all.
\newblock In {\em Proceedings of the 16th Conference on Web and Internet
  Economics (WINE)}, pages 370--383, 2020.

\bibitem{HalpernShah2019}
Daniel Halpern and Nisarg Shah.
\newblock Fair division with subsidy.
\newblock In {\em Proceedings of the 12th International Symposium on
  Algorithmic Game Theory (SAGT)}, pages 374--389, 2019.

\bibitem{holmstrom:econometrica:1979}
Bengt Holmström.
\newblock Groves' scheme on restricted domains.
\newblock {\em Econometrica}, 47(5):1137--1144, 1979.

\bibitem{Huh:Inventory:2010}
Woonghee~Tim Huh and Ganesh Janakiraman.
\newblock On the optimal policy structure in serial inventory systems with lost
  sales.
\newblock {\em Operational Research}, 58:486--491, 2010.

\bibitem{Klijn}
Flip Klijn.
\newblock An algorithm for envy-free allocations in an economy with indivisible
  objects and money.
\newblock {\em Social Choice and Welfare}, 17:201--215, 2000.

\bibitem{kty:2018}
Fuhito Kojima, Akihisa Tamura, and Makoto Yokoo.
\newblock Designing matching mechanisms under constraints: An approach from
  discrete convex analysis.
\newblock {\em Journal of Economic Theory}, 176:803--833, 2018.

\bibitem{Maskin1987}
Eric~S. Maskin.
\newblock {\em On the Fair Allocation of Indivisible Goods}, pages 341--349.
\newblock Palgrave Macmillan UK, London, 1987.

\bibitem{moulin2004fair}
Herv{\'e} Moulin.
\newblock {\em Fair division and collective welfare}.
\newblock MIT press, 2004.

\bibitem{Murota1998}
Kazuo Murota.
\newblock {Discrete convex analysis}.
\newblock {\em Mathematical Programming}, 83(1-3):313--371, 1998.

\bibitem{Murota:SIAM:2003}
Kazuo Murota.
\newblock {\em Discrete Convex Analysis}.
\newblock Society for Industrial and Applied Mathematics, 2003.

\bibitem{murota:dca:2016}
Kazuo Murota.
\newblock Discrete convex analysis: A tool for economics and game theory.
\newblock {\em Journal of Mechanism and Institution Design}, 1:151--273, 2016.

\bibitem{MSY:auction:2013}
Kazuo Murota, Akiyoshi Shioura, and Zaifu Yang.
\newblock Computing a walrasian equilibrium in iterative auctions with multiple
  differentiated items.
\newblock In {\em The 24th International Symposium on Algorithms and
  Computation (ISAAC), LNCS 8283}, pages 468--478, 2013.

\bibitem{MT:dca:2003}
Kazuo Murota and Akihisa Tamura.
\newblock Application of {M-convex} submodular flow problem to mathematical
  economics.
\newblock {\em Japan Journal of Industrial and Applied Mathematics},
  20:257--277, 2003.

\bibitem{murota:metr:2013}
Kazuo Murota and Yu~Yokoi.
\newblock On the lattice structure of stable allocations in two-sided
  discrete-concave market.
\newblock {\em Mathematics of Operations Research}, 40:460--473, 2015.

\bibitem{Nisan}
Noam Nisan, Tim Roughgarden, \'Eva Tardos, and Vijay~V. Vazirani.
\newblock {\em Algorithmic Game Theory}.
\newblock Cambridge University Press, New York, NY, USA, 2007.

\bibitem{papai:scw:2003}
Szilvia Pápai.
\newblock Groves sealed bid auctions of heterogeneous objects with fair prices.
\newblock {\em Social Choice and Welfare}, 20(3):371--385, 2003.

\bibitem{Schrijver2003}
Alexander Schrijver.
\newblock {\em Combinatorial Optimization}.
\newblock Springer, 2003.

\bibitem{SuRentalHarmony}
Francis~Edward Su.
\newblock Rental harmony: Sperner's lemma in fair division.
\newblock {\em The American Mathematical Monthly}, 106(10):930--942, 1999.

\bibitem{SunYang2003}
Ning Sun and Zaifu Yang.
\newblock A general strategy proof fair allocation mechanism.
\newblock {\em Economics Letters}, 81(1):73--79, 2003.

\bibitem{SY:dca:2013}
Ning Sun and Zaifu Yang.
\newblock Equilibria and indivisibilities: Gross substitutes and complements.
\newblock {\em Econometrica}, 74:1385--1402, 2006.

\bibitem{Svensson1983}
Lars-Gunnar Svensson.
\newblock Large indivisibles: An analysis with respect to price equilibrium and
  fairness.
\newblock {\em Econometrica}, 51(4):939--954, 1983.

\bibitem{Tadenuma1993}
Koichi Tadenuma and William Thomson.
\newblock The fair allocation of an indivisible good when monetary
  compensations are possible.
\newblock {\em Mathematical Social Sciences}, 25(2):117 -- 132, 1993.

\bibitem{Varian}
Hal~R. Varian.
\newblock Equity, envy and efficiency.
\newblock {\em Journal of Economic Theory}, 9:63 -- 91, 1974.

\bibitem{vickrey:1961}
William Vickrey.
\newblock Counter speculation, auctions, and competitive sealed tenders.
\newblock {\em Journal of Finance}, 16:8--37, 1961.

\bibitem{Zipkin:Inventory:2008}
Paul Zipkin.
\newblock On the structure of lost-sales inventory models.
\newblock {\em Operational Research}, 56:937--944, 2008.

\end{thebibliography}

\clearpage
\appendix

\section{Fairness notions for indivisible goods}

The literature on indivisible goods is concerned with relaxed fairness notions, since an envy-free allocation may not exist.
Here we summarize the fairness notions for an allocation $A$ of indivisible goods:
\begin{description}
    \item[Envy-free (EF):] $A$ is called envy-free if $v_i(A_i) \geq v_i(A_j)$ for all agents $i,j\in N$.
    \item[Envy-free up to any good (EFX):] $A$ is called envy-free up to any good if for any agents $i,j \in N$, either $v_i(A_i) \geq v_i(A_j)$ or for every $e \in A_j$ we have $v_i(A_i) \geq v_i(A_j \setminus \{e\})$. 
    \item[Envy-free up to one good (EF1):] $A$ is called envy-free up to one good if for any agents $i,j \in N$, either $v_i(A_i) \geq v_i(A_j)$ or there exists $e \in A_j$ such that $v_i(A_i) \geq v_i(A_j \setminus \{e\})$,
\end{description}

By definition, EF implies EFX, and EFX implies EF1.
In Section~\ref{sec:matroidal}, we provided a mechanism and an algorithm, 
each of which returns an allocation that is envy-freeable with subsidy at most 1 for each agent. 
This property implies that the amount of envy some agent has towards other is at most 1. 
For matroidal valuations, which have dichotomous marginals, this condition immediately implies that the returned allocation is EF1.
Further, as shown by Babaioff et al.~\cite[Proposition 5]{babaioff2020fair}, any clean Lorenz dominating allocation is EFX,%
\footnote{Here we briefly introduce the proof of Babaioff et al.~\cite[Proposition 5]{babaioff2020fair}.
Let $A$ be a clean Lorenz dominating allocation and suppose that $v_i(A_i)<v_i(A_j)$ for some $i,j\in N$. 
As $A$ is clean, we have $v_i(A_i)=|A_i|$. If $|A_j|\leq |A_i|+1$, then we have $v_i(A_j)=|A_j|=v_i(A_i)+1$, and hence the EFX condition holds.
If $|A_j|\geq |A_i|+2$, then by the cleanness of $A$, we have $v_j(A_j)=|A_j|\geq|A_i|+2=v_i(A_i)+2$.
As $v_i(A_i)<v_i(A_j)$, the augmentation property of matroid implies that there is $e\in A_j$ such $v_i(A_i\cup\{e\})=v_i(A_i)+1$
(and $v_j(A_j\setminus\{e\})=v_j(A_j)-1$).
Since $v_j(A_j)=v_i(A_i)+2$, by moving $e$ from $A_j$ to $A_i$, we obtain an allocation that Lorenz dominates $A$, a contradiction.}
and so is the allocation returned by the SE mechanism given in Section~\ref{sec:matroidal}.
We will show in Proposition~\ref{prop:EFX} that the output of the SEC algorithm is also EFX.

\subsection{The SEC algorithm returns an EFX allocation}
\begin{proposition}\label{prop:EFX}
For matroidal valuations, the allocation returned by the SEC algorithm is EFX.
\end{proposition}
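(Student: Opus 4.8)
The plan is to maintain an invariant on the \emph{current} allocation $A$ produced by the SEC algorithm and to show it survives every insertion in Step~2. Throughout the run, Lemma~\ref{lem:complement2} guarantees that $G_A$ has no path of weight exceeding $1$; since matroidal valuations are dichotomous, this gives the standing bound $v_i(A_j)\le v_i(A_i)+1$ for all $i,j\in N$. Under this bound the EFX condition for an ordered pair $(i,j)$ is vacuous unless $v_i(A_j)=v_i(A_i)+1$, i.e.\ the arc $(i,j)$ has weight exactly $1$; and in that case EFX for $(i,j)$ holds precisely when $A_j$ is independent with respect to $v_i$ (so that deleting any single item drops $i$'s value by one). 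Hence I will prove the equivalent invariant
\[(\star)\qquad \text{for every pair } i,j \text{ with } v_i(A_j)=v_i(A_i)+1,\ \ A_j \text{ is independent with respect to } v_i.\]

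First I would verify $(\star)$ for the clean Lorenz dominating allocation chosen in Step~1. This is exactly the assertion that a clean Lorenz dominating allocation is EFX, i.e.\ \cite[Proposition~5]{babaioff2020fair}: if $v_i(A_j)=v_i(A_i)+1$ while $A_j$ were dependent for $i$, then by cleanness $v_i(A_i)=|A_i|$, so $|A_j|\ge v_i(A_j)+1=|A_i|+2$; moving from $j$ to $i$ an item $e\in A_j$ with $v_i(A_i\cup\{e\})=v_i(A_i)+1$ (which exists by the matroid augmentation property and satisfies $v_j(A_j\setminus\{e\})=|A_j|-1$ by cleanness) yields a clean allocation whose size vector strictly Lorenz dominates $\sv(A)$, contradicting Lorenz dominance of $A$.

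The core of the argument is that each insertion in Step~2 preserves $(\star)$. Suppose $e$ is added to agent $\ell$ in Step~2(c), producing $A^{\ell,e}$; note that only the bundle $A_\ell$ changes. By the rule in Step~2(b) we reach (c) only when $G_{A^{\ell,e}}$ has \emph{no} positive-weight path ending at $\ell$, and a single arc $(i,\ell)$ is such a path, so $w(i,\ell)=v_i(A_\ell\cup\{e\})-v_i(A_i)\le 0$ for every $i$; thus \emph{no} pair ending at $\ell$ meets the premise of $(\star)$ in $A^{\ell,e}$. Every pair $(i,j)$ with $j\ne\ell$ is unaffected: if $i\ne\ell$ then $A_i$ and $A_j$ are both unchanged, while if $i=\ell$ then $A_j$ is unchanged and $v_\ell(A_\ell\cup\{e\})=v_\ell(A_\ell)$ because $e$ has zero marginal value for $\ell$ (otherwise condition~(i) of Lemma~\ref{lem:complement2} would be violated). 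In either case $w(i,j)$ and $A_j$ are as before the insertion, so $(\star)$ for these pairs is inherited from the induction hypothesis. Hence $(\star)$ holds for $A^{\ell,e}$, completing the induction; together with the standing bound $v_i(A_j)\le v_i(A_i)+1$ this yields EFX for the final allocation.

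The step I expect to demand the most care is isolating \emph{exactly} which ordered pairs an insertion can affect and checking that the recipient-selection rule neutralizes precisely those: the only pairs whose premise or conclusion can change are those ending at the recipient $\ell$, and Step~2(b) forces every arc into $\ell$ to be non-positive, so no new EFX-threatening pair is ever created. Establishing this ``locality'' cleanly (in particular that $e$ is zero-marginal for $\ell$, so arcs \emph{leaving} $\ell$ are also undisturbed) is what makes the induction go through.
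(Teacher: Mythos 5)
Your proof is correct and follows essentially the same route as the paper's: the same invariant ($\star$) (positive-weight envy arcs point only to bundles that are independent for the envier), the same base case via Babaioff et al.'s Proposition~5 for clean Lorenz dominating allocations, and the same inductive step exploiting that Step~2(b) forbids positive-weight arcs into the recipient while utilitarian optimality makes the inserted item zero-marginal for the recipient, so all other arcs and bundles are untouched. The only cosmetic difference is that you re-derive the base case rather than cite it, and you phrase the invariant via the bound $v_i(A_j)\le v_i(A_i)+1$ from Lemma~\ref{lem:complement2}, which is equivalent to the paper's formulation in terms of positive-weight arcs.
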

\begin{proof}
For matroidal valuations, when $v_i(A_i)<v_i(A_j)$ is assumed, the condition $v_i(A_i)\geq v_i(A_j\setminus\{e\})$ 
holds only if $v_i(A_j\setminus\{e\})=v_i(A_j)-1=v_i(A_i)$. 
Further, $v_i(A_j\setminus\{e\})=v_i(A_j)-1$ holds for every $e\in A_j$ 
only if $A_j$ is independent with respect to $v_i$, i.e., $v_i(A_j)=|A_j|$.
Therefore, the EFX condition is equivalent to the following condition for matroidal valuations.
\begin{description}
\item[\rm ($\star$)] For any arc $(i,j)$ with positive weight in $G_A$, we have $v_i(A_j)=|A_j|=v_i(A_i)+1$.
\end{description}
Just after Step 1, $A$ is a clean Lorenz dominating allocation, and hence is EFX as shown in Babaioff et al.~\cite[Proposition 5]{babaioff2020fair}.
So ($\star$) is satisfied. We now show that, this condition is preserved whenever $A$ is updated.

Let $i^*\in N$ be an agent who receives an item $e$ in Step 2 (c).
We show that ($\star$) holds for the allocation $A'\coloneqq A^{i^*,e}$.
By Step 2 (b), $G_{A'}$ has no positive-weight path ending at $i^*$.
In particular, $G_{A'}$ has no arc that enters $i^*$ and has weight 1.
Take some arc $(i,j)$ that has a positive weight in $G_{A'}$. Then $j\neq i^*$ by the above argument. 
Hence $(i,j)$ has the same weight as in $G_A$. (This is shown in the proof of Lemma~\ref{lem:complement2}.)
Then $(i,j)$ has a positive weight also in $G_A$ and the condition ($\star$) for $A$ implies $v_i(A_j)=|A_j|=v_i(A_i)+1$. 
Since $j\neq i^*$, we have $A'_j=A_j$. Further, $v_i(A_i)=v_i(A'_i)$ regardless of whether $i=i^*$ or not 
(in case $i=i^*$, it follows from the utilitarian optimality of $A$). Thus, the condition ($\star$) is preserved.
\end{proof}

\end{document}